\documentclass[onecolumn]{IEEEtran}
\usepackage{amsthm, amsmath, amssymb, amsfonts, url, booktabs, tikz, setspace, fancyhdr, bm}
\usepackage{hyperref}
\usepackage{enumerate}
\usepackage{cite}
\usepackage[shortlabels]{enumitem}
\usepackage[babel]{microtype}
\usepackage[english]{babel}
\usepackage[capitalise]{cleveref}
\usepackage{comment}
\usepackage{bbm}
\usepackage{csquotes}
\usepackage{amsthm}
\usepackage{mathabx}
\usepackage{tikz}
\usepackage{graphicx}
\usepackage{float}
\usepackage{amsmath}
\usepackage[linesnumbered,ruled]{algorithm2e}

\newtheorem{theorem}{Theorem}[section]
\newtheorem{prop}[theorem]{Proposition}
\newtheorem{lemma}[theorem]{Lemma}
\newtheorem{cor}[theorem]{Corollary}

\newtheorem{claim}[theorem]{Claim}
\newtheorem{thm}{Theorem}[section]

\newtheorem{constr}{Construction}

\theoremstyle{definition}
\newtheorem{defn}[theorem]{Definition}
\newtheorem{exam}[theorem]{Example}

\newtheorem{rmk}[theorem]{Remark}

\newenvironment{poc}{\begin{proof}[Proof of claim]}{\end{proof}}

\usepackage{todonotes}

\title{Random Reed--Solomon Codes Correcting Permutations, Insertions, and Deletions over Polynomial-Size Alphabets}
\author{Yijun~Zhang, Yubo~Sun, Xiande~Zhang and Gennian~Ge
\thanks{This research was supported by the National Key Research and Development Program of China under Grant 2025YFC3409900, the National Natural Science Foundation of China under Grant 12231014, and Beijing Scholars Program. The research of X. Zhang was supported by the Innovation Program for Quantum Science and Technology under Grant 2021ZD0302902, by NSFC under Grant 12171452 and Grant 12231014, and by the National Key Research and Development Programs of China under Grant 2023YFA1010200 and Grant 2020YFA0713100.}
\thanks{Y. Zhang ({\tt zyjshuxue@mail.ustc.edu.cn}) is with the School of Mathematical Sciences, University of Science and Technology of China, Hefei, Anhui 230026, China.}
\thanks{Yubo Sun ({\tt ybsun@cnu.edu.cn}) is with the Institute of Mathematics and Interdisciplinary Sciences, Xidian University, Xi'an, Shaanxi 710126, China.}
\thanks{X. Zhang ({\tt drzhangx@ustc.edu.cn}) is with the School of Mathematical Sciences, University of Science and Technology of China, Hefei, Anhui 230026, China, and also with the Hefei National Laboratory, University of Science and Technology of China, Hefei, Anhui 230088, China.}
\thanks{G. Ge ({\tt gnge@zju.edu.cn}) is with the School of Mathematical Sciences, Capital Normal University, Beijing 100048, China.}
}

\begin{document}
\maketitle

\begin{abstract}
    We study Reed--Solomon codes against adversarial coordinate permutations followed by insertion-deletion (insdel) errors. It was previously shown by Con (2025) that Reed--Solomon codes can attain the exact half-Singleton bound in this setting, but only over exponentially large alphabets. We prove that, by allowing an additive $\epsilon n$ gap from this bound, the alphabet size can be reduced to polynomial. More precisely, for fixed constants $R,\epsilon\in(0,1)$ satisfying $2R+\epsilon<1$ and $k=Rn$, a random Reed--Solomon code of length $n$ and dimension $k$ over an alphabet of size $n^{O_{R,\epsilon}(1)}$ is, with high probability, robust against arbitrary coordinate permutations followed by up to $(1-\epsilon)n-2k+1$ insdel errors. 

    We also prove a complementary alphabet-size lower bound, showing that positive-rate codes, which are robust against linearly many insdel errors in the permutation-insdel setting, require a polynomially superlinear alphabet.
    
    Finally, for the explicit two-dimensional Reed--Solomon codes constructed by Con et al. (2024) over alphabet size $O(n^3)$, we give an average $O(n)$-time decoder against arbitrary coordinate permutations followed by $n-3$ insdel errors. Previously, an $O(n)$-time decoder for this code was known only for the deletion setting.
\end{abstract}

\begin{IEEEkeywords}
Reed--Solomon code, insertion-deletion code, permutation channel, secret sharing.
\end{IEEEkeywords}

\section{Introduction}

In modern communication and storage systems, error-correcting codes play a fundamental role in ensuring reliable data transmission and storage. Among them, Reed--Solomon (RS) codes have been extensively studied since their introduction~\cite{Reed1960polynomial}. They remain highly relevant in modern distributed storage systems, where Reed--Solomon codes and related erasure codes are used to provide reliability with low storage overhead~\cite{xia2015tale,Rashmi2013solution}. In recent years, motivated in part by the development of DNA storage, insertion-deletion (insdel) errors have attracted considerable attention in both theoretical research and practical applications~\cite{Levenshtein1965binary, Guruswami2021twodeletion, Sun2024twoedit, Brakensiek2018multipledeletion, Haeupler2021synchronization, Cheng2022deterministic}. It is therefore natural to ask how well Reed--Solomon codes perform against insdel errors.

More broadly, this question belongs to the recent effort to understand linear codes in the insdel metric. Recent work has shown that there are nontrivial linear insdel codes and that they can achieve meaningful rate-error tradeoffs~\cite{Abdel2010correcting, Cheng2023efficient, con2022explicit, xie2024new}. At the same time, unlike the Hamming metric, linearity imposes strong limitations in the insdel metric; a central example is the half-Singleton bound for linear insdel codes~\cite{Cheng2023efficient}.

In this direction, Con, Shpilka, and Tamo~\cite{Con2023reed} proved that Reed--Solomon codes over sufficiently large fields can attain the half-Singleton bound against insdel errors. Later, Con, Guo, Li and Zhang~\cite{Con2024random} showed that, by allowing an additive gap from the half-Singleton bound, random Reed--Solomon codes can achieve a much smaller, in fact linear-size alphabet. Other works on Reed--Solomon codes correcting insdel errors include~\cite{Tonien2007construction, DoDuc2021explicit, Liu2021twodimension, Liu2024optimal, Beelen2026reed, Con2024twodimension, Wang2004deletion}. Nevertheless, the precise capabilities of Reed--Solomon codes, and more generally of linear insdel-correcting codes, remain far from fully understood.

Recently, Con~\cite{Con2025anonymous} introduced the permutation-insdel variant of this problem and related it to fully anonymous Shamir secret sharing. In this model, the coordinates of a codeword may first be arbitrarily permuted before insdel errors occur. This captures the reconstruction requirement in fully anonymous secret sharing, where the decoder receives an unordered collection of shares and does not know the identities of the participants holding them.

Recall that in Shamir's secret-sharing scheme~\cite{Shamir1979how}, the secret is encoded as the constant term of a random low-degree polynomial and shares are evaluations of this polynomial. The fully anonymous version~\cite{Harry2024abuse,Bishop2025fully,Con2025anonymous} requires reconstruction without knowing which evaluation points correspond to the received shares. Using the connection with permutation-insdel Reed--Solomon codes, Con~\cite{Con2025anonymous} constructed fully anonymous $(k-1,2k-1,n)$ ramp schemes, but the construction requires exponentially large alphabets in the constant-rate regime.

\begin{table}[t!]
    \centering
    \small
    \begin{tabular}{|c|c|c|c|}
        \hline
        Setting & Alphabet size & Construction & Reference \\ \hline
        Ordinary insdel, exact & 
        $O\!\left(k^5\left(\frac{en}{k-1}\right)^{4k-4}\right)$ & 
        randomized & 
        \cite{Con2023reed} \\ \hline

        Ordinary insdel, $\epsilon$-gap & 
        $n+2^{\textup{poly}(1/\epsilon)}k$ & 
        randomized & 
        \cite{Con2024random} \\ \hline

        Perm.-insdel, exact & 
        $O\!\left(\left(\binom{n}{2k-1}(2k)!\right)^2\right)$ & 
        randomized & 
        \cite{Con2025anonymous} \\ \hline

        Perm.-insdel, exact & 
        $O\!\left(n^{k^2((2k)!)^2}\right)$ & 
        explicit & 
        \cite{Con2023reed,Con2025anonymous} \\ \hline

        Perm.-insdel, $\epsilon$-gap & 
        $n^{O_{R,\epsilon}(1)}$ & 
        randomized & 
        Theorem~\ref{thm:main informal} \\ \hline
    \end{tabular}
    \caption{Comparison of Reed--Solomon codes for ordinary insdel errors and permutation-insdel errors. Here ``exact'' refers to correcting $n-2k+1$ insdel errors, while ``$\epsilon$-gap'' refers to correcting $(1-\epsilon)n-2k+1$ insdel errors.}
    \label{tab:comparison}
\end{table}

In this paper, we show that the alphabet-size tradeoff from the ordinary insdel setting persists in the permutation-insdel setting. For fixed constants $R,\epsilon\in(0,1)$ with $2R+\epsilon<1$ and $k=Rn$, random Reed--Solomon codes over polynomial-size alphabets are, with high probability, robust against arbitrary coordinate permutations followed by $(1-\epsilon)n-2k+1$ insdel errors. Through the connection established in~\cite{Con2025anonymous}, this also yields fully anonymous $(k-1,2k-1+\epsilon n,n)$ ramp secret-sharing schemes over polynomial-size alphabets.

The comparison with previous work is summarized in Table~\ref{tab:comparison}. It highlights the main regime addressed in this paper: permutation-insdel errors with an additive $\epsilon n$ gap from the half-Singleton bound and polynomial-size alphabets.

We complement this achievability result with a general alphabet-size lower bound in the permutation-insdel setting. It shows that any positive-rate code robust against linearly many insdel errors requires a polynomially superlinear alphabet; in particular, constant-rate Reed--Solomon codes in this model cannot have linear-size alphabets. This is in sharp contrast with the ordinary insdel setting, where random Reed--Solomon codes can approach the half-Singleton bound over linear-size alphabets~\cite{Con2024random}. Thus, the permutation-insdel model exhibits an additional alphabet-size barrier.

Finally, we study efficient decoding in the special two-dimensional case. For $k=2$, Con, Shpilka, and Tamo~\cite{Con2023reed} proved an alphabet-size lower bound for Reed--Solomon codes correcting insdel errors, and this bound was later achieved by the explicit construction of~\cite{Con2024twodimension}. For this construction, Singhvi~\cite{Singhvi2026twodimension} gave an optimal linear-time decoder in the deletion-only setting. We extend this algorithmic picture to the permutation-insdel model: for the explicit two-dimensional Reed--Solomon codes of~\cite{Con2024twodimension}, which have alphabet size $O(n^3)$, we construct an average $O(n)$-time decoder against arbitrary coordinate permutations followed by $n-3$ insdel errors. Our decoding algorithm uses the constant-time inverse-ratio subroutine from the deletion decoder of~\cite{Singhvi2026twodimension}. We also mention the related work~\cite{Banerjee2025decoding}, which gives efficient decoding algorithms when $tk=O(n)$, where $t$ is the number of insdel errors; however, this parameter regime does not reach the half-Singleton bound. For general $k$-dimensional Reed--Solomon codes, determining the optimal alphabet size and designing efficient decoding algorithms remain open problems.

\subsection{Our results}

We now state our main results. Let $\boldsymbol{\alpha}=(\alpha_1,\dots,\alpha_n)$ be chosen uniformly at random from the set of all $n$-tuples of distinct elements in $\mathbb{F}_q$. We call the corresponding code $\textup{RS}_{n,k}(\boldsymbol{\alpha})$ a random Reed--Solomon code over $\mathbb{F}_q$.

Recall that the half-Singleton bound for linear insdel codes gives the upper limit $n-2k+1$ on the number of correctable adversarial insdel errors. Thus, the target radius $(1-\epsilon)n-2k+1$ is an additive $\epsilon n$ gap from the half-Singleton bound.

\begin{thm}[Details in Theorem~\ref{thm:main formal}]\label{thm:main informal}
    Fix constants $R,\epsilon\in(0,1)$ such that $2R+\epsilon<1$, and let $k=Rn$. If $\operatorname{char}(\mathbb F_q)>2k-1+\epsilon n$ and $q\ge n^{O_{R,\epsilon}(1)}$, then a random $\mathrm{RS}_{n,k}(\boldsymbol{\alpha})$ code over $\mathbb{F}_q$ is, with probability at least $1-2^{-n+1}$, robust against the $((1-\epsilon)n-2k+1)$-permutation-insdel adversary.
\end{thm}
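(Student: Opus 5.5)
The plan is to reduce robustness to the non-vanishing of certain structured determinants, and then union-bound over all bad configurations. The whole argument works with probability over $\boldsymbol{\alpha}$.

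\textbf{Step 1: reduction to a common subsequence.} If the code fails against the $t$-permutation-insdel adversary with $t=(1-\epsilon)n-2k+1$, then there exist two distinct codewords $f(\boldsymbol\alpha)$, $g(\boldsymbol\alpha)$ and a permutation $\pi$ such that $f(\boldsymbol\alpha)$ and the $\pi$-permuted $g(\boldsymbol\alpha)$ share a common subsequence of length $m:=n-t/2\ge 2k-1+\epsilon n/2$; more precisely, we work with the length $n-t$ plus rounding. As in the ordinary insdel analysis of Con, Guo, Li and Zhang, this yields injective index maps $I=(i_1,\dots,i_m)$ and $J=(j_1,\dots,j_m)$, with $i_s\ne j_s$ for many $s$, such that $f(\alpha_{i_s})=g(\alpha_{j_s})$ for all $s$. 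Because $\pi$ is arbitrary, $J$ need not be increasing. We may assume $I$ is increasing and $J$ is an arbitrary injection, and the pairs with $i_s=j_s$ are handled separately. Writing $h=f-g$ on the fixed points and using linearity, we obtain a homogeneous linear system in the $2k$ coefficients of $(f,g)$, given by an $m\times 2k$ matrix $V_{I,J}(\boldsymbol\alpha)$ whose rows are $(1,\alpha_{i_s},\dots,\alpha_{i_s}^{k-1},-1,-\alpha_{j_s},\dots,-\alpha_{j_s}^{k-1})$. A nonzero solution with $f\ne g$ forces every $2k\times 2k$ minor to vanish. Hence it suffices to show that, with high probability, for every admissible $(I,J)$ some square submatrix of $V_{I,J}$ has nonzero determinant, after quotienting out the trivial solution $f=g$ caused by fixed points.

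\textbf{Step 2: a single configuration.} Following the random RS insdel argument, I would not bound the probability that one determinant vanishes via Schwartz--Zippel on the full $2k$-minor. That bound is only $\deg/q$, and there are about $(n!)^2$ configurations, which would force an exponential $q$. Instead I would reveal $\alpha$'s sequentially and argue that the relevant rank increases at each of $\Omega(\epsilon n)$ steps with failure probability at most $\mathrm{poly}(n)/q$. The idea is to build a chain of rows and show that each newly revealed variable $\alpha_{i}$ appears in a fresh row whose determinant, viewed as a polynomial in $\alpha_i$, is nonzero of degree $O(k)$. The characteristic condition $\operatorname{char}(\mathbb F_q)>2k-1+\epsilon n$ is used exactly here: it ensures that the leading coefficients (products of integer differences of exponents, i.e.\ Vandermonde-type constants) are nonzero. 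This gives a bound of the form $\Pr[\text{config bad}]\le (Cn/q)^{c\epsilon n}$.

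\textbf{Step 3: union bound.} The number of configurations $(I,J)$ is at most $\binom{n}{m}^2 m!\le n^{2n}$. Choosing $q=n^{C'/\epsilon}$ with $c\epsilon n\log(q/Cn)\ge 3n\log n+n$ makes the total failure probability at most $2^{-n+1}$.

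\textbf{Main obstacle.} The key difficulty is Step 2. I need to extract $\Omega(\epsilon n)$ ``independent'' rank-increase events for an arbitrary injection $J$, in which $\alpha_{j_s}$ may coincide with some earlier $\alpha_{i_{s'}}$ and produce cycles. I would handle this by decomposing the bipartite matching $i_s\mapsto j_s$ into paths and cycles (the permutation structure), discarding the at most $2k$ coordinates that are needed to pin down $(f,g)$, and then showing that each extra $\alpha$ on a path or cycle contributes a nontrivial univariate constraint. Cycles of length one are fixed points, and these force $f=g$ only if there are at least $k$ of them, in which case distinctness gives the contradiction directly.
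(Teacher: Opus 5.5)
Your Step 2 fails exactly at the obstacle you flag, and your patch for it (fixed points, handled deterministically once there are at least $k$ of them) does not cover the problem.

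View $i_s\mapsto j_s$ as a partial injection on $[n]$. It splits into paths and $t$ disjoint cycles; the cycles are the paper's minimal equal sub-pairs. Merge the two constant columns and subtract the $I$-block columns from the $J$-block columns. Then the rows of each cycle sum to zero in the last $k-1$ columns. In fact the symbolic $\ell\times(2k-1)$ matrix, with $\ell=2k-1+\epsilon n$, has full column rank if and only if $t\le \ell-k+1$.

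This has two consequences for your plan:
\begin{itemize}
\item For $t\ge \ell-k+2$ the matrix is singular as a polynomial matrix. There is then no full-rank event to certify by revealing the $\alpha_i$ one at a time.
\item For $t$ near $\ell-k+1$ the full rank is fragile: removing the rows of a single faulty variable can already destroy it. So your $\Omega(\epsilon n)$ independent failure events are not available there either.
\end{itemize}
Fixed points do not rescue this. Take $\epsilon n+3$ fixed points together with $k-2$ transpositions. This fills $\ell$ rows with $t=\ell-k+2$ cycles, so the matrix is symbolically rank-deficient. Yet there are fewer than $k$ fixed points whenever $\epsilon n+3<k$.

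What is missing is a second regime with a different target. When there are at least $k+\rho n$ cycles, you must show not full rank but that every kernel vector has $f=g$. Summing $f(\alpha_{i_s})=g(\alpha_{j_s})$ over a cycle $C$ gives $|C|(f_0-g_0)+\sum_{d\ge 1}\bigl(\sum_{a\in C}\alpha_a^d\bigr)(f_d-g_d)=0$, a constraint on $f-g$ alone. These power-sum constraints form the $\mathbf{A}$-matrix, which is robustly full rank because the cycles are disjoint.

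This is where $\operatorname{char}(\mathbb F_q)>2k-1+\epsilon n$ genuinely enters: it makes $|C|\neq 0$. It is not about Vandermonde exponent constants, as you suggest. This regime needs its own faulty-index bound; after relabeling, the faulty sequences are increasing, so there are only $\binom{n}{\rho n}$ of them. It also needs its own union bound over roughly $2^n t^{\ell+1}$ collections of disjoint sets.

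In the complementary regime $t\le k+\rho n$, the threshold guarantees that deleting the rows of up to $(\epsilon-\rho)n/2$ faulty variables keeps full column rank. That is what actually produces the exponent in your Step 3. The parameter $\rho$ is then chosen to balance the two union bounds, which fixes the polynomial exponent $O_{R,\epsilon}(1)$.

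Two smaller corrections to Step 1:
\begin{itemize}
\item The relevant common-subsequence length is $n-((1-\epsilon)n-2k+1)=2k-1+\epsilon n$, not $n-t/2$.
\item Your $m\times 2k$ matrix always has the kernel vector corresponding to $f=g$ constant. So ``every $2k\times 2k$ minor vanishes'' carries no information until the constant columns are merged and the harmless kernel $\{f=g\}$ is identified.
\end{itemize}
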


For constant rate $R$, Theorem~\ref{thm:main informal} shows that allowing an additive $\epsilon n$ gap from the half-Singleton bound reduces the required alphabet size in the permutation-insdel setting from exponential to polynomial. Through the connection established in~\cite{Con2025anonymous}, it also yields fully anonymous $(k-1,2k-1+\epsilon n,n)$ ramp secret-sharing schemes over polynomial-size alphabets.

We also prove a complementary alphabet-size lower bound in the permutation-insdel setting. Unlike the achievability result above, this lower bound applies to general $q$-ary codes, not necessarily linear. The key observation is that, once arbitrary coordinate permutations are allowed, a codeword is naturally represented by its histogram, namely the multiplicities of its symbols. Under this representation, the histograms of exact deletion outputs are points in a discrete simplex, and pairwise disjointness of the corresponding deletion balls yields the following lower bound.

\begin{thm}\label{thm:alphabet lower informal}
    Let $R,\delta\in(0,1)$ be constants satisfying $R+\delta< 1$. Let $\mathcal{C}\subseteq \Sigma^n$ be a $q$-ary code with $|\mathcal{C}|=q^k$ and $k=Rn$. Suppose that $\mathcal{C}$ is robust against the $t$-permutation-insdel adversary, where $t=\delta n$. Then, for all sufficiently large $n$, the alphabet size satisfies
    $$
    q\ge C_{R,\delta}\, n^{\frac{1-\delta}{1-\delta-R}},
    $$
    where $C_{R,\delta}>0$ depends only on $R$ and $\delta$.
\end{thm}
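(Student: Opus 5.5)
The plan is a sphere-packing argument on histograms. Since the adversary may permute coordinates arbitrarily, a received word carries no more information than its histogram, i.e.\ the vector $h(\boldsymbol{x})=(h_a)_{a\in\Sigma}\in\mathbb{Z}_{\ge 0}^{q}$ with $\sum_a h_a=n$, where $h_a$ counts the occurrences of $a$. First I will check that robustness against $t$ permutation-insdel errors implies that no two distinct codewords may share a histogram. This is immediate: equal histograms mean one codeword is a permutation of the other, so zero insdel errors already make them indistinguishable.

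The main step uses deletions. Set $m=n-t$. Deleting $t$ symbols from $\boldsymbol{c}$, and then permuting, can produce any word whose histogram $g$ satisfies $0\le g\le h(\boldsymbol{c})$ coordinatewise and $\sum_a g_a=m$. Write $D(\boldsymbol{c})$ for the set of such $g$. Robustness forces the sets $D(\boldsymbol{c})$, $\boldsymbol{c}\in\mathcal{C}$, to be pairwise disjoint. Otherwise two codewords could be turned, by a permutation followed by $t$ deletions, into words with the same histogram, and hence (after a suitable permutation) into the same word.

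All these sets live in the discrete simplex $\Delta_m=\{g\in\mathbb{Z}_{\ge0}^q:\sum_a g_a=m\}$, so
\[
\sum_{\boldsymbol{c}\in\mathcal{C}}|D(\boldsymbol{c})|\le \binom{m+q-1}{q-1}.
\]
The remaining work is to lower bound $|D(\boldsymbol{c})|$ for most codewords, and I expect this to be the main obstacle. The count $|D(\boldsymbol{c})|$ is the number of ways to choose $g\le h$ with total $m$. It is large when $h$ has many nonzero entries, and small (possibly $1$) when $h$ is concentrated on few symbols.

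To handle this I will do the following.
\begin{itemize}
\item Discard the codewords whose histogram has support smaller than $s$, for a suitable $s$. There are at most $\binom{q}{s}\binom{n+s-1}{s-1}$ such histograms, each realized by at most one codeword. This quantity is $\ll q^{k}/2$ provided $s$ is a small constant fraction of $k$ and $q$ is polynomial in $n$.
\item For a histogram with support $\ge s$, lower bound $|D(\boldsymbol{c})|$ by restricting attention to $s$ support symbols. On those symbols we may delete freely, and we keep the other symbols entirely only where this is consistent with the total. A cleaner alternative is to count the subsets of deletion patterns that remove $t$ elements, viewing $D$ as the set of sub-multisets of size $m$ of a multiset of size $n$ with at least $s$ distinct values. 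This gives at least roughly $\binom{s+t'}{t'}$-type counts with $t'=\min(t,\cdot)$.
\end{itemize}

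Combining the two steps gives $q^{k}\cdot\binom{\Theta(n)}{t}$-type quantities $\lesssim\binom{m+q-1}{m}$. Taking logarithms with $q=n^{\gamma}$ and comparing exponents, I will solve for $\gamma$. The target exponent $\frac{1-\delta}{1-\delta-R}$ suggests the following bookkeeping. In the regime $q\gg m$ we have $\binom{m+q-1}{m}\approx (eq/m)^{m}$, and this must exceed $q^{Rn}\cdot(\text{poly-size ball})$. That gives $(1-\delta)n\log(q/m)\gtrsim Rn\log q+(\text{ball term})$. The ball term must contribute about $(1-\delta)n\log n$-scale savings so that the inequality rearranges to $(1-\delta-R)\log q\ge(1-\delta)\log n-O(1)$. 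This is exactly the claimed bound.

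So the crux is showing $\log|D(\boldsymbol{c})|$ is large enough, or, equivalently, choosing the alternative inequality $\binom{m+q-1}{m}\ge|\mathcal{C}|$ combined with the factor $m^{-m}$ coming from $\binom{m+q-1}{m}\le (e(m+q)/m)^m$. That second route actually suffices directly: $q^{Rn}\le (e(q+m)/m)^{(1-\delta)n}$. If $q\le m$ this forces $R\le (1-\delta)\log(2e)/\log q$, which is impossible for large $n$. Otherwise it forces $q^{R}\le(2eq/m)^{1-\delta}$, i.e.\ $q^{1-\delta-R}\ge (m/2e)^{1-\delta}$. This yields $q\ge C_{R,\delta}n^{\frac{1-\delta}{1-\delta-R}}$. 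I therefore plan to present this clean version, where the only needed ingredient is the disjointness of the $D(\boldsymbol{c})$ and $|D(\boldsymbol{c})|\ge1$. The support-discarding refinement is kept only as a fallback if the bookkeeping fails.
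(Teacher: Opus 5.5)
Your final ``clean version'' is essentially the paper's argument. You use pairwise disjointness of the exact $t$-deletion balls in the histogram simplex $\Delta_{n-t,q}$, only the trivial bound $|D(\mathbf{c})|\ge 1$, and then $q^{Rn}\le\binom{m+q-1}{m}\le\bigl(e(q+m)/m\bigr)^m$ with $m=(1-\delta)n$. Your case $q>m$ is correct and tidier than the paper's detour through $q=\Omega(n)$. It gives $q\ge\bigl((1-\delta)n/(2e)\bigr)^{\frac{1-\delta}{1-\delta-R}}$ for every $n$, with an explicit constant.

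\textbf{The gap is the case $q\le m$.} There your estimate gives $q^{Rn}\le(2e)^{(1-\delta)n}$, which only says $q\le K:=(2e)^{(1-\delta)/R}$. This is not a contradiction, so the assertion ``impossible for large $n$'' is unjustified.

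\begin{itemize}
\item Concretely, for $q=2$ your inequality $R\le(1-\delta)\log(2e)/\log q$ reads $R\le(1-\delta)\log_2(2e)$. It holds for every admissible pair $(R,\delta)$, since $R<1-\delta$ and $\log_2(2e)>1$.
\item The bound $\bigl(e(q+m)/m\bigr)^m$ is too weak for bounded $q$, because it ignores that $\binom{m+q-1}{m}$ is then only polynomial in $n$.
\item The case cannot be skipped. Any $q\le m=(1-\delta)n$ would violate the claimed bound for large $n$, because the exponent $\frac{1-\delta}{1-\delta-R}$ exceeds $1$.
\end{itemize}

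The fix is exactly the paper's first step (``$q$ cannot be bounded by a constant''). For $2\le q\le K$, use
\[
\binom{m+q-1}{q-1}\le(m+q)^{q-1}\le(m+K)^{K},
\]
which is polynomial in $n$, whereas $q^{Rn}\ge 2^{Rn}$. So $q\le m$ is impossible once $n\ge N(R,\delta)$. This step needs $q\ge 2$, which the paper also assumes tacitly. For $q=1$ the one-word code is trivially robust and the statement fails.
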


Finally, for the explicit two-dimensional Reed--Solomon codes of~\cite{Con2024twodimension} over alphabet size $O(n^3)$, we give an average linear-time decoder in the permutation-insdel model. The decoder uses the constant-time inverse-ratio subroutine from the deletion decoder of~\cite{Singhvi2026twodimension}. Our algorithm follows a list-decoding-type approach: we apply the inverse-ratio subroutine to carefully selected length-$3$ subsequences of the received word, producing a list of candidate affine polynomials. A local majority-voting step keeps this list constant in size, and then a hash-table-based overlap test identifies the transmitted codeword in average $O(n)$ time.

\begin{thm}\label{thm:2dim decode informal}
    Let $\mathcal{C}=\textup{RS}_{n,2}(\boldsymbol{\alpha})$ be the code over
    $\mathbb{F}_{q^3}$ specified in Construction~\ref{constr:2dim}. Then there exists a decoder that recovers the transmitted codeword from any received word obtained by applying an arbitrary coordinate permutation followed by at most $n-3$ insdel errors in average $O(n)$ time.
\end{thm}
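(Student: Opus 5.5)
The plan is to reduce decoding to a constant-size list of candidate affine polynomials, each verified in $O(n)$ time with a hash set. Write the transmitted codeword as $(f(\alpha_1),\dots,f(\alpha_n))$ with $f(x)=a+bx$, and let $\mathbf y$ be the received word of length $m$. Suppose the channel makes $d$ deletions and $s$ insertions with $d+s\le n-3$. Since the coordinates were permuted first, only the multiset of received symbols matters. It contains $g=n-d$ \emph{genuine} symbols, namely distinct evaluations of $f$, and at most $s\le g-3$ inserted ones. In particular $g\ge s+3$ and $g>m/2$.

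The degenerate case $b=0$ is handled first. Here every genuine symbol equals $a$, and $a$ occurs more than $m/2$ times. A linear-time majority vote (Boyer--Moore, or counting with a hash table) finds this symbol. We accept $a$ if it occurs at least $\lceil (m+3)/2\rceil$ times. No nonconstant $f$ can produce such a repetition count, because then all genuine symbols are distinct, so any single symbol occurs at most $1+s\le g-2$ times. From now on assume $b\neq 0$, so the genuine symbols are pairwise distinct.

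The core uses the property of Construction~\ref{constr:2dim} on which the inverse-ratio subroutine of~\cite{Singhvi2026twodimension} rests. For distinct indices $i,j,l$, the ratio $(\alpha_i-\alpha_j)/(\alpha_j-\alpha_l)$ determines $(i,j,l)$, and the subroutine recovers it in $O(1)$ time or reports failure. If $y_1,y_2,y_3$ are genuine, then $(y_1-y_2)/(y_2-y_3)=(\alpha_i-\alpha_j)/(\alpha_j-\alpha_l)$. The subroutine therefore returns $(i,j,l)$, and $(a,b)$ follows by solving $y_1=a+b\alpha_i$ and $y_2=a+b\alpha_i$'s neighbour equation $y_2=a+b\alpha_j$.

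\begin{enumerate}[(1)]
\item \emph{Candidate generation.} Partition $\mathbf y$ into consecutive blocks of length $5$ (the last block may be shorter). For each block, apply the subroutine to all $\binom{5}{3}$ ordered-by-position triples, giving at most $10$ candidates per block, deduplicated within the block. A block is \emph{good} if it contains at least $3$ genuine symbols; every good block produces $f$. If only $G$ blocks were good, then $g\le 5G+2(m/5+1)$. Combining this with $g\ge (m+3)/2$ gives $G\ge c\,m$ for an absolute constant $c>0$ once $m$ exceeds a constant. Small $m$ is handled by brute force over all triples.
\item \emph{Local majority voting.} Count the block-frequency of each candidate in a hash table; this costs expected $O(m)$ time. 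At most $10(m/5+1)$ votes are cast in total, so at most $O(1/c)=O(1)$ candidates reach frequency $cm$, and $f$ is among them. Keep exactly these.
\item \emph{Overlap test.} Store the distinct received symbols in a hash set. For each surviving candidate $p$, count the indices $i$ with $p(\alpha_i)$ in the set, in expected $O(n)$ time. The true $f$ scores at least $g$. Any $p\neq f$ agrees with $f$ on at most one $\alpha_i$, so its score is at most $s+1\le g-2$. Output the maximiser.
\end{enumerate}

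Since $m\le 2n$, the total expected running time is $O(n)$. The word ``average'' in Theorem~\ref{thm:2dim decode informal} refers only to the expected cost of the hash-table operations.

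The main obstacle is step (1)--(2). The bound on the number of good blocks must be linear in $m$ even in the extremal regime $s=g-3$, where genuine symbols are barely a majority and the adversary can order the word freely. The list must also stay constant-size, although false candidates can collect votes from blocks dominated by inserted symbols. My first attempt is the counting argument above with blocks of length $5$; the constant can be tuned by choosing the block length. I also need to verify that a failed or spurious subroutine output, for instance on a triple with a repeated value, is safely discarded. Even if a spurious output is not discarded, it is only an extra candidate, and the overlap test in step (3) removes it.
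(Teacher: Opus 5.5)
Your architecture matches the paper's: length-$5$ blocks, the inverse-ratio seed on local triples, a vote threshold linear in $m$ that leaves $O(1)$ candidates, and a hash-based overlap test. The gap is in step (3): its justification is wrong, and the bound it asserts is false.

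\textbf{The gap in step (3).} The fact that two distinct affine polynomials agree on at most one $\alpha_i$ only controls same-index coincidences $p(\alpha_i)=f(\alpha_i)$. Your score counts every $i$ with $p(\alpha_i)$ anywhere in the received set. After the permutation, the coincidences that matter are cross ones, $p(\alpha_i)=f(\alpha_j)$ with $i\neq j$, and the degree argument says nothing about them. For instance, on an arithmetic-progression evaluation set, a shift $p(x)=f(x+\delta)$ would hit almost every genuine symbol.

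Concretely, take surviving indices $j_1\neq j_2$ and set $p(x)=f(\alpha_{j_1}+\alpha_{j_2}-x)$. Then $p\neq f$ (the characteristic is odd), and $p$ matches two genuine symbols. The adversary can make all $s$ inserted symbols further values of $p$, so $p$ scores $s+2$, not at most $s+1$.

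\textbf{The missing ingredient} is the permutation robustness of Construction~\ref{constr:2dim}, which the paper uses for uniqueness. Suppose three distinct $i_1,i_2,i_3$ satisfied $p(\alpha_{i_r})=f(\alpha_{j_r})$. The $j_r$ are distinct because $p$ is nonconstant. Then $(\mathbf{c}_p)_I=(\mathbf{c}_f)_J$ with $I=(i_1,i_2,i_3)$ and $J=(j_1,j_2,j_3)$, contradicting robustness against the $(n-3)$-permutation-insdel adversary. Hence at most two indices match genuine symbols and at most $s$ match inserted ones. The score of any nonconstant $p\neq f$ is therefore at most $s+2\le g-1<g$, and with this correction your maximiser rule does return $f$.

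\textbf{Two smaller repairs.}
\begin{enumerate}
\item In step (1) you feed the seed only position-ordered triples. You therefore need the inversion to work for every order of the underlying codeword indices, whereas the subroutine is quoted for $\kappa_1<\kappa_2<\kappa_3$. Your injectivity claim is true: it follows from the same robustness, or directly from $\alpha_i-\alpha_j=(a_i-a_j)(1+(a_i+a_j)\gamma)$, which never uses the order. But it must be argued. Alternatively, try all $3!$ orderings, as the paper does.
\item Your fallback that ``the overlap test removes any spurious candidate'' fails for a constant candidate. Such a candidate scores $n$ as soon as its value occurs in $\mathbf{y}$. Constant outputs must be discarded explicitly in the nonconstant branch.
\end{enumerate}

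With these fixes, your good-block count ($G\ge m/50-1/10$) is an acceptable substitute for the paper's genuine-triple count $\sum_j\binom{g_j}{3}\ge g-2b$. Your separate majority test for constant $f$ likewise replaces the paper's convention $\mathsf{Seed}(a,a,a)=a$.
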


\subsection{Organization}
In Section~\ref{sec:preliminaries}, we introduce the necessary definitions and recall the relevant frameworks from~\cite{Con2025anonymous,Con2024random}. In Sections~\ref{section:V matrix} and~\ref{section:A matrix}, we analyze the regimes of $\mathbf{V}$-matrix and $\mathbf{A}$-matrix, respectively. In Section~\ref{sec:put it together}, we combine the two estimates and prove Theorem~\ref{thm:main informal}. In Section~\ref{sec:alphabet lower bound}, we prove the lower bound of the alphabet size. Finally, in Section~\ref{sec:decoding}, after recalling the two-dimensional construction and the deletion decoder of~\cite{Singhvi2026twodimension}, we give an average linear-time decoder for the permutation-insdel model.

\section{Preliminaries}\label{sec:preliminaries}

\subsection{Notation and Reed--Solomon codes}

For a positive integer $n$, define $[n]=\{1,2,\dots,n\}$. For integers $a\le b$, define $[a,b]=\{a,a+1,\dots,b\}$. Let $\mathbf{c}=(c_1,c_2,\dots,c_n)$ be a sequence. For any integer $1\le \ell\le n$ and any sequence $I=(i_1,i_2,\dots,i_{\ell})\in [n]^{\ell}$, define $\mathbf{c}_{I}:=(c_{i_1},c_{i_2},\dots,c_{i_{\ell}}).$ Similarly, for the index sequence $I\in [n]^{\ell}$ and any $A=\{a_1,\dots,a_{\ell '}\}\subset [\ell]$ satisfying $a_1<\dots<a_{\ell '}$, define $I_A:=(i_{a_1},\dots,i_{a_{\ell '}})$ and $\mathbf{c}_{I_A}:=(c_{i_{a_1}},c_{i_{a_2}},\dots,c_{i_{a_{\ell '}}})$.

A linear code over a finite field $\mathbb{F}_q$ is a linear subspace $\mathcal{C}\subseteq \mathbb{F}_q^n$. The rate of a linear code $\mathcal{C}$ of block length $n$ and dimension $k=\textup{dim}(\mathcal{C})$ is $R=k/n$.

\begin{defn}[Reed--Solomon code]
    Let $\alpha_1,\alpha_2,\dots,\alpha_n$ be distinct elements of $\mathbb{F}_q$. For $k<n$, the $[n,k]_q$ Reed--Solomon code of dimension $k$ and block length $n$ associated with the evaluation vector $\boldsymbol{\alpha}=(\alpha_1,\alpha_2,\dots,\alpha_n)\in \mathbb{F}_q^n$ is defined as
    $$
    \textup{RS}_{n,k}(\boldsymbol{\alpha})
    :=
    \{(f(\alpha_1),f(\alpha_2),\dots,f(\alpha_n)):
    f\in\mathbb{F}_q[x],\ \textup{deg}(f)<k\}.
    $$
\end{defn}

That is, a codeword of an $[n,k]_q$ Reed--Solomon code is the evaluation vector of a polynomial of degree less than $k$ at $n$ prescribed distinct points of the finite field. For any $f\in\mathbb{F}_q[x]$, we use the shorthand $f(\boldsymbol{\alpha})=(f(\alpha_1),\dots,f(\alpha_n))$.

For later use, we introduce notation for sequentially assigning values to the variables of a symbolic matrix. For simplicity, throughout the paper, we write $\mathbf{X}=(X_1,\ldots,X_n)$ for the tuple of indeterminates. Accordingly, we denote the polynomial ring $\mathbb{F}_q[X_1,\ldots,X_n]$ simply by $\mathbb{F}_q[\mathbf{X}]$. 

\begin{defn}[Partial assignment]
    Let $\mathbf{M}$ be a matrix over $\mathbb{F}_q[\mathbf{X}]$. For $i\in[n]$ and $\alpha_1,\dots,\alpha_i\in\mathbb{F}_q$, denote by
    $$
    \mathbf{M}|_{X_1=\alpha_1,\dots,X_i=\alpha_i}
    $$
    the matrix obtained from $\mathbf{M}$ by substituting $\alpha_j$ for $X_j$ for every $j\in[i]$.
\end{defn}

We now state the Schwarz--Zippel--DeMillo--Lipton Lemma (see, e.g., \cite{Demillo1977probabilistic, Zippel1979probabilistic, Schwartz1980fast}). Here, for a random event $E$, $\textup{Pr}(E)$ denotes the probability that $E$ occurs.

\begin{lemma}[Schwarz--Zippel--DeMillo--Lipton Lemma]\label{lemma:Schwarz-Zippel}
    Let $Q(\mathbf{X})\in \mathbb{F}_q[\mathbf{X}]$ be a nonzero polynomial of total degree $d$, and let $T\subseteq \mathbb{F}_q$. If $\alpha_1,\dots,\alpha_n$ are chosen independently and uniformly at random from $T$, then
    $$
    \textup{Pr}[Q(\alpha_1,\dots,\alpha_n)=0]\le \frac{d}{|T|}.
    $$
\end{lemma}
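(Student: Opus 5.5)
We prove the Schwarz--Zippel--DeMillo--Lipton Lemma (Lemma~\ref{lemma:Schwarz-Zippel}) by induction on the number of variables $n$, following the classical argument. If $d=0$, then $Q$ is a nonzero constant and the probability is $0$. We may also assume $d<|T|$, since otherwise the bound is trivial.

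For $n=1$, a nonzero univariate polynomial of degree $d$ over the field $\mathbb{F}_q$ has at most $d$ roots. Hence at most $d$ elements of $T$ are zeros of $Q$, and the probability is at most $d/|T|$.

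For the inductive step, we single out the last variable and write
$$
Q(\mathbf{X})=\sum_{j=0}^{e} X_n^{j}\,Q_j(X_1,\dots,X_{n-1}),
$$
where $e$ is the largest exponent of $X_n$ with $Q_e\neq 0$. Then $Q_e$ is a nonzero polynomial in $n-1$ variables of total degree at most $d-e$. Let $E_1$ be the event $Q_e(\alpha_1,\dots,\alpha_{n-1})=0$. By the induction hypothesis, $\Pr[E_1]\le (d-e)/|T|$.

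Now condition on any fixed values of $\alpha_1,\dots,\alpha_{n-1}$ for which $E_1$ fails. Then $Q(\alpha_1,\dots,\alpha_{n-1},X_n)$ is a nonzero univariate polynomial in $X_n$ of degree exactly $e$. Since $\alpha_n$ is independent of the other coordinates and uniform on $T$, the base case shows that this polynomial vanishes at $\alpha_n$ with probability at most $e/|T|$. Averaging over all such conditionings gives $\Pr[Q=0\wedge \neg E_1]\le e/|T|$.

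Combining the two cases,
$$
\Pr[Q=0]\le \Pr[E_1]+\Pr[Q=0\wedge\neg E_1]\le \frac{d-e}{|T|}+\frac{e}{|T|}=\frac{d}{|T|}.
$$

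The only point needing care is the conditioning step. It requires independence of the coordinates, so that $\alpha_n$ remains uniform on $T$ after fixing the first $n-1$ coordinates. It also requires that the leading coefficient $Q_e$ has total degree at most $d-e$, which holds because every monomial of $X_n^eQ_e$ has total degree at most $d$.
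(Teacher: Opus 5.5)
Your proof is correct. It is the classical induction on the number of variables via the leading coefficient $Q_e$ in $X_n$. The paper gives no proof of its own and simply cites the lemma from DeMillo--Lipton, Zippel, and Schwartz, where essentially this argument appears. Your point that the conditioning step needs independence is apt: that is exactly what fails for sampling without replacement, which is why the paper separately invokes Lemma~\ref{lemma:Schwarz-Zippel variation}, with its bound $nd/(|T|-n+1)$ stated in terms of individual degrees.
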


We also need the following variation of the Schwarz--Zippel--DeMillo--Lipton Lemma from~\cite{Con2024random}, where the evaluation points are sampled without replacement. Here, $\textup{deg}_{X_i}(Q)$ denotes the degree of polynomial $Q$ in the variable $X_i$, where the other variables are viewed as constants.

\begin{lemma}\cite{Con2024random}\label{lemma:Schwarz-Zippel variation}
   Let $Q(\mathbf{X})\in \mathbb{F}_q[\mathbf{X}]$ be a nonzero polynomial such that $\textup{deg}_{X_i}(Q)\le d$ for every $i\in[n]$. Let $T\subseteq \mathbb{F}_q$ be a set of size at least $n$, and let $\boldsymbol{\alpha}=(\alpha_1,\dots,\alpha_n)$ be chosen uniformly from the set of $n$-tuples of distinct elements in $T$. Then
   $$
   \textup{Pr}[Q(\alpha_1,\dots,\alpha_n)=0]\le \frac{nd}{|T|-n+1}.
   $$
\end{lemma}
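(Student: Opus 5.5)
We reveal the random evaluation points one at a time and apply a union bound over the step at which the polynomial first vanishes. The only property of the without-replacement distribution we use is this: conditioned on $\alpha_1,\dots,\alpha_{i-1}$, the point $\alpha_i$ is uniform on $T\setminus\{\alpha_1,\dots,\alpha_{i-1}\}$, a set of size $|T|-i+1\ge |T|-n+1$. The individual-degree hypothesis $\deg_{X_i}(Q)\le d$ is what makes each step cost at most $d$ bad values, in place of the total degree used in Lemma~\ref{lemma:Schwarz-Zippel}.

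For $0\le i\le n$ set $Q_i:=Q|_{X_1=\alpha_1,\dots,X_i=\alpha_i}\in\mathbb{F}_q[X_{i+1},\dots,X_n]$. Then $Q_0=Q\neq 0$, and $Q_n=Q(\alpha_1,\dots,\alpha_n)$ is a constant. Let $E_i$ be the event that $Q_{i-1}\neq 0$ but $Q_i=0$. If $Q(\boldsymbol{\alpha})=0$, there is a first index $i$ with $Q_i=0$, so $E_i$ occurs for that $i$. Hence
$$\Pr[Q(\boldsymbol{\alpha})=0]\le\sum_{i=1}^n \Pr[E_i].$$

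The key step is to bound $\Pr[E_i]$ for a fixed $i$, conditioning on any values of $\alpha_1,\dots,\alpha_{i-1}$ for which $Q_{i-1}\neq0$. Expand
$$Q_{i-1}=\sum_{\mathbf{m}} g_{\mathbf{m}}(X_i)\,\mathbf{X}_{>i}^{\mathbf{m}},$$
where $\mathbf{m}$ ranges over monomials $\mathbf{X}_{>i}^{\mathbf{m}}$ in the variables $X_{i+1},\dots,X_n$ and each $g_{\mathbf{m}}\in\mathbb{F}_q[X_i]$. Substitution does not increase degree in $X_i$, so $\deg g_{\mathbf{m}}\le\deg_{X_i}(Q)\le d$. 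Since $Q_{i-1}\neq0$, some $g_{\mathbf{m}_0}$ is a nonzero univariate polynomial of degree at most $d$, and so it has at most $d$ roots in $\mathbb{F}_q$. Moreover, $Q_i=0$ forces $g_{\mathbf{m}_0}(\alpha_i)=0$. Because $\alpha_i$ is uniform on a set of size at least $|T|-n+1$, the conditional probability of this is at most $d/(|T|-n+1)$. Averaging over the conditioning gives $\Pr[E_i]\le d/(|T|-n+1)$.

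Summing over $i\in[n]$ yields the claimed bound $nd/(|T|-n+1)$. The only delicate point is the conditioning: the support of $\alpha_i$ depends on the earlier choices. The argument handles this because the bound $d/(|T|-i+1)$ holds uniformly over every admissible history with $Q_{i-1}\neq 0$, and $|T|\ge n$ keeps the denominator positive.
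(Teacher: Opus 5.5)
Your proof is correct. The paper gives no proof of this lemma; it quotes it from \cite{Con2024random}, so there is no in-paper argument to compare against line by line.

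Your argument is the $1\times 1$ instance of the paper's own faulty-index framework. Your event $E_i$ is exactly the event that $i$ is a faulty index of the matrix $(Q)$ in the sense of Definition~\ref{def:faulty index}. Your per-step estimate is the same computation the paper carries out in the proof of Lemma~\ref{lemma:probability V calculation fixed faulty}: condition on $\mathbf{X}_{<i}$, expand in monomials of the later variables with coefficients in $\mathbb{F}_q[X_i]$, pick a nonzero coefficient of degree at most $d$, and use that $\alpha_i$ is uniform on the at least $|T|-n+1$ remaining elements. That is also the one-variable form in which the paper actually invokes this lemma there.

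The usual route to Schwartz--Zippel-type bounds is induction on the number of variables: peel off one variable via a nonzero coefficient and apply the hypothesis to the rest. Your unrolled union bound over the first index at which the partial substitution vanishes identically has two small advantages. It only needs the conditional law of $\alpha_i$ given $\alpha_1,\dots,\alpha_{i-1}$, so you never have to check that the marginal of a prefix is again uniform over distinct tuples. It also gives the slightly sharper bound $\sum_{i=1}^{n} d/(|T|-i+1)$.
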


\subsection{Correcting Insertions, Deletions, and Permutations}

We use the standard notions of subsequences, longest common subsequences, and insertion-deletion edit distance. Let $\mathbf{c}$ be a word over an alphabet $\Sigma$. A deletion removes one symbol from $\mathbf{c}$, while an insertion adds one symbol from $\Sigma$ into an arbitrary position of $\mathbf{c}$. A subsequence of $\mathbf{c}$ is obtained by deleting zero or more symbols. For two words $\mathbf{c}$ and $\mathbf{c}'$, let $\mathrm{LCS}(\mathbf{c},\mathbf{c}')$ denote the length of their longest common subsequence, and let $\mathrm{ED}(\mathbf{c},\mathbf{c}')$ denote the minimum number of insdel errors needed to transform $\mathbf{c}$ into $\mathbf{c}'$. Recall that 
$$
\mathrm{ED}(\mathbf{c},\mathbf{c}')=|\mathbf{c}|+|\mathbf{c}'|-2\mathrm{LCS}(\mathbf{c},\mathbf{c}').
$$
Thus, a code $\mathcal{C}\subseteq\Sigma^n$ corrects $t$ insdel errors if and only if
$\mathrm{LCS}(\mathbf{c},\mathbf{c}')<n-t$
for every pair of distinct codewords $\mathbf{c},\mathbf{c}'\in\mathcal{C}$.

We now turn to the permutation-insdel setting.

\begin{defn}
    A $t$-permutation-insdel adversary is an adversary that, given a codeword $\mathbf{c}=(c_1,\dots,c_n)\in\mathcal{C}\subseteq\Sigma^n$, first chooses an arbitrary permutation $\pi$ of $[n]$ and replaces $\mathbf{c}$ by $\pi(\mathbf{c})=(c_{\pi(1)},\dots,c_{\pi(n)})$, and then performs at most $t$ insdel errors.
\end{defn}

A sequence $I=(i_1,\dots,i_{\ell})\in[n]^\ell$ is called a \emph{distinct-element sequence} if its entries are pairwise distinct. We occasionally identify $I$ with its underlying set $\{i_1,\dots,i_{\ell}\}$ whenever no confusion can arise. 

As in the ordinary insdel setting, robustness can be described by excluding common deletion outputs. The difference is that, after an arbitrary coordinate permutation, the relevant common outputs are indexed not by increasing subsequences, but by arbitrary distinct-element sequences.

\begin{defn}\label{def:equivalent perm insdel}
    A code $\mathcal{C}$ is said to be \emph{robust against the $t$-permutation-insdel adversary} if this adversary cannot produce the same output from two distinct codewords.

    Equivalently, $\mathcal{C}$ is robust against the $t$-permutation-insdel adversary if and only if, for every pair of distinct codewords $\mathbf{c},\mathbf{c}'\in\mathcal{C}$ and every pair of distinct-element sequences $I,J\in[n]^{n-t}$, we have $\mathbf{c}_I\neq \mathbf{c}'_J$.
\end{defn}

\subsection{Con's algebraic rank framework}
We recall the algebraic framework of Con~\cite{Con2025anonymous}. This framework translates permutation-insdel robustness of Reed--Solomon codes into rank conditions for two families of polynomial matrices, the $\mathbf{V}$-matrices and the $\mathbf{A}$-matrices. Most of the notions and rank criteria in this subsection are due to Con~\cite{Con2025anonymous}; we restate them in the notation used in this paper. There are two minor differences in our presentation. First, we use a slightly different, unnormalized convention for $\mathbf{A}$-matrices, which matches the aggregate relations obtained by summing over minimal equal sub-pairs. Second, we record Lemma~\ref{lemma:Vmatrix square converse}, which complements Con's square $\mathbf{V}$-matrix criterion.

By Definition~\ref{def:equivalent perm insdel}, we need to rule out equalities
$\mathbf{c}_I=\mathbf{c}'_J$ for distinct-element sequences $I$ and $J$. For RS codewords, such equalities give a homogeneous linear system in the coefficients of the two message polynomials. The left block of the coefficient matrix records the evaluations on the indexed positions $I$, while the right block records the evaluations on the indexed positions $J$. Since the two constant columns can be combined, this leads to the following $\mathbf{V}$-matrix~\cite{Con2023reed,Con2025anonymous}.

\begin{defn}[$\mathbf{V}$-matrix]
    For positive integers $\ell$ and $k$, and distinct-element sequences
    $I=(i_1,\dots,i_{\ell}),J=(j_1,\dots,j_\ell)\in[n]^\ell$,
    define the $\ell\times(2k-1)$ matrix
    $$
    \mathbf{V}_{k,\ell,I,J}(\mathbf{X})
    :=
    \begin{pmatrix}
    1 & X_{i_1} & \cdots & X_{i_1}^{k-1} & X_{j_1} & \cdots & X_{j_1}^{k-1} \\
    1 & X_{i_2} & \cdots & X_{i_2}^{k-1} & X_{j_2} & \cdots & X_{j_2}^{k-1} \\
    \vdots & \vdots & \ddots & \vdots & \vdots & \ddots & \vdots \\
    1 & X_{i_{\ell}} & \cdots & X_{i_{\ell}}^{k-1} & X_{j_\ell} & \cdots & X_{j_\ell}^{k-1}
    \end{pmatrix}
    $$
    over $\mathbb{F}_q[\mathbf{X}]$. We call
    $\mathbf{V}_{k,\ell,I,J}(\mathbf{X})$
    a $\mathbf{V}$-matrix.
\end{defn}

The following lemma, proved in~\cite{Con2025anonymous}, connects the rank properties of the evaluated $\mathbf{V}$-matrices with the permutation-insdel robustness of the corresponding RS code. Indeed, let
$f(x)=\sum_{i=0}^{k-1}f_ix^i$ and
$g(x)=\sum_{i=0}^{k-1}g_ix^i$ be the message polynomials corresponding to codewords $\mathbf{c}$ and $\mathbf{c}'$, respectively. Then $\mathbf{c}_I=\mathbf{c}'_J$ implies
$$
\mathbf{V}_{k,\ell,I,J}(\mathbf{X})|_{\mathbf{X}=\boldsymbol{\alpha}}
\cdot
(f_0-g_0,f_1,\dots,f_{k-1},-g_1,\dots,-g_{k-1})^{\top}
=0.
$$
The exceptional kernel vectors in Item~1 below correspond only to the case where the two message polynomials are identical and hence do not violate robustness.

\begin{lemma}\cite[Proposition 3]{Con2025anonymous}\label{lemma:RS equivalant condition}
    Let $2k-1\le \ell\le n$ be a positive integer. Let $\boldsymbol{\alpha}=(\alpha_1,\dots,\alpha_n)\in\mathbb{F}_q^n$ have pairwise distinct coordinates. Then the following are equivalent:
    \begin{itemize}
        \item[1.] For every pair of distinct-element sequences $I,J\in[n]^\ell$, either
        $\mathbf{V}_{k,\ell,I,J}(\mathbf{X})|_{\mathbf{X}=\boldsymbol{\alpha}}$
        has full column rank, or every vector in its right kernel is of the form
        $$
        (0,f_1,\dots,f_{k-1},-f_1,\dots,-f_{k-1}).
        $$
        \item[2.] The code $\textup{RS}_{n,k}(\boldsymbol{\alpha})$ is robust against the $(n-\ell)$-permutation-insdel adversary.
    \end{itemize}
\end{lemma}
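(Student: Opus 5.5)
We prove both directions through the correspondence between kernel vectors of the evaluated $\mathbf{V}$-matrix and pairs of message polynomials with $\mathbf{c}_I=\mathbf{c}'_J$. Write $\mathbf{V}=\mathbf{V}_{k,\ell,I,J}(\mathbf{X})|_{\mathbf{X}=\boldsymbol{\alpha}}$. By Definition~\ref{def:equivalent perm insdel} with $t=n-\ell$, Item~2 says exactly this: for all distinct-element sequences $I,J\in[n]^\ell$ and all $f,g$ of degree $<k$ with $f(\alpha_{i_r})=g(\alpha_{j_r})$ for every $r\in[\ell]$, we have $f=g$ (equivalently, $f(\boldsymbol{\alpha})=g(\boldsymbol{\alpha})$, since the $\alpha_i$ are distinct and $k<n$).

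For (1)$\Rightarrow$(2), suppose $\mathbf{c}=f(\boldsymbol{\alpha})$ and $\mathbf{c}'=g(\boldsymbol{\alpha})$ satisfy $\mathbf{c}_I=\mathbf{c}'_J$. As noted before the lemma, the vector $\mathbf{v}=(f_0-g_0,f_1,\dots,f_{k-1},-g_1,\dots,-g_{k-1})^\top$ then lies in the right kernel of $\mathbf{V}$. If $\mathbf{V}$ has full column rank, then $\mathbf{v}=0$, so $f=g$. Otherwise $\mathbf{v}$ has the exceptional form, which forces $f_0=g_0$ and $f_i=g_i$ for $i\ge1$; again $f=g$. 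Hence no two distinct codewords produce a common output.

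For (2)$\Rightarrow$(1), take any kernel vector $\mathbf{v}=(a_0,a_1,\dots,a_{k-1},b_1,\dots,b_{k-1})$. Define $f(x)=a_0+\sum_{i\ge1}a_ix^i$ and $g(x)=-\sum_{i\ge1}b_ix^i$. Then the $r$-th row of $\mathbf{V}\mathbf{v}=0$ reads $f(\alpha_{i_r})-g(\alpha_{j_r})=0$, that is, $f(\boldsymbol{\alpha})_I=g(\boldsymbol{\alpha})_J$. By (2), $f=g$. This gives $a_0=0$ (the constant term of $g$ is $0$) and $a_i=-b_i$ for $i\ge1$, so $\mathbf{v}$ has the stated form. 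Thus every kernel vector is exceptional. In particular, when the kernel is trivial the matrix has full column rank, so the dichotomy in (1) holds automatically.

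The only subtle point is the bookkeeping of the merged constant column. In the forward direction, we must check that an exceptional kernel vector really forces $f_0=g_0$ and not just $f_0-g_0=0$ up to some other freedom, and it does. In the backward direction, we must choose $g$ with zero constant term so that the decomposition is valid. The hypothesis $\ell\ge 2k-1$ is needed only so that full column rank is possible, and it plays no role in the logic. I expect no real obstacle.
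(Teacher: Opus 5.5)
Your proof is correct and uses the same correspondence the paper sketches before the lemma: $\mathbf{c}_I=\mathbf{c}'_J$ holds exactly when $(f_0-g_0,f_1,\dots,f_{k-1},-g_1,\dots,-g_{k-1})^\top$ lies in the right kernel of the evaluated $\mathbf{V}$-matrix, and the exceptional kernel vectors are precisely those with $f=g$. Your converse direction (write an arbitrary kernel vector as coming from $f$ and a $g$ with zero constant term, then use that evaluation at $n\ge k$ distinct points is injective) fills in the half that the paper leaves to~\cite{Con2025anonymous}.
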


\begin{rmk}
    In the ordinary insdel setting, the index sequences $I$ and $J$ are increasing. In that case, $\mathbf{V}$-matrices have a relatively simple full-rank criterion: by~\cite[Proposition 2.4]{Con2023reed}, $\mathbf{V}_{k,\ell,I,J}(\mathbf{X})$ has full column rank whenever $I$ and $J$ agree on at most $k-1$ coordinates. In the permutation-insdel setting, the index sequences are arbitrary distinct-element sequences, and the same simple criterion no longer applies.
\end{rmk}

By Lemma~\ref{lemma:RS equivalant condition}, a sufficient approach to proving permutation-insdel robustness would be to choose $\boldsymbol{\alpha}$ so that every evaluated matrix
$\mathbf{V}_{k,\ell,I,J}(\mathbf{X})|_{\mathbf{X}=\boldsymbol{\alpha}}$
has full column rank. However, as observed in~\cite[Remark 6]{Con2025anonymous}, in the permutation-insdel setting the symbolic matrix
$\mathbf{V}_{k,\ell,I,J}(\mathbf{X})$
can fail to have full column rank before evaluation for structural reasons. Therefore, a finer analysis of the combinatorial structure of the pair $(I,J)$ is required. This motivates the separation into the full-rank $\mathbf{V}$-matrix regime and the complementary non-full-rank regime. We begin with the following definitions used to describe this structure.

\begin{defn}
    Let $(I,J)$ be a pair of distinct-element sequences of the same length $\ell$, and let $A\subseteq[\ell]$ be a nonempty set of indices. We call $(I_A,J_A)$ a \emph{minimal equal sub-pair} if
    \begin{itemize}
        \item[1.] $I_A=J_A$ as sets;
        \item[2.] for every nonempty proper subset $A_0\subsetneq A$, we have $I_{A_0}\neq J_{A_0}$ as sets.
    \end{itemize}
    Moreover, a pair $(I,J)$ is \emph{free of equalities} if they have no minimal equal sub-pair.
\end{defn}

\begin{exam}\label{exam:distinct-element sequence}
    Let $I=(1,2,3,4,6,8,9)$ and $J=(2,3,1,6,4,9,7)$. For $A_1=\{1,2,3\}$, the pair $(I_{A_1},J_{A_1})$ is a minimal equal sub-pair. Similarly, for $A_2=\{4,5\}$, the pair $(I_{A_2},J_{A_2})$ is also a minimal equal sub-pair. On the other hand, for $B=\{6,7\}$, we have $I_{B}=(8,9)$ and $J_{B}=(9,7)$, and hence $(I_{B},J_{B})$ is free of equalities, even though $I_{B}$ and $J_{B}$ have a common element.
\end{exam}

The preceding notions lead to the following decomposition of a pair $(I,J)$ into minimal equal sub-pairs and a remaining part that is free of equalities.

\begin{defn}\label{def:decm}
    Let $(I,J)$ be a pair of distinct-element sequences of length $\ell$. We call a partition $A_1,\dots,A_t,B$ of $[\ell]$ a \emph{decomposition} of $(I,J)$ if
    \begin{itemize}
        \item[1.] $(I_{A_j},J_{A_j})$ is a minimal equal sub-pair for every $j\in[t]$;
        \item[2.] $(I_B,J_B)$ is free of equalities.
    \end{itemize}
\end{defn}
Note that we allow $t=0$, in this case the decomposition is simply $B=[\ell]$.

\begin{lemma}\cite[Claim 3]{Con2025anonymous}\label{lemma:decomposition unique}
    Let $(I,J)$ be a pair of distinct-element sequences of length $\ell$. Then, up to reordering the sets, there exists a unique decomposition of $(I,J)$.
\end{lemma}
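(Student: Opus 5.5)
Both existence and uniqueness follow from one structural fact. For a pair $(I,J)$ of distinct-element sequences of length $\ell$, define the directed graph $G$ on vertex set $[\ell]$ with an arc $a\to b$ whenever $j_a=i_b$. Since the entries of $I$ are pairwise distinct, each vertex has out-degree at most one; similarly each vertex has in-degree at most one, since the entries of $J$ are distinct. Hence $G$ is a disjoint union of directed cycles and directed paths, where a loop $a\to a$ counts as a cycle of length one, occurring when $i_a=j_a$.

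\textbf{Step 1: equal sub-pairs are unions of cycles.} For nonempty $A\subseteq[\ell]$ I will show that $I_A=J_A$ as sets if and only if $A$ is a union of vertex sets of cycles of $G$. Suppose $I_A=J_A$. For each $a\in A$, the element $j_a$ lies in $I_A$, so $j_a=i_b$ for some $b\in A$; that is, every vertex of $A$ has its out-arc inside $A$. Symmetrically, every vertex of $A$ has its in-arc inside $A$. A set closed under both successors and predecessors, in a graph of maximum in- and out-degree one, cannot contain a path vertex, because following the path would eventually leave $A$ at an endpoint lacking an arc. So $A$ is a union of cycles. The converse is immediate: along a cycle, the multiset of $j$-values equals the multiset of $i$-values.

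\textbf{Step 2: minimal equal sub-pairs are single cycles.} By Step 1, $(I_A,J_A)$ is a minimal equal sub-pair exactly when $A$ is the vertex set of a single cycle of $G$. Consequently $(I_B,J_B)$ is free of equalities exactly when $B$ contains no full cycle.

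\textbf{Step 3: existence.} Let $A_1,\dots,A_t$ be the vertex sets of the cycles of $G$, and let $B$ be the union of all path vertices. Each $A_j$ gives a minimal equal sub-pair by Step 2. For $B$, apply Step 1 to the pair $(I_B,J_B)$ itself rather than to $G$: its graph is the induced subgraph $G[B]$, a union of paths with no cycles, so $(I_B,J_B)$ has no minimal equal sub-pair.

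\textbf{Step 4: uniqueness.} This is the only subtle step, since the freeness of $B$ must be read in the sub-pair $(I_B,J_B)$, whose graph is the induced subgraph $G[B]$. Take any decomposition $A'_1,\dots,A'_s,B'$. By Step 2 each $A'_j$ is a cycle of $G$. Suppose some cycle $C$ of $G$ is not among the $A'_j$. Since the cycles are disjoint and the $A'_j$ are cycles, $C$ is disjoint from all the $A'_j$, so $C\subseteq B'$. Then $C$ is also a cycle in $G[B']$, so $(I_C,J_C)$ is an equal sub-pair inside $(I_{B'},J_{B'})$. Removing proper equal subsets if necessary yields a minimal one, contradicting the freeness of $B'$. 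Hence $\{A'_j\}$ is exactly the set of cycles of $G$, and $B'$ is the complement of their union, which is $B$.
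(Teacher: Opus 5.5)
Your proof is correct. The paper does not prove this lemma; it imports it from Con's Claim~3. The natural in-paper comparison is therefore the set-theoretic argument in the proof of Lemma~\ref{lemma:decomposition restriction}. There, if two minimal equal sub-pairs $A_0,A_i$ meet in a nonempty proper subset $C$ of $A_i$, then some $i_a$ with $a\in C$ must occur in both $J_{A_0\setminus C}$ and $J_{A_i\setminus C}$, contradicting distinctness of $J$. That route shows minimal equal sub-pairs are pairwise equal or disjoint. Existence then needs a separate step (greedily peeling off minimal equal sub-pairs), and uniqueness follows by the same ``a missing block would lie inside $B'$'' argument you give in Step~4.

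You instead encode the relation $j_a=i_b$ as a partial injection on $[\ell]$, whose graph $G$ is a disjoint union of cycles and paths. You then prove the stronger structural fact that the equal sub-pairs are exactly the nonempty unions of cycles of $G$. Since $I_A=J_A$ is a property of $A$ alone, it does not matter in which ambient pair it is read.

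This buys several things. Existence becomes constructive, uniqueness becomes transparent, and $t$ is identified as the number of cycles of $G$. It also gives Lemma~\ref{lemma:decomposition restriction} and Corollary~\ref{cor:decomposition delete} almost for free. The graph of $(I_P,J_P)$ is the induced subgraph $G[P]$, whose cycles are precisely the cycles of $G$ contained in $P$, and deleting one index destroys at most one cycle.

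The intersection argument is more local and needs no auxiliary graph. However, it must be supplemented by a separate existence step and does not expose this global cycle structure.
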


The decomposition parameter $t$ in Definition~\ref{def:decm} actually controls the rank of square $\mathbf{V}$-matrices.

\begin{lemma}\cite[Proposition 4]{Con2025anonymous}\label{lemma:Vmatrix square full rank}
    Let $\ell=2k-1$. Let $I,J\in[n]^{2k-1}$ be two distinct-element sequences. Assume that $A_1,\dots,A_t,B$ is a decomposition of $(I,J)$. If $t\le k$, then $\textup{det}(\mathbf{V}_{k,2k-1,I,J}(\mathbf{X}))\neq 0$.
\end{lemma}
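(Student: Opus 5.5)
The plan is to prove the statement in a more flexible form by induction, in the style of the proof of~\cite[Proposition 2.4]{Con2023reed}. We repeatedly isolate the variable with the largest index and pass to its leading coefficient. This works because of the structure of a decomposition. Inside a minimal equal sub-pair $(I_{A_j},J_{A_j})$, the map $i_r\mapsto j_r$ for $r\in A_j$ is a permutation of the common underlying set. By minimality it is a single cycle, since any proper sub-cycle would give a smaller equal sub-pair. A cycle of length one is a row $(1,X_i,\dots,X_i^{k-1},X_i,\dots,X_i^{k-1})$. Summing the rows of $A_j$ produces a vector of the form $(|A_j|,\mathbf{p},\mathbf{p})$ with $\mathbf{p}=(p_1,\dots,p_{k-1})$ the power sums. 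Such vectors lie in a $k$-dimensional space. This suggests that $t\le k$ is exactly the threshold beyond which these aggregate relations force a dependency, and that below it there is room for a nonzero determinant.

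First, generalize the matrix shape. Consider matrices with a constant column, $a$ columns $X_{i_r}^1,\dots,X_{i_r}^{a}$ on the left and $b$ columns $X_{j_r}^1,\dots,X_{j_r}^{b}$ on the right, with $1+a+b$ rows. The claim to prove is that the determinant is nonzero whenever the number $t$ of minimal equal sub-pairs satisfies $t\le 1+\min(a,b)$. Taking $a=b=k-1$ recovers the statement.

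Second, run the induction. Let $m$ be the largest index appearing in $I\cup J$. The variable $X_m$ occurs in at most two rows: the row $r$ with $i_r=m$ and the row $s$ with $j_s=m$.

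In the case $r\neq s$, expand the determinant as a polynomial in $X_m$. Its leading term comes from pairing row $r$ with the column $X^{a}$ and row $s$ with the column $X^{b}$, giving $X_m^{a+b}$, and the coefficient is $\pm$ a minor. That minor is again a matrix of the same type, with parameters $(a-1,b-1)$ and two fewer rows. The index sequences are modified by deleting entries, so the cycle containing $m$ is either broken into a free part or shortened. One must check that the number of minimal equal sub-pairs of the new pair does not exceed $1+\min(a-1,b-1)$. A cycle that loses an element stops being an equal sub-pair, so $t$ drops by at least one when $m$ lay in a cycle, and otherwise the bound is inherited. One subtlety remains: deleting rows could, in principle, create a new equal sub-pair. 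I would instead delete the entries by reconnecting the cycle (substitute $j_s:=j_r$), which preserves the cyclic structure on the remaining elements. Alternatively, one could set $X_m$ equal to another variable in the cycle, and I would track which of the two is cleaner.

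In the case $r=s$ (a cycle of length one), the $X_m$-entries in that row are $X_m^d$ on both sides, so the leading power is $X_m^{\max(a,b)}$. If $a\ne b$ the leading coefficient is a minor with parameters reduced asymmetrically, and $t$ drops by one. If $a=b$, first subtract the two equal-degree columns' contributions: expand in $X_m$, take the coefficient of $X_m^{a}$, which equals the sum of two minors, and then use multilinearity to rewrite it as a single determinant. This is where I expect the parameter bookkeeping to be delicate.

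The base cases are matrices with $a=0$ or $b=0$, or with no minimal equal sub-pairs. For $a=0$ or $b=0$ the matrix is a Vandermonde-like matrix in distinct variables, which is nonzero. For free-of-equalities pairs I would reuse the increasing-sequence argument, since a lex-leading monomial is unique once no cycles are present.

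The main obstacle is the length-one-cycle step, together with verifying that the inequality $t\le 1+\min(a,b)$ is preserved at every step. This is precisely where the bound $t\le k$ is tight: $k+1$ fixed points give $k+1$ identical-shaped rows $(1,\mathbf{x},\mathbf{x})$, which span at most a $k$-dimensional space and hence make the matrix singular. I would first test the invariant on small cases such as $k=2$ to confirm that it is the right one before committing to it.
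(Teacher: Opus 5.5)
The paper does not prove this lemma; it cites Con's Proposition~4. Judged on its own, your plan has a genuine gap in the inductive step.

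\textbf{Where the induction breaks.} You always peel the \emph{largest} index $m$ and take the coefficient of $X_m^{a+b}$. Suppose $m$ lies in the free part, i.e.\ rows $r\neq s$ are both in $B$. Then deleting rows $r,s$ destroys no block. By Lemma~\ref{lemma:decomposition restriction} it creates none either, so your ``reconnect the cycle'' worry is moot. Hence $t$ is unchanged, while the bound you need drops from $1+\min(a,b)$ to $\min(a,b)$. So ``otherwise the bound is inherited'' is false exactly in the extremal case $t=k$.

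This is not just bookkeeping: the coefficient you call leading can vanish identically. Take $k=3$, $I=(1,2,3,4,6)$, $J=(1,2,3,6,5)$. There are three fixed points and $B=\{4,5\}$, so $t=3=k$. The coefficient of $X_6^4$ is $\pm$ the $3\times 3$ determinant with rows $(1,X_a,X_a)$, $a=1,2,3$, which is $0$. Yet subtracting the left power columns from the right ones makes the matrix block triangular, and gives $\det\mathbf{V}=(X_2-X_1)(X_3-X_1)(X_3-X_2)(X_6-X_4)(X_5-X_6)(X_5-X_4)\neq 0$, which has degree only $2$ in $X_6$.

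Two further problems remain:
\begin{itemize}
\item The case you did not treat, where $m$ occurs in only one of $I,J$, also loses the invariant: the minor has shape $(a-1,b)$ while $t$ is unchanged.
\item The fixed-point step with $a=b$ is left open. Your toolkit cannot avoid it when all blocks are fixed points and $t=k$.
\end{itemize}

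\textbf{A repair within your framework.} Drop the largest-index rule. Each $X_v$ occurs in at most one row on the $I$-side and at most one on the $J$-side, so the top coefficient in $X_v$ is a minor for \emph{every} $v$. Keep $a=b=k-1$ and induct on $k$ (the case $k=1$ is trivial), choosing $v$ according to the decomposition:
\begin{itemize}
\item[(i)] If some block is a cycle of length $\ge 2$, peel a vertex $v$ of it. The coefficient of $X_v^{2k-2}$ is $\pm\det\mathbf{V}_{k-1,2k-3,I_P,J_P}$. By Lemma~\ref{lemma:decomposition restriction}, its decomposition has $t-1\le k-1$ blocks, so induction applies.
\item[(ii)] If all blocks are fixed points and $t<k$, then $|B|=2k-1-t\ge k$, so $B$ contains a path $v_0\to\cdots\to v_e$. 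Here $v_e\notin I$ and $v_0\notin J$. Substituting $X_{v_e}\mapsto X_{v_0}$ gives the $\mathbf{V}$-matrix of a new distinct-element pair with $t+1$ blocks. Nonvanishing after substitution implies nonvanishing before. Repeat (ii), or go to (i) if the new block is a long cycle.
\item[(iii)] Otherwise there are exactly $k$ fixed points and $|B|=k-1$. Apply the column operation from the proof of Lemma~\ref{lemma:Vmatrix square converse}; the fixed-point rows become $(1,x,\dots,x^{k-1},0,\dots,0)$. So $\det\mathbf{V}$ is a $k\times k$ Vandermonde determinant times $\det[X_{j_r}^d-X_{i_r}^d]_{r\in B,\,d\in[k-1]}$. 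The second factor is nonzero. Along each path of $B$, replace every row by the sum of the rows up to it; this turns its entries into $X_{v_s}^d-X_{v_0}^d$. Then set every initial vertex $X_{v_0}$ to $0$. What remains is $\prod_s X_{v_s}$ times a Vandermonde determinant in the $k-1$ distinct non-initial vertices.
\end{itemize}
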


Thus, $t\le k$ identifies the full-rank $\mathbf{V}$-matrix regime. In this regime, it suffices to show that the symbolic full-rank condition survives the random assignment $\mathbf{X}=\boldsymbol{\alpha}$ with high probability.

We also record the following converse, which complements Lemma~\ref{lemma:Vmatrix square full rank} and will be used to characterize rectangular $\mathbf{V}$-matrices in Section~\ref{section:V matrix}.

\begin{lemma}\label{lemma:Vmatrix square converse}
    Let $I,J\in[n]^{2k-1}$ be two distinct-element sequences.
    Assume that $A_1,\dots,A_t,B$ is the decomposition of the pair $(I,J)$.
    If $t\ge k+1$, then
    $\det\big(\mathbf{V}_{k,2k-1,I,J}(\mathbf{X})\big)=0$
    as a polynomial.
\end{lemma}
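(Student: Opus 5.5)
We will exhibit a nonzero vector in the right kernel of $\mathbf{V}:=\mathbf{V}_{k,2k-1,I,J}(\mathbf{X})$ over the field of fractions $\mathbb{F}_q(\mathbf{X})$. This means the $2k-1$ columns are linearly dependent over $\mathbb{F}_q(\mathbf{X})$, which forces the determinant to vanish identically.

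The kernel vector comes from the aggregate relations mentioned before the definition of $\mathbf{A}$-matrices. For each minimal equal sub-pair $(I_{A_j},J_{A_j})$ and each exponent $m\in[k-1]$, we have $I_{A_j}=J_{A_j}$ as sets, so
$$
\sum_{a\in A_j} X_{i_a}^m=\sum_{a\in A_j} X_{j_a}^m .
$$
In terms of the rows of $\mathbf{V}$, let $\mathbf{r}_a$ denote row $a$, and write $\mathbf{r}_a=(1,\mathbf{u}_a,\mathbf{w}_a)$, where $\mathbf{u}_a=(X_{i_a},\dots,X_{i_a}^{k-1})$ and $\mathbf{w}_a=(X_{j_a},\dots,X_{j_a}^{k-1})$. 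Define $\mathbf{s}_j:=\sum_{a\in A_j}\mathbf{r}_a$. Then $\mathbf{s}_j=(|A_j|,\mathbf{p}_j,\mathbf{p}_j)$, where $\mathbf{p}_j=\sum_{a\in A_j}\mathbf{u}_a=\sum_{a\in A_j}\mathbf{w}_a$ is the vector of power sums $p_m(A_j)=\sum_{a\in A_j}X_{i_a}^m$ for $m=1,\dots,k-1$. So each $\mathbf{s}_j$ lies in the $k$-dimensional subspace
$$
W:=\{(c,\mathbf{y},\mathbf{y}) : c\in\mathbb{F}_q(\mathbf{X}),\ \mathbf{y}\in\mathbb{F}_q(\mathbf{X})^{k-1}\}.
$$

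Now assume $t\ge k+1$. The vectors $\mathbf{s}_1,\dots,\mathbf{s}_t$ are $t\ge k+1$ vectors in the $k$-dimensional space $W$, so they satisfy a nontrivial relation $\sum_j\lambda_j\mathbf{s}_j=0$ with some $\lambda_j\neq0$. The sets $A_1,\dots,A_t$ are disjoint and nonempty, so this gives the row relation
$$
\sum_{j}\lambda_j\sum_{a\in A_j}\mathbf{r}_a=0,
$$
which is nontrivial: every row $\mathbf{r}_a$ with $a\in A_j$ and $\lambda_j\ne 0$ receives the nonzero coefficient $\lambda_j$. Hence the rows of the square matrix $\mathbf{V}$ are linearly dependent over $\mathbb{F}_q(\mathbf{X})$. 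Therefore $\det(\mathbf{V})=0$ as a polynomial.

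We work with rows, so left kernel, rather than the right kernel to avoid transposition bookkeeping; for a square matrix the two are equivalent. The main point to double-check is that the decomposition genuinely gives $t$ disjoint nonempty sets $A_j\subseteq[2k-1]$, which holds by Definition~\ref{def:decm} and Lemma~\ref{lemma:decomposition unique}. The argument uses neither minimality nor the $B$ part, and needs no characteristic assumption. This is the easy direction, dual to Lemma~\ref{lemma:Vmatrix square full rank}; the only real obstacle is recognizing the dimension count $\dim W=k<t$.
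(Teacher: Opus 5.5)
Your proof is correct, and it takes a different route from the paper's.

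The paper's route has three steps:
\begin{enumerate}
\item It right-multiplies by an invertible matrix so that the last $k-1$ columns become $X_{j_a}^m-X_{i_a}^m$.
\item It uses pigeonhole: a set of rows missing at least one index of every block has size at most $2k-1-t\le k-2$. Hence any $k-1$ rows contain a full block $A_i$, and the rows of that block sum to zero in those columns.
\item It concludes by Laplace expansion along the last $k-1$ columns.
\end{enumerate}

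You bypass all of this with a single dimension count. The $t$ block row-sums lie in the $k$-dimensional space $W$, and disjointness of the nonempty blocks lifts their dependence to a nontrivial row relation. Your version is shorter and generalizes at once. The row space is spanned by the $t$ block sums together with $\ell-t$ further rows, so $\operatorname{rank}\mathbf{V}_{k,\ell,I,J}(\mathbf{X})\le \ell-t+k$ whenever $t\ge k$. This gives the ``only if'' half of Lemma~\ref{lemma:Vmatrix ell equivalent} directly, without Corollary~\ref{cor:decomposition delete}.

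The paper's version says a bit more about the kernel. All maximal minors of the difference block vanish, so those $k-1$ columns are dependent. Hence the symbolic right kernel contains a vector $(0,\mathbf{g},-\mathbf{g})$, which is exactly the harmless shape of Lemma~\ref{lemma:RS equivalant condition}.

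One cosmetic point: your opening promises a right-kernel vector and column dependence, but you actually build a left-kernel vector. This is fine for a square matrix, but you should state it consistently.
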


\begin{proof}
    We perform column operations on $\mathbf{V}_{k,2k-1,I,J}(\mathbf{X})$ and show that its determinant vanishes. Define a matrix
    $\mathbf{M}=(a_{ij})\in\mathbb{F}_q^{(2k-1)\times(2k-1)}$ by
    $$
    a_{ij}=
    \begin{cases}
        1, & \text{if } i=j,\\
        -1, & \text{if } i+k-1=j \text{ and } i\ge 2,\\
        0, & \text{otherwise}.
    \end{cases}
    $$
    Set $\mathbf{V}'=\mathbf{V}_{k,2k-1,I,J}(\mathbf{X})\cdot \mathbf{M}.$ Explicitly,
    $$
    \mathbf{V}'=
    \begin{pmatrix}
        1 & X_{i_1} & \cdots & X_{i_1}^{k-1} & X_{j_1}-X_{i_1} & \cdots & X_{j_1}^{k-1}-X_{i_1}^{k-1} \\
        1 & X_{i_2} & \cdots & X_{i_2}^{k-1} & X_{j_2}-X_{i_2} & \cdots & X_{j_2}^{k-1}-X_{i_2}^{k-1} \\
        \vdots & \vdots & \ddots & \vdots & \vdots & \ddots & \vdots \\
        1 & X_{i_{2k-1}} & \cdots & X_{i_{2k-1}}^{k-1} & X_{j_{2k-1}}-X_{i_{2k-1}} & \cdots & X_{j_{2k-1}}^{k-1}-X_{i_{2k-1}}^{k-1}
    \end{pmatrix}.
    $$
    
    \begin{claim}
        Every $(k-1)\times(k-1)$ submatrix of $\mathbf{V}'$ obtained by taking the last $k-1$ columns and any $k-1$ rows has a zero determinant.
    \end{claim}
    
    \begin{poc}
        Let $\mathbf{W}$ be such a $(k-1)\times(k-1)$ submatrix. We first show that the rows of $\mathbf{W}$ must contain all rows corresponding to some block $A_i$, where $i\in[t]$.
        
         Suppose, for contradiction, that for every $i\in[t]$, the rows of $\mathbf{W}$ avoid at least one row indexed by $A_i$. Then $\mathbf{W}$ can contain at most $|A_i|-1$ rows from each $A_i$. Therefore, the total number of rows that $\mathbf{W}$ can contain is at most
        $$
        \sum_{i=1}^{t}(|A_i|-1)+|B|
        =
        \left(\sum_{i=1}^{t}|A_i|+|B|\right)-t
        \le
        (2k-1)-(k+1)
        =
        k-2,
        $$
        which contradicts the fact that $\mathbf{W}$ has $k-1$ rows.
        
        Hence, there exists some $i\in[t]$ such that $\mathbf{W}$ contains all rows indexed by $A_i$. Since $I_{A_i}=J_{A_i}$ as sets, we have
        $$
        \sum_{s\in A_i}\left(X_{j_s}^{m}-X_{i_s}^{m}\right)=0
        \quad
        \text{for every } m\in[k-1].
        $$
        Thus the rows of $\mathbf{W}$ are linearly dependent, which completes the proof of claim.
    \end{poc}
    
    Finally, since $\mathbf{M}$ is invertible, we have
    $$
    \det(\mathbf{V}')=\det\big(\mathbf{V}_{k,2k-1,I,J}(\mathbf{X})\big)\det(\mathbf{M}).
    $$
    It remains to show that $\det(\mathbf{V}')=0$. Expanding $\det(\mathbf{V}')$ along the last $k-1$ columns, every term contains the determinant of a $(k-1)\times(k-1)$ submatrix formed from these columns. By the claim, all such determinants are zero, so $\det(\mathbf{V}')=0$, which completes the proof.
\end{proof}

The converse above shows that, when $t\ge k+1$, the square $\mathbf{V}$-matrix is singular as a polynomial. In this regime, by Lemma~\ref{lemma:RS equivalant condition}, the full-rank $\mathbf{V}$-matrix strategy is no longer available, and one has to use the additional structure coming from the minimal equal sub-pairs. If two RS codewords agree on the indexed positions specified by $(I,J)$, then each block $A_j$ with $I_{A_j}=J_{A_j}$ allows us to sum the corresponding coordinate equalities over a common set of evaluation points. The resulting aggregate relations are governed by power sums, which leads to the following variant of the $\mathbf{A}$-matrices used by Con~\cite{Con2025anonymous}.

\begin{defn}[$\mathbf{A}$-matrix]
    For pairwise disjoint nonempty sets $I^1,\dots,I^t\subseteq[n]$, define the matrix
    $$
    \mathbf{A}_{k,I^1,\dots,I^t}(\mathbf{X})
    :=
    \begin{pmatrix}
    |I^1| & \sum_{j\in I^1}X_j & \sum_{j\in I^1}X_j^2 & \cdots & \sum_{j\in I^1}X_j^{k-1} \\
    |I^2| & \sum_{j\in I^2}X_j & \sum_{j\in I^2}X_j^2 & \cdots & \sum_{j\in I^2}X_j^{k-1} \\
    \vdots & \vdots & \vdots & \ddots & \vdots \\
    |I^t| & \sum_{j\in I^t}X_j & \sum_{j\in I^t}X_j^2 & \cdots & \sum_{j\in I^t}X_j^{k-1}
    \end{pmatrix},
    $$
    with $\mathbf{X}=(X_1,\dots,X_n)$. We call $\mathbf{A}_{k,I^1,\dots,I^t}(\mathbf{X})$ an $\mathbf{A}$-matrix.
\end{defn}

\begin{rmk}
    The $\mathbf{A}$-matrix used in~\cite{Con2025anonymous} has first column equal to $1$. However, in the summation argument over a minimal equal sub-pair, the coefficient of the constant term should be the size of the corresponding set. Thus, for the aggregate relations used in the rank criterion, the natural matrix has first column entries $|I^1|,\dots,|I^t|$. We therefore use this unnormalized version of the $\mathbf{A}$-matrix. Under the assumption $\operatorname{char}(\mathbb{F}_q)>\ell$, all set sizes that occur in our application are nonzero in $\mathbb{F}_q$, and the full-rank arguments can be carried out with this modified form.
    
\end{rmk}

The following lemma records a robust full-rank property of the unnormalized $\mathbf{A}$-matrices used in this paper.

\begin{lemma}\label{lemma:Amatrix full rank}
    Let $I^1,\dots,I^t\subseteq[n]$ be pairwise disjoint nonempty sets with
    $t\ge k$ and $\sum_{s=1}^t |I^s|\le \ell$. Suppose that
    $\operatorname{char}(\mathbb{F}_q)>\ell$. Then every $k\times k$ submatrix of
    $\mathbf{A}_{k,I^1,\dots,I^t}(\mathbf{X})$ has full rank. Moreover, if $\mathbf{M}$ is any such $k\times k$ submatrix, then
    $\deg_{X_i}(\det(\mathbf{M}))\le k-1$
    for every $i\in[n]$.
\end{lemma}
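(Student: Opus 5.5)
Fix any $k$ rows of $\mathbf{A}_{k,I^1,\dots,I^t}(\mathbf{X})$, which after relabeling we take to be those indexed by $I^1,\dots,I^k$, and let $\mathbf{M}$ be the resulting $k\times k$ matrix. I will show that $\det(\mathbf{M})$ is a nonzero polynomial by exhibiting a single monomial that occurs with nonzero coefficient. The degree bound will come directly from multilinearity of the determinant in the rows.

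\emph{Expansion.} By multilinearity in each row, with row $s$ equal to $\sum_{j\in I^s}(1,X_j,\dots,X_j^{k-1})$, we have
\[
\det(\mathbf{M})=\sum_{(j_1,\dots,j_k)\in I^1\times\cdots\times I^k}\det\big(X_{j_s}^{m}\big)_{s\in[k],\,0\le m\le k-1},
\]
where the first column is the all-ones column, since each occurrence of $1$ in row $s$ sums to $|I^s|$. Each summand is a Vandermonde determinant $\prod_{s<s'}(X_{j_{s'}}-X_{j_s})$. The sets $I^s$ are disjoint, so the $j_s$ are distinct, and each summand is nonzero of degree $k-1$ in each variable it involves.

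\emph{Degree bound.} A variable $X_i$ with $i\in I^s$ appears only in row $s$, with degree at most $k-1$ there. Hence $\deg_{X_i}\det(\mathbf{M})\le k-1$.

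\emph{Nonvanishing.} Consider the lex-leading monomial with respect to some ordering. I would pick, for each $s$, a distinguished element $a_s\in I^s$. I then order the variables so that $X_{a_1}>X_{a_2}>\cdots>X_{a_k}$ and these are larger than all other variables. Also, within each $I^s$, $X_{a_s}$ should be the largest variable of that block, and the blocks should be interleaved appropriately.

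A cleaner route avoids this bookkeeping: take the coefficient of $\prod_s X_{a_s}^{k-s}$, the candidate monomial $X_{a_1}^{k-1}X_{a_2}^{k-2}\cdots X_{a_k}^{0}$. Consider a summand indexed by $(j_1,\dots,j_k)$. Its Vandermonde expansion only contains monomials in $X_{j_1},\dots,X_{j_k}$. To contain $X_{a_s}$ for $s<k$, it must have $j_s=a_s$, since $a_s$ only lives in row $s$. So $j_1,\dots,j_{k-1}$ are forced, and $j_k$ ranges over $I^k$.

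In the Vandermonde with rows $X_{a_1},\dots,X_{a_{k-1}},X_{j_k}$, the monomial $X_{a_1}^{k-1}\cdots X_{a_{k-1}}^{1}X_{j_k}^{0}$ has coefficient $\pm1$, independent of $j_k$. Summing over $j_k\in I^k$ gives coefficient $\pm|I^k|$. This is nonzero in $\mathbb{F}_q$ because $|I^k|\le\ell<\operatorname{char}(\mathbb{F}_q)$. Thus $\det(\mathbf{M})\neq0$, giving full rank.

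The main subtlety is exactly this coefficient extraction. The constant column causes the last row to contribute a sum rather than a single term, and that is where the characteristic assumption enters. I would double-check that no other tuple contributes the same monomial, which holds because the $X_{a_s}$ with $s<k$ pin down $j_s$.
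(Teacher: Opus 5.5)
Your proof is correct, but it shows the determinant is nonzero by a different mechanism than the paper. The paper specializes every $X_j$ with $j\in I^{s_a}$ to a single new variable $Y_a$. The $a$-th row then collapses to $|I^{s_a}|(1,Y_a,\dots,Y_a^{k-1})$, so the specialized determinant is $\prod_a |I^{s_a}|$ times a Vandermonde determinant, and nonvanishing after specialization forces nonvanishing before. You instead expand by multilinearity into a sum of Vandermonde determinants indexed by $I^1\times\cdots\times I^k$. You then extract the coefficient of $X_{a_1}^{k-1}\cdots X_{a_{k-1}}$: disjointness restricts the contributing tuples to those with $j_s=a_s$ for $s<k$, giving the coefficient $\pm|I^k|$. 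The degree bound is the same argument in both.

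The paper's specialization is shorter and avoids all coefficient bookkeeping, but it needs all $k$ selected set sizes to be nonzero in $\mathbb{F}_q$. Your extraction needs only the size of the row assigned exponent $0$. Since you may choose which row receives exponent $0$, your method in fact gives $\det(\mathbf{M})\neq 0$ as soon as any one selected size is not divisible by $\operatorname{char}(\mathbb{F}_q)$. That is a slightly sharper statement, though it makes no difference under the hypothesis $\operatorname{char}(\mathbb{F}_q)>\ell$. The exploratory lex-order paragraph is not needed and can be dropped.
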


\begin{proof}
    Consider any $k\times k$ submatrix obtained by choosing the rows corresponding to
    $I^{s_1},\dots,I^{s_k}$. For each $a\in[k]$, set all variables $X_j$ with $j\in I^{s_a}$ equal to a new variable $Y_a$. Since the sets $I^{s_1},\dots,I^{s_k}$ are pairwise disjoint, this specialization is consistent. The $a$-th selected row becomes
    $$
    |I^{s_a}|(1,Y_a,Y_a^2,\dots,Y_a^{k-1}).
    $$
    Since $1\le |I^{s_a}|\le \ell$ and $\operatorname{char}(\mathbb{F}_q)>\ell$, each factor $|I^{s_a}|$ is nonzero in $\mathbb{F}_q$. Thus the specialized submatrix is a nonzero diagonal scaling of a Vandermonde matrix, and hence has nonzero determinant. Therefore the original determinant is not the zero polynomial.

    The degree bound follows because the sets $I^1,\dots,I^t$ are pairwise disjoint: each variable $X_i$ appears in at most one row of $\mathbf{M}$, and in that row with degree at most $k-1$.
\end{proof}

The full-rank property of $\mathbf{A}$-matrices rules out nontrivial polynomial relations satisfying the aggregate constraints coming from many minimal equal sub-pairs. Together with the full-rank $\mathbf{V}$-matrix regime, this gives the following rank criterion. This criterion is adapted from~\cite[Proposition 8]{Con2025anonymous}; here we use the unnormalized $\mathbf{A}$-matrices defined above, so we also impose the condition $\operatorname{char}(\mathbb{F}_q)>\ell$ to ensure that all relevant set sizes are nonzero in $\mathbb{F}_q$.

\begin{lemma}\label{lemma:V A two conditions}
    Let $\rho,\epsilon\in(0,1)$ with $0<\rho<\epsilon$, and let $k\le n$ be positive integers with $k=Rn$. Let $\ell=2k-1+\epsilon n$ and $t=k+\rho n$, and suppose that $2R+\epsilon<1$. Let $\boldsymbol{\alpha}=(\alpha_1,\dots,\alpha_n)\in\mathbb{F}_q^n$ be an $n$-tuple of distinct elements, and assume that $\operatorname{char}(\mathbb{F}_q)>\ell$. Suppose that the following two conditions hold:
    \begin{enumerate}
        \item \label{item:1} For every pair $(I,J)$ of distinct-element sequences of length $\ell$ containing at most $t$ minimal equal sub-pairs, the matrix $\mathbf{V}_{k,\ell,I,J}(\mathbf{X})|_{\mathbf{X}=\boldsymbol{\alpha}}$ has full column rank.

        \item \label{item:2} For every collection of pairwise disjoint nonempty sets $I^1,\dots,I^t\subseteq[n]$ satisfying $\sum_{i=1}^{t}|I^i|\le \ell$, the matrix $\mathbf{A}_{k,I^1,\dots,I^t}(\mathbf{X})|_{\mathbf{X}=\boldsymbol{\alpha}}$ has full column rank.
     \end{enumerate}
     Then the corresponding $\textup{RS}_{n,k}(\boldsymbol{\alpha})$ code is robust against the $(n-\ell)$-permutation-insdel adversary.
\end{lemma}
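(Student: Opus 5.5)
We argue by contradiction through Definition~\ref{def:equivalent perm insdel}. Suppose two distinct codewords $\mathbf{c}=f(\boldsymbol{\alpha})$ and $\mathbf{c}'=g(\boldsymbol{\alpha})$ with $f\neq g$ satisfy $\mathbf{c}_I=\mathbf{c}'_J$ for some distinct-element sequences $I,J\in[n]^{\ell}$. Write $f=\sum_{m}f_mx^m$ and $g=\sum_m g_mx^m$. By Lemma~\ref{lemma:decomposition unique}, the pair $(I,J)$ has a unique decomposition $A_1,\dots,A_s,B$ of $[\ell]$. We split into two cases according to whether $s\le t$ or $s>t$.

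\emph{Case $s\le t$.} Condition~\ref{item:1} applies, so $\mathbf{V}_{k,\ell,I,J}(\mathbf{X})|_{\mathbf{X}=\boldsymbol{\alpha}}$ has full column rank. The equalities $f(\alpha_{i_r})=g(\alpha_{j_r})$ for $r\in[\ell]$ say exactly that
$$(f_0-g_0,f_1,\dots,f_{k-1},-g_1,\dots,-g_{k-1})^{\top}$$
lies in the kernel of this matrix. Hence this vector is zero, so $f=g$, a contradiction.

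\emph{Case $s>t$.} We use only $t$ of the blocks, say $A_1,\dots,A_t$, and set $I^j:=I_{A_j}$ as a set; by definition of a minimal equal sub-pair, this also equals $J_{A_j}$ as a set. The sets $I^1,\dots,I^t$ are pairwise disjoint because $I$ has distinct entries and the $A_j$ are disjoint. They are nonempty, and $\sum_j|I^j|\le\ell$. Fix $j\in[t]$ and sum the relations $f(\alpha_{i_r})=g(\alpha_{j_r})$ over $r\in A_j$. Since $\{i_r:r\in A_j\}=\{j_r:r\in A_j\}=I^j$, both sides are sums over the same set, giving
$$\sum_{m=0}^{k-1}(f_m-g_m)\sum_{a\in I^j}\alpha_a^m=0 .$$
The constant term here contributes $|I^j|(f_0-g_0)$, which is why the unnormalized first column is the correct one. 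So the vector $(f_m-g_m)_{m=0}^{k-1}$ lies in the kernel of $\mathbf{A}_{k,I^1,\dots,I^t}(\mathbf{X})|_{\mathbf{X}=\boldsymbol{\alpha}}$. By condition~\ref{item:2} that kernel is trivial, so $f=g$, again a contradiction.

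The only delicate point is bookkeeping: checking that the summation over a block is legitimate, namely that each $A_j$ gives equal index sets and that distinct blocks give disjoint sets. This follows directly from the definition of minimal equal sub-pairs together with the distinctness of the entries of $I$. The assumption $\operatorname{char}(\mathbb{F}_q)>\ell$ is not needed for the implication itself; it is only needed so that condition~\ref{item:2} can later be established via Lemma~\ref{lemma:Amatrix full rank}. The parameters $\rho$ and $\epsilon$ play no role beyond fixing $\ell$ and $t$.

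Robustness against the $(n-\ell)$-permutation-insdel adversary then follows from Definition~\ref{def:equivalent perm insdel}.
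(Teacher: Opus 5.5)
Your proof is correct and follows essentially the same argument as the paper. The $\mathbf{V}$-matrix condition handles pairs with at most $t$ minimal equal sub-pairs, and summing the coordinate equalities over $t$ of the blocks places $(f_m-g_m)_{m}$ in the kernel of the evaluated unnormalized $\mathbf{A}$-matrix; the only cosmetic difference is that you conclude directly from Definition~\ref{def:equivalent perm insdel} instead of through the kernel characterization in Lemma~\ref{lemma:RS equivalant condition}, which is equally valid.
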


The proof follows the same argument as~\cite[Proposition 8]{Con2025anonymous}. Briefly, suppose that the pair $(I,J)$ contains at least $t$ minimal equal sub-pairs indexed by $A_1,\dots,A_t$, where $t=k+\rho n$. For each $i\in[t]$, set
$I^i=I_{A_i}=J_{A_i}$ as a subset of $[n]$. Suppose that $f(x)=\sum_{s=0}^{k-1}f_sx^s$ and $g(x)=\sum_{s=0}^{k-1}g_sx^s$ are two polynomials whose evaluations satisfy the corresponding equations encoded by
$\mathbf{V}_{k,\ell,I,J}(\mathbf{X})|_{\mathbf{X}=\boldsymbol{\alpha}}$. Summing these equations over each $A_i$ gives
$$
|I^i|(f_0-g_0)
+
\sum_{s=1}^{k-1}
\left(\sum_{a\in I^i}\alpha_a^s\right)(f_s-g_s)
=0.
$$
Thus the coefficient vector
$(f_0-g_0,f_1-g_1,\dots,f_{k-1}-g_{k-1})^\top$
lies in the right kernel of
$\mathbf{A}_{k,I^1,\dots,I^t}(\mathbf{X})|_{\mathbf{X}=\boldsymbol{\alpha}}$.
By Condition~\ref{item:2}, this $\mathbf{A}$-matrix has full column rank, so $f=g$. Hence the corresponding right-kernel vectors of
$\mathbf{V}_{k,\ell,I,J}(\mathbf{X})|_{\mathbf{X}=\boldsymbol{\alpha}}$
are only of the desired form in Lemma~\ref{lemma:RS equivalant condition}.

\begin{rmk}
    Conditions~\ref{item:1} and~\ref{item:2} in Lemma~\ref{lemma:V A two conditions} overlap when $(I,J)$ contains exactly $k+\rho n$ minimal equal sub-pairs. This overlap is harmless and is kept for notational simplicity.
\end{rmk}

\subsection{The faulty index framework}

We recall the faulty index framework of Con, Guo, Li, and Zhang~\cite{Con2024random}. The algorithm below is adapted from their work. In later sections, we will apply it after specifying deterministic rules for selecting full-rank square submatrices in the $\mathbf{V}$- and $\mathbf{A}$-matrix regimes.

\begin{defn}\label{def:matrix retriction}
    Let $\mathbf{M}\in \mathbb{F}_q[\mathbf{X}]^{\ell\times m}$ be a matrix with $\ell\ge m$. For a subset $\mathcal{R}\subseteq[n]$, define $\mathbf{M}^{\mathcal{R}}$ as the submatrix of $\mathbf{M}$ obtained by deleting every row that contains an entry depending on some variable $X_i$ with $i\in \mathcal{R}$.
\end{defn}

The faulty index framework tracks when a full-rank square polynomial matrix loses rank during sequential assignment $X_1=\alpha_1,\dots,X_n=\alpha_n$. If the determinant is nonzero before assigning $X_i$ but becomes zero after assigning $X_i$, then the $i$-th assignment is the first step at which the rank failure occurs. This motivates the following definition. For $i\in[n]$, write
$\mathbf{X}_{<i}:=(X_1,\dots,X_{i-1})$ and
$\mathbf{X}_{\le i}:=(X_1,\dots,X_i)$; define
$\boldsymbol{\alpha}_{<i}$ and $\boldsymbol{\alpha}_{\le i}$ similarly.

\begin{defn}[Faulty index]\label{def:faulty index}
    Let $\mathbf{M}\in \mathbb{F}_q[\mathbf{X}]^{m\times m}$ be a square matrix with a nonzero determinant. For $\boldsymbol{\alpha}=(\alpha_1,\dots,\alpha_n)\in\mathbb{F}_q^n$, we say that $i\in[n]$ is a \emph{faulty index} of $\mathbf{M}$ with respect to $\boldsymbol{\alpha}$ if $\det\left(\mathbf{M}|_{\mathbf{X}_{<i}=\boldsymbol{\alpha}_{<i}}\right)\neq 0$, but $\det\left(\mathbf{M}|_{\mathbf{X}_{\le i}=\boldsymbol{\alpha}_{\le i}}\right)=0$.
\end{defn}

Algorithm~\ref{alg:faulty index output}
attempts to certify that
$\mathbf{M}|_{\mathbf{X}=\boldsymbol{\alpha}}$
has full column rank. It iteratively deletes rows involving variables $X_i$ whose indices $i$ have already been identified as faulty, selects a full-rank square submatrix of the remaining symbolic matrix, and tests whether this submatrix loses rank during the sequential assignment. The algorithm outputs ``SUCCESS'' if full column rank is certified, outputs ``FAIL'' if the remaining symbolic matrix loses full column rank, and otherwise outputs a faulty index sequence $(u_1,\dots,u_r)$. In our applications, $r$ is chosen so that the ``FAIL'' outcome does not occur; hence any rank failure after evaluation yields a length-$r$ faulty index sequence. Bounding the rank failure probability is then reduced to bounding the probability of producing possible faulty index sequences.

\begin{algorithm}[htbp]
\caption{Certify Full Column Rank}
\label{alg:faulty index output}
\KwIn{$n,m,r,\ell\in \mathbb{Z}^+$ with $\ell\ge m$, a full-column-rank matrix $\mathbf{M}\in \mathbb{F}_q[\mathbf{X}]^{\ell\times m}$, and evaluation points $\alpha_1,\dots,\alpha_n\in\mathbb{F}_q$.}
\KwOut{``SUCCESS'', ``FAIL'', or a faulty index sequence $(u_1,\dots,u_r)\in[n]^r$.}

$\mathcal{R}\leftarrow \varnothing$. Here, $\mathcal{R}$ indicates the set of faulty evaluation indices.\\
\For{$j=1$ \KwTo $r$}{
    \If{$\textup{rank}(\mathbf{M}^{\mathcal{R}})<m$}{
        Output ``FAIL'' and halt.
    }
    Select an $m\times m$ full-rank square submatrix of $\mathbf{M}^{\mathcal{R}}$, denoted by $\mathbf{M}^{\mathcal{R}}_{P}$. \\
    \eIf{there exists a faulty index $i\in[n]$ of $\mathbf{M}^{\mathcal{R}}_{P}$}{
        $u_j\leftarrow i$ and $\mathcal{R}\leftarrow \mathcal{R}\cup\{i\}$.
    }{
        Output ``SUCCESS'' and halt.
    }
}
Output $(u_1,\dots,u_r)$.
\end{algorithm}

\begin{rmk}
    If the choice of the full-rank square submatrix $\mathbf{M}^{\mathcal{R}}_{P}$ is unique for each fixed $\mathcal{R}$, then the output faulty index sequence $(u_1,\dots,u_r)$ is also unique, whenever it exists.
\end{rmk}

\section{Achieving Polynomial Alphabet Size}\label{sec:achieve poly alphabet}

In this section, we prove that random Reed--Solomon codes over polynomial-size alphabets can approach the half-Singleton bound against permutation-insdel errors. The proof combines the algebraic rank framework of~\cite{Con2025anonymous} with the faulty index method of~\cite{Con2024random}. Let $\ell=2k-1+\epsilon n$. By Lemma~\ref{lemma:V A two conditions}, correcting $n-\ell=n-2k+1-\epsilon n$ insdel errors, even after arbitrary coordinate permutations, reduces to verifying the $\mathbf{V}$- and $\mathbf{A}$-matrix rank conditions after random assignment $\mathbf{X}=\boldsymbol{\alpha}$.

We introduce a parameter $\rho$ to separate the two regimes and balance their failure probabilities. For simplicity of presentation, we assume that $\epsilon n$, $\rho n$, and $(\epsilon-\rho)n/2$ are integers; the general case follows by replacing these quantities with suitable floors and ceilings. The next two subsections estimate the failure probabilities in the $\mathbf{V}$-matrix and $\mathbf{A}$-matrix regimes separately.

\subsection{$\mathbf{V}$-Matrix}\label{section:V matrix}

In this section, we analyze the $\mathbf{V}$-matrix regime. We first characterize when the rectangular matrix $\mathbf{V}_{k,\ell,I,J}(\mathbf{X})$ has full column rank and then derive a deterministic rule for selecting a full-rank square submatrix. This selection rule allows us to apply the faulty index framework and to bound the probability that $\mathbf{V}_{k,\ell,I,J}(\mathbf{X})|_{\mathbf{X}=\boldsymbol{\alpha}}$ fails to have full column rank. Before proving the rank characterization, we record how a decomposition of $(I,J)$ induces a decomposition of the restricted pair $(I_P,J_P)$.

\begin{lemma}\label{lemma:decomposition restriction}
    Let $I,J\in[n]^\ell$ be two distinct-element sequences. Assume that $A_1,\dots,A_t,B$ is a decomposition of $(I,J)$. Let $P\subseteq[\ell]$, and $s_1,\dots,s_{t'}\in[t]$ be all indices such that $A_{s_j}\subseteq P$. Then $A_{s_1},\dots,A_{s_{t'}},B'$ is a decomposition of $(I_P,J_P)$, where
    $$
    B'=P\setminus \bigcup_{j=1}^{t'}A_{s_j}
    =
    \bigcup_{i\notin\{s_1,\dots,s_{t'}\}}(A_i\cap P)\cup(B\cap P).
    $$
\end{lemma}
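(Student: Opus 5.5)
We verify the two conditions of Definition~\ref{def:decm} for the proposed partition of $P$, and then Lemma~\ref{lemma:decomposition unique} tells us it is the decomposition. There is a harmless indexing subtlety to fix first. The sub-pair $(I_P,J_P)$ is indexed by $[|P|]$, not by $P$. We identify positions of $P$ with $[|P|]$ via the increasing bijection. This identification preserves the sub-sequences $I_A,J_A$ for every $A\subseteq P$ (their entries are unchanged), so all set-theoretic conditions may be checked directly on subsets of $P$.

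\emph{Partition.} The sets $A_{s_1},\dots,A_{s_{t'}}$ are pairwise disjoint subsets of $P$, and $B'$ is by definition their complement in $P$. So $A_{s_1},\dots,A_{s_{t'}},B'$ partition $P$. Intersecting the partition $A_1,\dots,A_t,B$ of $[\ell]$ with $P$, and discarding the blocks $A_{s_j}$ that lie inside $P$, gives the second expression for $B'$.

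\emph{Condition 1.} For each $j$, the pair $(I_{A_{s_j}},J_{A_{s_j}})$ is a minimal equal sub-pair of $(I,J)$. Both defining conditions, equality as sets and non-equality for every nonempty proper subset, refer only to the entries indexed by $A_{s_j}$ itself. Hence $(I_{A_{s_j}},J_{A_{s_j}})$ is also a minimal equal sub-pair of $(I_P,J_P)$.

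\emph{Condition 2.} This is the only real step. Suppose, for contradiction, that some nonempty $C\subseteq B'$ satisfies $I_C=J_C$ as sets. Choosing such a $C$ of minimal size, we may assume $(I_C,J_C)$ is a minimal equal sub-pair.

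The key point is that the blocks of a decomposition behave like a partition of a ``balanced'' structure. Since $I$ and $J$ are distinct-element sequences, the map $\sigma$ sending position $a$ to the position $b$ with $j_b=i_a$ (when it exists) is a partial injection. A set $A$ has $I_A=J_A$ as sets exactly when $A$ is closed under $\sigma$ and $\sigma^{-1}$, i.e. $A$ is a union of full cycles of $\sigma$. Minimal such sets are single cycles.

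So $C$ is a single cycle of $\sigma$. The $A_i$ are single cycles too, and $B$ contains no cycle. By Lemma~\ref{lemma:decomposition unique}, or directly from the cycle description, $C$ must equal some $A_i$. Then $A_i\subseteq P$, so $i=s_j$ for some $j$, and hence $C\cap B'=\varnothing$. This contradicts $\varnothing\neq C\subseteq B'$. Therefore $(I_{B'},J_{B'})$ is free of equalities.

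\emph{Main obstacle.} The point to argue carefully is that an equal sub-pair of the restriction cannot straddle several partial pieces $A_i\cap P$ together with $B\cap P$. The cycle description of $\sigma$ settles this cleanly: a minimal equal subset is a complete cycle, which cannot be assembled from proper pieces of other cycles or from the acyclic part.
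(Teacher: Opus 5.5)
Your proof is correct. It shares the paper's skeleton: reduce to showing that $(I_{B'},J_{B'})$ is free of equalities, take a minimal equal $C\subseteq B'$, and derive a contradiction. The key step, however, is argued differently.

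\textbf{The paper's route.} It argues by direct element chasing. If a minimal equal $A_0\subseteq B'$ met some block $A_i$, then $C=A_0\cap A_i$ would be a proper subset of $A_i$, because $B'$ contains no whole block. Minimality of $A_i$ then gives some $a\in C$ with $i_a\notin J_C$. So $i_a$ must occur in both $J_{A_0\setminus C}$ and $J_{A_i\setminus C}$, which contradicts that $J$ is a distinct-element sequence. Hence $A_0\subseteq B$, contradicting freeness of $B$.

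\textbf{Your route.} You encode the pair as the partial injection $\sigma$ on positions. You observe that equal sets are exactly the $\sigma$-closed sets, that is, unions of full cycles. So minimal equal sub-pairs are single cycles, and the blocks $A_i$ are precisely the cycles of $\sigma$. The one step you only allude to (``directly from the cycle description'') is that distinct cycles of a partial injection are disjoint. Thus a cycle $C$ either coincides with some $A_i$ or lies entirely in $B$. This is immediate, but it is the crux and should be written out.

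Your alternative justification ``by Lemma~\ref{lemma:decomposition unique}'' also needs one sentence: the cycles of $\sigma$ together with the remaining positions form a decomposition, so by uniqueness they coincide with $A_1,\dots,A_t,B$.

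\textbf{Comparison.} The paper's argument is shorter and stays entirely within the given definitions. Yours gives a global structural picture: the decomposition is the cycle decomposition of $\sigma$, with $B$ the union of its non-cyclic chains. The partial injection attached to $(I_P,J_P)$ is $\sigma$ restricted to positions $a\in P$ with $\sigma(a)\in P$, and its cycles are exactly the cycles of $\sigma$ contained in $P$. From this, the present lemma, the uniqueness lemma, and Corollary~\ref{cor:decomposition delete} all follow essentially at once.
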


\begin{proof}
    By Lemma~\ref{lemma:decomposition unique}, it suffices to show that $(I_{B'},J_{B'})$ is free of equalities. Suppose, for contradiction, that there exists a nonempty subset $A_0\subseteq B'$ such that $I_{A_0}=J_{A_0}$ as sets. By replacing $A_0$ with a minimal nonempty subset if necessary, we may assume that $(I_{A_0},J_{A_0})$ is a minimal equal sub-pair of $(I_P,J_P)$.

    We claim that $A_0\cap A_i=\emptyset$ for every $i\in[t]$. Suppose otherwise, and fix $i\in[t]$ such that $C:=A_0\cap A_i$ is nonempty. Since $B'$ does not contain an entire block $A_i$, the set $C$ is a proper subset of $A_i$. The minimality of $(I_{A_i},J_{A_i})$ gives $I_C\neq J_C$ as sets. Hence there exists $a\in C$ such that $i_a\notin J_C$.

    On the other hand, $I_{A_0}=J_{A_0}$ and $I_{A_i}=J_{A_i}$ as sets. Since $a\in C$ and $i_a\notin J_C$, the symbol $i_a$ must appear both in $J_{A_0\setminus C}$ and in $J_{A_i\setminus C}$. This contradicts the fact that $J$ is a distinct-element sequence, because $A_0\setminus C$ and $A_i\setminus C$ are disjoint.
    
    Thus $A_0$ is disjoint from every $A_i$, so $A_0\subseteq B$. This contradicts that $(I_B,J_B)$ is free of equalities. Therefore $(I_{B'},J_{B'})$ is free of equalities, and the lemma follows.
\end{proof}

Here, for convenience, the indices of $I_P$ and $J_P$ are inherited from the original sequences $I$ and $J$, rather than renumbered. The following example illustrates Lemma~\ref{lemma:decomposition restriction}.

\begin{exam}
    For $\ell=7$, let $I=(1,2,3,4,6,8,9)$ and $J=(2,3,1,6,4,9,7)$. According to Example~\ref{exam:distinct-element sequence}, $A_1=\{1,2,3\}$, $A_2=\{4,5\}$, and $B=\{6,7\}$ form a decomposition of $(I,J)$. Take $P=\{2,3,4,5,6\}$. Then $I_P=(2,3,4,6,8)$ and $J_P=(3,1,6,4,9)$. Observe that $A_1\cap P=\{2,3\}\subsetneq A_1$, $A_2\subseteq P$, and $B\cap P=\{6\}$. Hence the only minimal equal sub-pair that remains intact is $A_2$, and one can check that $A_2$ together with $B'=P\setminus A_2$ gives the decomposition of $(I_P,J_P)$.
\end{exam}

\begin{cor}\label{cor:decomposition delete}
    Let $I,J\in[n]^\ell$ be two distinct-element sequences. Assume that $A_1,\dots,A_t,B$ is a decomposition of $(I,J)$. Let $P\subseteq[\ell]$ have size $\ell'$. Suppose that $A_1',\dots,A_{t'}',B'$ is the decomposition of $(I_P,J_P)$. Then $t-\ell+\ell'\le t'\le t$. 
\end{cor}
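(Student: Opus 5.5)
The plan is to derive the corollary directly from Lemma~\ref{lemma:decomposition restriction} together with the uniqueness in Lemma~\ref{lemma:decomposition unique}. First, apply Lemma~\ref{lemma:decomposition restriction} to the decomposition $A_1,\dots,A_t,B$ of $(I,J)$ and the set $P$. It produces a decomposition of $(I_P,J_P)$ whose minimal equal sub-pairs are exactly the blocks $A_{s_1},\dots,A_{s_{t''}}$ that are entirely contained in $P$. By Lemma~\ref{lemma:decomposition unique}, this decomposition coincides with $A_1',\dots,A_{t'}',B'$ up to reordering. Hence $t'$ equals the number of indices $i\in[t]$ with $A_i\subseteq P$.

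The upper bound $t'\le t$ is then immediate, since $t'$ counts a subset of the blocks $A_1,\dots,A_t$.

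For the lower bound, count the blocks that are \emph{not} contained in $P$. Each such block $A_i$ contains at least one element of $[\ell]\setminus P$. Since the blocks $A_1,\dots,A_t$ are pairwise disjoint, distinct uncontained blocks receive distinct elements of $[\ell]\setminus P$. The number of uncontained blocks is therefore at most $|[\ell]\setminus P|=\ell-\ell'$. Consequently $t'\ge t-(\ell-\ell')=t-\ell+\ell'$.

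There is no real obstacle here. The only point requiring care is invoking uniqueness, so that the $t'$ of the given decomposition equals the block count furnished by Lemma~\ref{lemma:decomposition restriction}, rather than being some other decomposition's count.
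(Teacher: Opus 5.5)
Your proof is correct and follows essentially the same route as the paper. You use Lemma~\ref{lemma:decomposition restriction} to identify the surviving minimal equal sub-pairs as exactly the blocks $A_i\subseteq P$, then bound the number of destroyed blocks by $\ell-\ell'$ via pairwise disjointness; your explicit appeal to Lemma~\ref{lemma:decomposition unique} just makes precise a uniqueness step the paper leaves implicit.
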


\begin{proof}
    The upper bound $t'\le t$ follows directly from Lemma~\ref{lemma:decomposition restriction}, since the only minimal equal sub-pairs that can remain in the decomposition of $(I_P,J_P)$ are those original blocks $A_i$ that are completely contained in $P$. 

    For the lower bound, note that an original block $A_i$ fails to remain a minimal equal sub-pair of $(I_P,J_P)$ only if $A_i$ intersects the deleted set $[\ell]\setminus P$. Since $|[\ell]\setminus P|=\ell-\ell'$ and the sets $A_1,\dots,A_t$ are pairwise disjoint, at most $\ell-\ell'$ of the original blocks can be destroyed. This implies that at least $t-\ell+\ell'$ minimal equal sub-pairs remain, and thus $t'\ge t-\ell+\ell'$.
\end{proof}

The following lemma gives an equivalent condition for $\mathbf{V}_{k,\ell,I,J}(\mathbf{X})$ to have full column rank. It generalizes Lemma~\ref{lemma:Vmatrix square full rank} and Lemma~\ref{lemma:Vmatrix square converse} from the square case to the rectangular case. For any
$P\subseteq[\ell]$ of size $\ell'$, we denote by $\mathbf{V}_{k,\ell',I_P,J_P}(\mathbf{X})$ the submatrix of $\mathbf{V}_{k,\ell,I,J}(\mathbf{X})$ consisting of the rows indexed by $P$.

\begin{lemma}\label{lemma:Vmatrix ell equivalent}
    Let $I,J\in[n]^\ell$ be two distinct-element sequences. Assume that $A_1,\dots,A_t,B$ is a decomposition of $(I,J)$. Then $\mathbf{V}_{k,\ell,I,J}(\mathbf{X})$ has full column rank if and only if $t\le \ell-k+1$.
\end{lemma}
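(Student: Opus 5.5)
Since $\mathbf{V}_{k,\ell,I,J}(\mathbf{X})$ has $2k-1$ columns, it has full column rank exactly when some $(2k-1)\times(2k-1)$ row-submatrix $\mathbf{V}_{k,2k-1,I_P,J_P}(\mathbf{X})$, with $P\subseteq[\ell]$ and $|P|=2k-1$, has nonzero determinant. (If $\ell<2k-1$, full column rank is impossible and also $t\le\ell-k+1$ can fail or hold; I will assume $\ell\ge 2k-1$ as in the intended application, and otherwise treat the degenerate case separately.) The plan is therefore to reduce both directions to the square criteria, Lemma~\ref{lemma:Vmatrix square full rank} and Lemma~\ref{lemma:Vmatrix square converse}. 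The link between them is the behaviour of decompositions under restriction, given by Lemma~\ref{lemma:decomposition restriction} and Corollary~\ref{cor:decomposition delete}.

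\emph{Sufficiency ($t\le \ell-k+1\Rightarrow$ full rank).} I will choose $P$ of size $2k-1$ so that few blocks $A_i$ survive entirely inside $P$. By Lemma~\ref{lemma:decomposition restriction}, the number $t'$ of minimal equal sub-pairs of $(I_P,J_P)$ equals the number of blocks $A_i\subseteq P$. To destroy blocks, I delete a set $D=[\ell]\setminus P$ of size $\ell-2k+1$ that contains one element from each of as many distinct blocks as possible. If $t\ge \ell-2k+1$, I pick one element from each of $\ell-2k+1$ distinct blocks; then $t'=t-(\ell-2k+1)\le (\ell-k+1)-(\ell-2k+1)=k$. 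If $t<\ell-2k+1$, I hit every block and fill the rest of $D$ arbitrarily, which gives $t'=0\le k$. In either case Lemma~\ref{lemma:Vmatrix square full rank} shows $\det\mathbf{V}_{k,2k-1,I_P,J_P}\ne0$, so the matrix has full column rank. The one point to check is that $|P|=2k-1$ is always achievable, which holds since $\ell\ge 2k-1$.

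\emph{Necessity (full rank $\Rightarrow t\le\ell-k+1$).} I argue by contrapositive: suppose $t\ge \ell-k+2$. For any $P$ of size $2k-1$, Corollary~\ref{cor:decomposition delete} gives $t'\ge t-\ell+2k-1\ge k+1$. Lemma~\ref{lemma:Vmatrix square converse} then shows that every maximal minor vanishes identically, so the rank is less than $2k-1$.

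I do not expect a serious obstacle. The step needing the most care is the counting in the sufficiency direction, namely confirming that deleting one element from a block really removes that block from the restricted decomposition and cannot create new minimal equal sub-pairs. This is exactly what Lemma~\ref{lemma:decomposition restriction} guarantees, so the argument amounts to applying that lemma carefully together with the two square-case lemmas.
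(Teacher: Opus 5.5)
Your proof is correct and matches the paper's argument: for sufficiency you delete one row from each of $\min\{\ell-2k+1,t\}$ blocks and invoke Lemma~\ref{lemma:Vmatrix square full rank} via Lemma~\ref{lemma:decomposition restriction}, and for necessity you apply Corollary~\ref{cor:decomposition delete} and Lemma~\ref{lemma:Vmatrix square converse} to every $(2k-1)$-row minor. Your caveat about $\ell\ge 2k-1$ is well taken: for $\ell<2k-1$ the equivalence fails (e.g.\ at $t=0$), so that case must be excluded rather than treated separately, and the paper's proof makes the same tacit assumption, which holds for $\ell=2k-1+\epsilon n$.
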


\begin{proof}
    First assume that $t\le \ell-k+1$. We will choose a subset $P\subseteq[\ell]$ of size $2k-1$ such that the decomposition of $(I_P,J_P)$ contains at most $k$ minimal equal sub-pairs. This will imply that the square submatrix $\mathbf{V}_{k,2k-1,I_P,J_P}(\mathbf{X})$ has full rank by Lemma~\ref{lemma:Vmatrix square full rank}, and hence $\mathbf{V}_{k,\ell,I,J}(\mathbf{X})$ has full column rank.

    Since we need to select $2k-1$ rows, we delete $\ell-(2k-1)$ indices from $[\ell]$. Let $s=\min\{\ell-2k+1,t\}$. Choose one index from each of $A_1,\dots,A_{s}$, and then, if necessary, choose additional indices arbitrarily until exactly $\ell-2k+1$ indices have been chosen. Let $\overline{P}$ be the set of chosen indices and set $P=[\ell]\setminus\overline{P}$. Then $|P|=2k-1$, and the blocks $A_1,\dots,A_{s}$ are not completely contained in $P$. By Lemma~\ref{lemma:decomposition restriction}, these blocks do not remain minimal equal sub-pairs in the decomposition of $(I_P,J_P)$. This implies that the number of minimal equal sub-pairs in the decomposition of $(I_P,J_P)$ is at most $t-s\le k$. This shows that $\mathbf{V}_{k,\ell,I,J}(\mathbf{X})$ has full column rank.

    Conversely, assume that $t\ge \ell-k+2$. We show that $\mathbf{V}_{k,\ell,I,J}(\mathbf{X})$ does not have full column rank. It suffices to show that every square submatrix formed by choosing $2k-1$ rows is singular. Let $P\subseteq[\ell]$ be any subset of size $2k-1$. Suppose that the decomposition of $(I_P,J_P)$ contains $t'$ minimal equal sub-pairs. By Corollary~\ref{cor:decomposition delete}, we have
    $t'\ge t-\ell+(2k-1)\ge k+1$.
    Therefore, by Lemma~\ref{lemma:Vmatrix square converse}, the square matrix $\mathbf{V}_{k,2k-1,I_P,J_P}(\mathbf{X})$ is singular. Since this holds for every such $P$, the rectangular matrix $\mathbf{V}_{k,\ell,I,J}(\mathbf{X})$ cannot have full column rank.
\end{proof}

The proof of Lemma~\ref{lemma:Vmatrix ell equivalent} yields a deterministic procedure for selecting a full-rank square submatrix from a full-column-rank $\mathbf{V}$-matrix; see Algorithm~\ref{alg:select full rank submatrix}.

\begin{algorithm}[h]
\caption{Select a Full-Rank Square $\mathbf{V}$-Submatrix}
\label{alg:select full rank submatrix}
\KwIn{$n,k,\ell\in \mathbb{Z}^+$ with $\ell\ge 2k-1$, distinct-element sequences $I,J\in[n]^\ell$, a positive integer $t\le \ell-k+1$, and a decomposition $A_1,\dots,A_t,B$ of $(I,J)$.}
\KwOut{A subset $P\subseteq[\ell]$ of size $2k-1$ such that $\mathbf{V}_{k,2k-1,I_P,J_P}(\mathbf{X})$ has full rank.}

$\overline{P}\leftarrow \varnothing$. Here $\overline{P}$ indicates the set of deleted rows.\\
$s\leftarrow \textup{min}\{\ell-2k+1,t\}$. \\
\For{$j\leftarrow 1$ \KwTo $s$}{
    Select the smallest element $a_j\in A_j$. \\
    $\overline{P}\leftarrow \overline{P}\cup\{a_j\}$. 
}
\While{$|\overline{P}|<\ell-2k+1$}{
    Select the smallest element $a\in[\ell]\setminus \overline{P}$. \\
    $\overline{P}\leftarrow \overline{P}\cup\{a\}$. \\
}
$P\leftarrow[\ell]\setminus \overline{P}$.  Here $P$ indicates the set of selected rows.\\ 
Output $P$ and the corresponding matrix $\mathbf{V}_{k,2k-1,I_P,J_P}(\mathbf{X})$.
\end{algorithm}

\begin{rmk}
    It is worth noting that Algorithm~\ref{alg:select full rank submatrix} applies to any $\ell\ge 2k-1$. The strategy of always choosing the smallest available element makes the output subset $P$ uniquely determined by the input data.
\end{rmk}

The following corollary follows directly from Corollary~\ref{cor:decomposition delete} and Lemma~\ref{lemma:Vmatrix ell equivalent}. It shows that deleting rows does not destroy the full column rank as long as enough rows remain.

\begin{cor}\label{cor:Vmatrix delete full rank}
    Let $I,J\in[n]^\ell$ be two distinct-element sequences. Assume that $A_1,\dots,A_t,B$ is a decomposition of $(I,J)$. Let $P\subseteq[\ell]$ be a subset of size $\ell'$. If $t\le \ell'-k+1$, then $\mathbf{V}_{k,\ell',I_P,J_P}(\mathbf{X})$ has full column rank.
\end{cor}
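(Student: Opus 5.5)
The plan is to reduce Corollary~\ref{cor:Vmatrix delete full rank} to the characterization in Lemma~\ref{lemma:Vmatrix ell equivalent}, applied to the restricted pair $(I_P,J_P)$ rather than to $(I,J)$.

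First, note that $I_P$ and $J_P$ are again distinct-element sequences of length $\ell'$, since they are subsequences of distinct-element sequences. By Lemma~\ref{lemma:decomposition unique}, the pair $(I_P,J_P)$ has a unique decomposition $A_1',\dots,A_{t'}',B'$. Corollary~\ref{cor:decomposition delete} then gives the upper bound $t'\le t$. Intuitively, deleting rows can only destroy minimal equal sub-pairs and never create new ones; this is the content of Lemma~\ref{lemma:decomposition restriction}.

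Combining this with the hypothesis, we get $t'\le t\le \ell'-k+1$. Now apply Lemma~\ref{lemma:Vmatrix ell equivalent} with $\ell'$ in place of $\ell$ and the pair $(I_P,J_P)$ with its decomposition. Its sufficient direction requires exactly $t'\le \ell'-k+1$, so it yields that $\mathbf{V}_{k,\ell',I_P,J_P}(\mathbf{X})$ has full column rank. This matrix is precisely the row-submatrix of $\mathbf{V}_{k,\ell,I,J}(\mathbf{X})$ indexed by $P$, since the rows of a $\mathbf{V}$-matrix depend only on the corresponding entries of $I$ and $J$.

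The only point needing care is that Lemma~\ref{lemma:Vmatrix ell equivalent} implicitly assumes $\ell'\ge 2k-1$, because it selects $2k-1$ rows. This follows from the hypothesis whenever $t\ge k$. For small $t$, the hypothesis $t\le \ell'-k+1$ alone does not force $\ell'\ge 2k-1$. In that case I would simply assume $\ell'\ge 2k-1$ (as in the intended application), which is necessary anyway for an $\ell'\times(2k-1)$ matrix to have full column rank. Apart from this caveat there is no real obstacle; the argument is a direct chaining of the two earlier results.
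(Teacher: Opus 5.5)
Your proof is correct and is exactly the paper's argument, which obtains this corollary directly by combining the bound $t'\le t$ from Corollary~\ref{cor:decomposition delete} with the sufficient direction of Lemma~\ref{lemma:Vmatrix ell equivalent} applied to $(I_P,J_P)$. Your caveat is also well taken: both that lemma and this corollary tacitly need $\ell'\ge 2k-1$. For instance, $t=0$ and $\ell'=k-1$ satisfy the hypothesis, yet a $(k-1)\times(2k-1)$ matrix cannot have full column rank. The condition does hold in the only place the corollary is used, namely the rank argument for $\mathbf{V}^{\mathcal{R}}_{k,\ell,I,J}(\mathbf{X})$, where $\ell'\ge 2k-1+\rho n$.
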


Under the notation in Definition~\ref{def:matrix retriction}, for any $\mathcal{R}\subseteq[n]$, let ${\mathbf{V}}^{\mathcal{R}}_{k,\ell,I,J}(\mathbf{X})$ denote the submatrix of $\mathbf{V}_{k,\ell,I,J}(\mathbf{X})$ obtained by deleting all rows that contain some variable $X_i$ with $i\in \mathcal{R}$. The following corollary shows that the $\mathbf{V}$-matrix remains full column rank after deleting all rows involving a small set of variables.

\begin{cor}\label{cor:V^R matrix has full rank}
    Let $0<\rho<\epsilon$, $\ell=2k-1+\epsilon n$, and $r=(\epsilon-\rho)n/2$. Let $I,J\in[n]^\ell$ be two distinct-element sequences. Suppose that $A_1,\dots,A_t,B$ is a decomposition of $(I,J)$ with $t\le k+\rho n$. Then, for any $\mathcal{R}\subseteq[n]$ with $|\mathcal{R}|\le r$, the matrix ${\mathbf{V}}^{\mathcal{R}}_{k,\ell,I,J}(\mathbf{X})$ has the full column rank.
\end{cor}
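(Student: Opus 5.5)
The plan is to reduce the statement to Corollary~\ref{cor:Vmatrix delete full rank} by counting how many rows are removed when we pass from $\mathbf{V}_{k,\ell,I,J}(\mathbf{X})$ to $\mathbf{V}^{\mathcal{R}}_{k,\ell,I,J}(\mathbf{X})$. Each row $a\in[\ell]$ involves exactly the variables $X_{i_a}$ and $X_{j_a}$ (the first column is constant). Hence the row is deleted if and only if $i_a\in\mathcal{R}$ or $j_a\in\mathcal{R}$. Since $I$ and $J$ are distinct-element sequences, each index $x\in\mathcal{R}$ appears at most once in $I$ and at most once in $J$. Therefore $x$ causes at most two rows to be deleted, and the total number of deleted rows is at most $2|\mathcal{R}|\le 2r=(\epsilon-\rho)n$.

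Let $P\subseteq[\ell]$ be the set of surviving rows, so that $\mathbf{V}^{\mathcal{R}}_{k,\ell,I,J}(\mathbf{X})=\mathbf{V}_{k,\ell',I_P,J_P}(\mathbf{X})$ with
\[
\ell'=|P|\ge \ell-(\epsilon-\rho)n=2k-1+\rho n .
\]
It then suffices to check the hypothesis of Corollary~\ref{cor:Vmatrix delete full rank}:
\[
\ell'-k+1\ge k+\rho n\ge t,
\]
which holds by the assumption $t\le k+\rho n$. That corollary then gives full column rank of $\mathbf{V}_{k,\ell',I_P,J_P}(\mathbf{X})$, which is exactly the claim. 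Implicitly, $\ell'\ge 2k-1$, so the matrix has at least as many rows as its $2k-1$ columns.

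No step is a genuine obstacle. The only points needing care are the bookkeeping that a single faulty variable kills at most two rows, which uses distinctness within $I$ and within $J$ (a row may contain both $X_{i_a}$ and $X_{j_a}$, but that only helps the count), and the integrality convention for $(\epsilon-\rho)n/2$ adopted in the section.
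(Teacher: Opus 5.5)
Your proposal is correct and follows essentially the same route as the paper: you bound the number of deleted rows by $2|\mathcal{R}|\le(\epsilon-\rho)n$ using that $I$ and $J$ are distinct-element sequences, deduce $\ell'\ge 2k-1+\rho n$, and then invoke Corollary~\ref{cor:Vmatrix delete full rank} via $t\le k+\rho n\le \ell'-k+1$. One harmless quibble: a row involves ``exactly'' $X_{i_a}$ and $X_{j_a}$ only when $k\ge 2$, but you only use the resulting upper bound on deleted rows, which holds for every $k$.
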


\begin{proof}
    Define $P:=[\ell]\setminus\{j\in[\ell]: I_j\in \mathcal{R} \textup{ or } J_j\in \mathcal{R}\}$ and let $\ell'=|P|$. Since $|\mathcal{R}|\le r$ and each index in $\mathcal{R}$ appears at most once in $I$ and once in $J$, at most $2r$ rows are deleted by the definition of ${\mathbf{V}}^{\mathcal{R}}_{k,\ell,I,J}(\mathbf{X})$, and hence
    $$
    \ell'\ge \ell-2r
    =
    2k-1+\epsilon n-(\epsilon-\rho)n
    =
    2k-1+\rho n.
    $$
    Therefore, $t\le k+\rho n\le \ell'-k+1$. By Corollary~\ref{cor:Vmatrix delete full rank}, the matrix ${\mathbf{V}}^{\mathcal{R}}_{k,\ell,I,J}(\mathbf{X})$ has the full column rank.
\end{proof}

Thus, under the conditions of Corollary~\ref{cor:V^R matrix has full rank}, we may apply Algorithm~\ref{alg:select full rank submatrix} to ${\mathbf{V}}^{\mathcal{R}}_{k,\ell,I,J}(\mathbf{X})$. For simplicity, when the input matrix of Algorithm~\ref{alg:select full rank submatrix} is ${\mathbf{V}}^{\mathcal{R}}_{k,\ell,I,J}(\mathbf{X})$, we denote the selected full-rank square submatrix by ${\mathbf{V}}^{\mathcal{R}}_{k,\ell,I,J,P}(\mathbf{X})$, where $P=P(\mathcal{R})\subseteq[\ell]$ is the corresponding set of selected rows.

Next, we combine Algorithm~\ref{alg:faulty index output} with Algorithm~\ref{alg:select full rank submatrix}. Specifically, in Line 6 of Algorithm~\ref{alg:faulty index output}, we apply Algorithm~\ref{alg:select full rank submatrix} to the current matrix ${\mathbf{V}}^{\mathcal{R}}_{k,\ell,I,J}(\mathbf{X})$ to select a full-rank square submatrix. The following lemma describes the possible outputs of the combined procedure when applied to $\mathbf{V}_{k,\ell,I,J}(\mathbf{X})$.

\begin{lemma}\label{lemma:behavior of alg V}
    Let $0<\rho<\epsilon$, $\ell=2k-1+\epsilon n$, and $r=(\epsilon-\rho)n/2$. Let $I,J\in[n]^\ell$ be two distinct-element sequences. Suppose that $A_1,\dots,A_t,B$ is the decomposition of $(I,J)$ with $t\le k+\rho n$. Then, for any pairwise distinct $\alpha_1,\dots,\alpha_n\in\mathbb{F}_q$, running Algorithm~\ref{alg:faulty index output} on the matrix $\mathbf{V}_{k,\ell,I,J}(\mathbf{X})$, while using Algorithm~\ref{alg:select full rank submatrix} on Line 6 to select the square submatrix, yields one of the following two outcomes:
    \begin{itemize}
        \item[1.] Algorithm~\ref{alg:faulty index output} outputs ``SUCCESS''. In this case, $\mathbf{V}_{k,\ell,I,J}(\mathbf{X})|_{\mathbf{X}=\boldsymbol{\alpha}}$ has full column rank.
        \item[2.] Algorithm~\ref{alg:faulty index output} outputs a faulty index sequence $(u_1,\dots,u_r)\in[n]^r$. The indices $u_1,\dots,u_r$ are distinct. Moreover, for each $j\in[r]$, $u_j$ is the faulty index of ${\mathbf{V}}^{\mathcal{R}_j}_{k,\ell,I,J,P_j}(\mathbf{X})$, where $\mathcal{R}_j:=\{u_1,\dots,u_{j-1}\}$ and ${\mathbf{V}}^{\mathcal{R}_j}_{k,\ell,I,J,P_j}(\mathbf{X})$ is the full-rank square submatrix selected by applying Algorithm~\ref{alg:select full rank submatrix} to ${\mathbf{V}}^{\mathcal{R}_j}_{k,\ell,I,J}(\mathbf{X})$.
    \end{itemize}
\end{lemma}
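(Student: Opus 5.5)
The plan is to trace Algorithm~\ref{alg:faulty index output} step by step and show that, under the hypotheses, the ``FAIL'' branch never occurs. Then the only possible terminations are ``SUCCESS'' or completing all $r$ iterations. The key input is Corollary~\ref{cor:V^R matrix has full rank}. At the start of iteration $j$, the set $\mathcal{R}=\mathcal{R}_j=\{u_1,\dots,u_{j-1}\}$ has size at most $j-1\le r-1<r$, so ${\mathbf{V}}^{\mathcal{R}_j}_{k,\ell,I,J}(\mathbf{X})$ has full column rank $2k-1$. Hence the test on Line 3 fails and ``FAIL'' is never output.

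For the selection step, I would first note that ${\mathbf{V}}^{\mathcal{R}_j}_{k,\ell,I,J}(\mathbf{X})$ is itself a $\mathbf{V}$-matrix $\mathbf{V}_{k,\ell',I_{P'},J_{P'}}(\mathbf{X})$, where $P'$ is the set of surviving rows and $\ell'\ge 2k-1+\rho n$. By Lemma~\ref{lemma:decomposition restriction}, the pair $(I_{P'},J_{P'})$ has a decomposition with $t'\le t\le k+\rho n\le \ell'-k+1$ minimal equal sub-pairs. Thus the input hypotheses of Algorithm~\ref{alg:select full rank submatrix} are met, and by the proof of Lemma~\ref{lemma:Vmatrix ell equivalent} its output $P_j$ gives a $(2k-1)\times(2k-1)$ submatrix with nonzero determinant. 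Some care is needed here: the row indices are inherited rather than renumbered, and the algorithm is run with the uniquely determined decomposition (Lemma~\ref{lemma:decomposition unique}). With this, Line 6 is well defined and deterministic, and the selected matrix is ${\mathbf{V}}^{\mathcal{R}_j}_{k,\ell,I,J,P_j}(\mathbf{X})$.

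Next I would analyze the branch. Either the selected square matrix has a faulty index $i$, in which case $u_j=i$ is exactly the faulty index of ${\mathbf{V}}^{\mathcal{R}_j}_{k,\ell,I,J,P_j}(\mathbf{X})$ as claimed, or it has none. In the latter case its determinant, which is nonzero before any assignment, stays nonzero after assigning all of $X_1,\dots,X_n$. So ${\mathbf{V}}^{\mathcal{R}_j}_{k,\ell,I,J,P_j}(\mathbf{X})|_{\mathbf{X}=\boldsymbol{\alpha}}$ is an invertible $(2k-1)\times(2k-1)$ submatrix of $\mathbf{V}_{k,\ell,I,J}(\mathbf{X})|_{\mathbf{X}=\boldsymbol{\alpha}}$. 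This gives full column rank and proves Outcome 1. The faulty index is unique when it exists: the determinant is nonzero at stage $i-1$ and zero at stage $i$, and the first such $i$ is well defined, since once the determinant becomes zero it stays zero under further substitution.

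Finally, distinctness of $u_1,\dots,u_r$. If $u_j=i$ is faulty for the $j$-th selected matrix, that matrix must genuinely depend on $X_i$; otherwise substituting $X_i$ could not change the determinant from nonzero to zero. Hence some row of that matrix involves $X_i$. Since every row of ${\mathbf{V}}^{\mathcal{R}_j}$ avoids all variables indexed by $\mathcal{R}_j$, we get $i\notin\mathcal{R}_j$, so $u_j\notin\{u_1,\dots,u_{j-1}\}$. I expect the main point requiring care to be the verification in the second paragraph that Algorithm~\ref{alg:select full rank submatrix} applies to the restricted matrix with correct parameters. The rest is bookkeeping.
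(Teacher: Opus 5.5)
Your proposal is correct and follows the paper's proof essentially step for step: Corollary~\ref{cor:V^R matrix has full rank} rules out ``FAIL''; a ``SUCCESS'' yields an invertible evaluated square submatrix; and distinctness holds because a faulty index must appear in some surviving row, so it cannot lie in $\mathcal{R}_j$. Your additional checks make explicit what the paper leaves to the remark before the lemma, namely that Algorithm~\ref{alg:select full rank submatrix} applies to the restricted matrix via Lemma~\ref{lemma:decomposition restriction}, and that the faulty index is unique.
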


\begin{proof}
    We first show that Algorithm~\ref{alg:faulty index output} never outputs ``FAIL''. Since $|\mathcal{R}_j|\le r$, by Corollary~\ref{cor:V^R matrix has full rank}, the matrix ${\mathbf{V}}^{\mathcal{R}_j}_{k,\ell,I,J}(\mathbf{X})$ has the full column rank. Therefore, the ``FAIL'' condition in Algorithm~\ref{alg:faulty index output} is never triggered.

    Next, suppose that Algorithm~\ref{alg:faulty index output} outputs ``SUCCESS''. Then, for some $j\in[r]$, the faulty index of the square matrix ${\mathbf{V}}^{\mathcal{R}_j}_{k,\ell,I,J,P_j}(\mathbf{X})$ does not exist. By Definition~\ref{def:faulty index}, the square submatrix ${\mathbf{V}}^{\mathcal{R}_j}_{k,\ell,I,J,P_j}(\mathbf{X})|_{\mathbf{X}=\boldsymbol{\alpha}}$ has full rank. Since this is a submatrix of $\mathbf{V}_{k,\ell,I,J}(\mathbf{X})|_{\mathbf{X}=\boldsymbol{\alpha}}$, the latter matrix has full column rank.

    Finally, suppose that the algorithm outputs a faulty index sequence $(u_1,\dots,u_r)\in[n]^r$. By the algorithm, for each $j\in[r]$, the index $u_j$ is the faulty index of ${\mathbf{V}}^{\mathcal{R}_j}_{k,\ell,I,J,P_j}(\mathbf{X})$, for $\mathcal{R}_j=\{u_1,\dots,u_{j-1}\}$. Indeed, for each $i\in \mathcal{R}_j$, all rows involving $X_i$ have already been deleted from ${\mathbf{V}}^{\mathcal{R}_j}_{k,\ell,I,J}(\mathbf{X})$, which implies that the index $i$ cannot be a faulty index of this submatrix. Therefore, $u_j\notin \mathcal{R}_{j}$ for each $j\in [r]$, which implies that the indices $u_1,\dots,u_r$ are distinct.
\end{proof}

The next lemma bounds the probability that Algorithm~\ref{alg:faulty index output} outputs a prescribed faulty index sequence over random $\boldsymbol{\alpha}$. The proof follows the same approach as~\cite[Lemma 19]{Con2024random}.

\begin{lemma}\label{lemma:probability V calculation fixed faulty}
    Under the notation and conditions in Lemma~\ref{lemma:behavior of alg V}, suppose that $\boldsymbol{\alpha}=(\alpha_1,\dots,\alpha_n)$ is chosen uniformly at random from the set of all $n$-tuples of distinct elements in $\mathbb{F}_q$. Then, for any fixed faulty index sequence $(u_1,\dots,u_r)\in[n]^r$, the probability that Algorithm~\ref{alg:faulty index output} outputs $(u_1,\dots,u_r)$ on input $\mathbf{V}_{k,\ell,I,J}(\mathbf{X})$, $\boldsymbol{\alpha}$, and $r$ is at most $\left(\frac{2(k-1)}{q-n+1}\right)^r$.
\end{lemma}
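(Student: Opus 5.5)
The plan is to follow the chain-rule argument of \cite[Lemma 19]{Con2024random}. Fix the target sequence $(u_1,\dots,u_r)$; by Lemma~\ref{lemma:behavior of alg V} we may assume the $u_j$ are distinct, since otherwise the probability is $0$. Set $\mathcal{R}_j=\{u_1,\dots,u_{j-1}\}$. Because Algorithm~\ref{alg:select full rank submatrix} chooses rows deterministically (always the smallest available elements), the selected square matrix $\mathbf{M}_j:={\mathbf{V}}^{\mathcal{R}_j}_{k,\ell,I,J,P_j}(\mathbf{X})$ depends only on $(I,J)$ and $\mathcal{R}_j$, not on $\boldsymbol{\alpha}$. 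Hence the output is $(u_1,\dots,u_r)$ exactly when the events $E_j$: ``$u_j$ is the faulty index of $\mathbf{M}_j$'' all occur. Each $\mathbf{M}_j$ is a fixed $(2k-1)\times(2k-1)$ symbolic matrix with nonzero determinant $D_j$, by Corollary~\ref{cor:V^R matrix has full rank} and Lemma~\ref{lemma:Vmatrix ell equivalent}.

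Next bound $\deg_{X_i}(D_j)$. Each row of a $\mathbf{V}$-matrix contains $X_{i_a}$ only in columns $2,\dots,k$ and $X_{j_a}$ only in columns $k+1,\dots,2k-1$, so a variable enters a row with degree at most $k-1$. Since $I$ and $J$ are distinct-element sequences, a given variable $X_i$ occurs in at most two rows: once as an $I$-entry and once as a $J$-entry. Therefore $\deg_{X_i}(D_j)\le 2(k-1)$.

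The main step is turning the events $E_j$ into a product of conditional probabilities. The event $E_j$ says that $D_j(\boldsymbol{\alpha}_{<u_j},X_{u_j},X_{>u_j})$ is a nonzero polynomial but vanishes after substituting $X_{u_j}=\alpha_{u_j}$. Conditioned on $\boldsymbol{\alpha}_{<u_j}$, the univariate restriction in $X_{u_j}$ (with later variables generic) is nonzero of degree at most $2(k-1)$. The value $\alpha_{u_j}$ is uniform over at least $q-n+1$ field elements distinct from the earlier ones. So this conditional probability is at most $2(k-1)/(q-n+1)$, by the univariate root count underlying Lemma~\ref{lemma:Schwarz-Zippel variation}.

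The obstacle is ordering: the $u_j$ need not be increasing, so we cannot condition along $j$ naively. To handle it, reorder the events by the value of $u_j$, processing indices $u_{(1)}<\dots<u_{(r)}$. The event associated with $u_{(m)}$ depends only on $\boldsymbol{\alpha}_{\le u_{(m)}}$. Its conditional probability given $\boldsymbol{\alpha}_{<u_{(m)}}$, which determines all events with smaller index, is at most $2(k-1)/(q-n+1)$ uniformly. Here the matrices are fixed in advance and do not depend on $\boldsymbol{\alpha}$. Multiplying these $r$ conditional bounds by the chain rule gives the claimed bound $\left(\frac{2(k-1)}{q-n+1}\right)^r$.
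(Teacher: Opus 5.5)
Your proposal is correct and follows essentially the same argument as the paper. The deterministic selection rule fixes the matrices $\mathbf{M}_j$, the events are reordered by the values $u_{\sigma(1)}<\cdots<u_{\sigma(r)}$, and each conditional factor given $\boldsymbol{\alpha}_{<u_j}$ is bounded using a nonzero univariate polynomial in $X_{u_j}$ of degree at most $2(k-1)$, whose argument $\alpha_{u_j}$ is uniform over at least $q-n+1$ values. Your explicit justification of the degree bound (each variable occurs in at most one $I$-row and one $J$-row, with degree at most $k-1$ in each) spells out a step the paper only asserts.
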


\begin{proof}
    For simplicity, denote $\mathbf{M}_j:={\mathbf{V}}^{\mathcal{R}_j}_{k,\ell,I,J,P_j}(\mathbf{X})$. Since the faulty index sequence $(u_1,\dots,u_r)$ is fixed, the matrices $\mathbf{M}_j$ are also fixed by the deterministic selection rule. Let $E_j$ be the event that
    $\det\left(\mathbf{M}_j|_{\mathbf{X}_{<u_j}=\boldsymbol{\alpha}_{<u_j}}\right)\neq 0$
    and
    $\det\left(\mathbf{M}_j|_{\mathbf{X}_{\le u_j}=\boldsymbol{\alpha}_{\le u_j}}\right)=0$.
    In other words, $E_j$ is the event that the faulty index of $\mathbf{M}_j$ equals $u_j$. Thus, if Algorithm~\ref{alg:faulty index output} outputs $(u_1,\dots,u_r)$, then all events $E_1,\dots,E_r$ occur. It therefore suffices to prove
    $$
    \textup{Pr}(E_1\wedge\cdots\wedge E_r)
    \le
    \left(\frac{2(k-1)}{q-n+1}\right)^r.
    $$

    Directly estimating conditional probabilities in the original order is inconvenient, because the faulty indices need not appear in increasing order. We therefore reorder them according to their values. Let $\sigma$ be a permutation of $[r]$ such that $u_{\sigma(1)}<\cdots<u_{\sigma(r)}$. For $h\in[r]$, define
    $F_h:=E_{\sigma(1)}\wedge\cdots\wedge E_{\sigma(h)}$,
    and let $F_0$ be the sure event. Then $\textup{Pr}(E_1\wedge\cdots\wedge E_r)=\textup{Pr}(F_r)$. If $\textup{Pr}(F_r)=0$, there is nothing to prove. Otherwise, $\textup{Pr}(F_h)>0$ for every $h\in[0,r]$, and hence
    $$
    \textup{Pr}(E_1\wedge\cdots\wedge E_r)
    =
    \textup{Pr}(F_r)
    =
    \prod_{h=1}^{r}
    \frac{\textup{Pr}(F_h)}{\textup{Pr}(F_{h-1})}.
    $$
    In what follows, we will show that $\textup{Pr}(F_h)/\textup{Pr}(F_{h-1})\le 2(k-1)/(q-n+1)$ for every $h\in[r]$. Once this is established, the proof is completed.

    Fix $h\in[r]$ and write $j=\sigma(h)$. Let $S$ be the set of all $\boldsymbol{\beta}=(\beta_1,\dots,\beta_{u_j-1})\in\mathbb{F}_q^{u_j-1}$ such that $\textup{Pr}((\mathbf{X}_{<u_j}=\boldsymbol{\beta})\wedge F_{h-1})>0$. By conditioning on the values of $\mathbf{X}_{<u_j}$ and using the definition of $S$, we have
    \begin{align*}
        \frac{\textup{Pr}(F_h)}{\textup{Pr}(F_{h-1})}
        &=
        \frac{\sum_{\boldsymbol{\beta}\in S}\textup{Pr}((\mathbf{X}_{<u_j}=\boldsymbol{\beta})\wedge F_{h-1}\wedge E_j)}
        {\sum_{\boldsymbol{\beta}\in S}\textup{Pr}((\mathbf{X}_{<u_j}=\boldsymbol{\beta})\wedge F_{h-1})} \\
        &\le
        \max_{\boldsymbol{\beta}\in S}
        \textup{Pr}(E_j\mid\mathbf{X}_{<u_j}=\boldsymbol{\beta}).
    \end{align*}
    It remains to show that for every fixed $\boldsymbol{\beta}\in S$,
    $\textup{Pr}(E_j\mid\mathbf{X}_{<u_j}=\boldsymbol{\beta})\le 2(k-1)/(q-n+1)$.

    Fix such a $\boldsymbol{\beta}$ and set $Q:=\det\left(\mathbf{M}_j|_{\mathbf{X}_{<u_j}=\boldsymbol{\beta}}\right)$. If $Q=0$, then $E_j$ cannot occur under the conditioning $\mathbf{X}_{<u_j}=\boldsymbol{\beta}$. Suppose now that $Q\neq 0$. We view $Q$ as a polynomial in $X_{u_j+1},\dots,X_n$, whose coefficients are polynomials in $\mathbb{F}_q[X_{u_j}]$. Since $Q\neq 0$, there exists a nonzero coefficient $Q_0\in\mathbb{F}_q[X_{u_j}]$. Moreover, since $\mathbf{M}_j$ is a square submatrix of a $\mathbf{V}$-matrix, we have $\deg(Q_0)\le 2(k-1)$.

    Conditioned on $\mathbf{X}_{<u_j}=\boldsymbol{\beta}$, the value $\alpha_{u_j}$ is chosen uniformly from at least $q-n+1$ available field elements. Event $E_j$ forces $Q|_{X_{u_j}=\alpha_{u_j}}$ to vanish as a polynomial in the remaining variables, and hence $Q_0(\alpha_{u_j})=0$. By Lemma~\ref{lemma:Schwarz-Zippel variation},
    $$
    \textup{Pr}(E_j\mid\mathbf{X}_{<u_j}=\boldsymbol{\beta})
    \le
    \textup{Pr}(Q_0(\alpha_{u_j})=0)
    \le
    \frac{2(k-1)}{q-n+1}.
    $$
    This completes the proof.
\end{proof}

Combining Lemma~\ref{lemma:behavior of alg V} and Lemma~\ref{lemma:probability V calculation fixed faulty}, we obtain the following corollary.

\begin{cor}\label{cor:probability V arbitrary faulty}
    Under the notation and conditions in Lemma~\ref{lemma:behavior of alg V}, suppose that $q\ge n$ and that $\boldsymbol{\alpha}$ is chosen uniformly at random from the set of all $n$-tuples of distinct elements in $\mathbb{F}_q$. Let $r=(\epsilon-\rho)n/2$. Then
    $$
        \textup{Pr}\left(\mathbf{V}_{k,\ell,I,J}(\mathbf{X})|_{\mathbf{X}=\boldsymbol{\alpha}}\textup{ does not have full column rank}\right)
        \le
        \left(\frac{2n(k-1)}{q-n+1}\right)^r.
    $$
\end{cor}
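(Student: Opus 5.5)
The plan is a union bound over all possible faulty index sequences. By Lemma~\ref{lemma:behavior of alg V}, when Algorithm~\ref{alg:faulty index output} is run on $\mathbf{V}_{k,\ell,I,J}(\mathbf{X})$, using Algorithm~\ref{alg:select full rank submatrix} on Line 6, it never outputs ``FAIL''. It outputs ``SUCCESS'' exactly in a case where $\mathbf{V}_{k,\ell,I,J}(\mathbf{X})|_{\mathbf{X}=\boldsymbol{\alpha}}$ has full column rank. Consequently, if the evaluated matrix does not have full column rank, the algorithm cannot output ``SUCCESS'', so it must output some faulty index sequence $(u_1,\dots,u_r)\in[n]^r$.

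This gives the containment of events
$$
\{\mathbf{V}_{k,\ell,I,J}(\mathbf{X})|_{\mathbf{X}=\boldsymbol{\alpha}}\text{ not full column rank}\}\subseteq\bigcup_{(u_1,\dots,u_r)\in[n]^r}\{\text{the algorithm outputs }(u_1,\dots,u_r)\}.
$$
The hypothesis $q\ge n$ ensures that uniformly random distinct $n$-tuples exist and that $q-n+1\ge 1$, so Lemma~\ref{lemma:probability V calculation fixed faulty} applies to each fixed sequence. It bounds the probability of each event on the right by $\left(\frac{2(k-1)}{q-n+1}\right)^r$.

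The number of sequences in $[n]^r$ is $n^r$; one could restrict to sequences with distinct entries, but that refinement is not needed. Summing gives
$$
n^r\left(\frac{2(k-1)}{q-n+1}\right)^r=\left(\frac{2n(k-1)}{q-n+1}\right)^r,
$$
which is the claimed bound.

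The step that needs the most care is justifying the containment of events. An output of ``SUCCESS'' certifies full column rank, because the algorithm stops only when a selected full-rank square submatrix has no faulty index, so that submatrix stays nonsingular after the full assignment. Hence rank failure forces the algorithm to run through all $r$ rounds and output a sequence. The uniqueness of the selection rule makes each output a well-defined event, as already used in Lemma~\ref{lemma:probability V calculation fixed faulty}. Beyond this, the argument is a direct union bound.
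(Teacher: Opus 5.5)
Your proposal is correct and follows essentially the same argument as the paper. Lemma~\ref{lemma:behavior of alg V} shows that rank failure forces the algorithm to output a faulty index sequence, Lemma~\ref{lemma:probability V calculation fixed faulty} bounds each fixed sequence by $\left(\frac{2(k-1)}{q-n+1}\right)^r$, and a union bound over the $n^r$ sequences gives the stated bound. You also correctly use only the direction ``SUCCESS implies full column rank,'' which is all the containment of events requires.
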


\begin{proof}
    By Lemma~\ref{lemma:behavior of alg V}, if $\mathbf{V}_{k,\ell,I,J}(\mathbf{X})|_{\mathbf{X}=\boldsymbol{\alpha}}$ does not have full column rank, then Algorithm~\ref{alg:faulty index output} must output a faulty index sequence $(u_1,\dots,u_r)\in[n]^r$. For each fixed sequence, Lemma~\ref{lemma:probability V calculation fixed faulty} bounds this probability by $\left(\frac{2(k-1)}{q-n+1}\right)^r$. Taking a union bound over all $n^r$ possible sequences gives the desired bound.
\end{proof}

It follows from Corollary~\ref{cor:probability V arbitrary faulty} that the $\mathbf{V}$-matrix rank condition holds simultaneously for all relevant pairs $(I,J)$ with high probability.

\begin{prop}\label{prop:V matrix probability for all I,J}
    Let $\rho,\epsilon\in(0,1)$ with $0<\rho<\epsilon$, and $k\le n$ be positive integers with $k=Rn$. Let $\ell=2k-1+\epsilon n$, and suppose that $2R+\epsilon<1$. Let $q$ be a prime power such that
    $$
    q\ge n+2\cdot 2^{\frac{6}{\epsilon-\rho}}
    ((2R+\epsilon)n)^{\frac{4R+2\epsilon}{\epsilon-\rho}}
    \cdot nk.
    $$
    Suppose that $\boldsymbol{\alpha}=(\alpha_1,\dots,\alpha_n)$ is chosen uniformly at random from the set of all $n$-tuples of distinct elements in $\mathbb{F}_q$. Then, with probability at least $1-2^{-n}$, for every pair of distinct-element sequences $I,J\in[n]^\ell$ containing at most $k+\rho n$ minimal equal sub-pairs, the matrix
    $\mathbf{V}_{k,\ell,I,J}(\mathbf{X})|_{\mathbf{X}=\boldsymbol{\alpha}}$
    has full column rank.
\end{prop}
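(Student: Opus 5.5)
The plan is a union bound over all relevant pairs $(I,J)$, using Corollary~\ref{cor:probability V arbitrary faulty} for each pair. First, fix a pair of distinct-element sequences $I,J\in[n]^\ell$ whose decomposition $A_1,\dots,A_t,B$ (unique by Lemma~\ref{lemma:decomposition unique}) has $t\le k+\rho n$. The hypotheses of Lemma~\ref{lemma:behavior of alg V} then hold with $r=(\epsilon-\rho)n/2$. Hence Corollary~\ref{cor:probability V arbitrary faulty} bounds the probability that $\mathbf{V}_{k,\ell,I,J}(\mathbf{X})|_{\mathbf{X}=\boldsymbol{\alpha}}$ fails to have full column rank by $\bigl(2n(k-1)/(q-n+1)\bigr)^r$.

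Second, count the pairs. The number of ordered pairs of distinct-element sequences of length $\ell$ is at most $(n^\ell)^2=n^{2\ell}$. A slightly sharper count would also do: choose the underlying sets, at most $2^{2n}$ ways, then order them, at most $(\ell!)^2$ ways. Since $\ell\le(2R+\epsilon)n$, I will use the bound
$$
\#\{(I,J)\}\le \bigl(2^{n}\ell^{\ell}\bigr)^2\le 2^{2n}\bigl((2R+\epsilon)n\bigr)^{2(2R+\epsilon)n}.
$$
The union bound then gives a total failure probability of at most
$$
2^{2n}\bigl((2R+\epsilon)n\bigr)^{(4R+2\epsilon)n}\left(\frac{2nk}{q-n+1}\right)^{(\epsilon-\rho)n/2}.
$$

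Third, substitute the hypothesis on $q$. It gives $q-n+1\ge 2\cdot 2^{6/(\epsilon-\rho)}\bigl((2R+\epsilon)n\bigr)^{(4R+2\epsilon)/(\epsilon-\rho)}nk$, so
$$
\frac{2nk}{q-n+1}\le 2^{-6/(\epsilon-\rho)}\bigl((2R+\epsilon)n\bigr)^{-(4R+2\epsilon)/(\epsilon-\rho)}.
$$
Raising this to the power $(\epsilon-\rho)n/2$ yields $2^{-3n}\bigl((2R+\epsilon)n\bigr)^{-(2R+\epsilon)n}$.

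The main obstacle is the exponent bookkeeping. This factor cancels only half of the ordering factor $\bigl((2R+\epsilon)n\bigr)^{(4R+2\epsilon)n}$ from the count above. So the crude count is too generous for the stated constant, and I would tighten it to match the exponent $(4R+2\epsilon)/(\epsilon-\rho)$. The natural refinement is to bound the number of pairs by $\ell^{\ell}\cdot\ell^{\ell}$, that is, $n^\ell$ choices for $I$ and at most $\ell^\ell$ relevant choices for $J$. One could then absorb one $\ell^\ell$ factor, or else accept a $q$-threshold with exponent doubled. If the count does not match exactly, I would adjust constants; the structure is a routine union bound either way. Granting the matching count, the product is at most $2^{2n}\cdot 2^{-3n}=2^{-n}$. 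This is the claimed success probability $1-2^{-n}$. It holds simultaneously for all pairs with at most $k+\rho n$ minimal equal sub-pairs.
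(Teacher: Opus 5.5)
There is a genuine gap in your counting step, and you identify it yourself without closing it. Your first and third steps are correct: the per-pair bound comes from Corollary~\ref{cor:probability V arbitrary faulty}, and the $q$-threshold turns $\bigl(2n(k-1)/(q-n+1)\bigr)^{(\epsilon-\rho)n/2}$ into at most $2^{-3n}\bigl((2R+\epsilon)n\bigr)^{-(2R+\epsilon)n}$.

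With your count of $2^{2n}\ell^{2\ell}$ pairs, the union bound only gives $2^{-n}\bigl((2R+\epsilon)n\bigr)^{(2R+\epsilon)n}$. That is far larger than $1$, so nothing is proved under the stated threshold. Your proposed refinement does not help. Taking $n^\ell$ choices for $I$ and $\ell^\ell$ for $J$ still gives order $\ell^{2\ell}$, since $n^\ell\ge \ell^\ell$. There is also no reason that only $\ell^\ell$ sequences $J$ are admissible for a fixed $I$: the condition of at most $k+\rho n$ minimal equal sub-pairs barely restricts $J$. ``Adjusting constants'' would mean doubling the exponent in the threshold for $q$, which proves a weaker statement than the one claimed.

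The missing idea is a symmetry reduction. Take any permutation $\pi$ of $[\ell]$ and form $(\pi(I),\pi(J))=\bigl((i_{\pi(1)},\dots,i_{\pi(\ell)}),(j_{\pi(1)},\dots,j_{\pi(\ell)})\bigr)$.
\begin{itemize}
\item Its $\mathbf{V}$-matrix is a row permutation of $\mathbf{V}_{k,\ell,I,J}(\mathbf{X})$, so the evaluated rank is unchanged.
\item The decomposition of $(\pi(I),\pi(J))$ is the image under $\pi^{-1}$ of the decomposition of $(I,J)$, so the number of minimal equal sub-pairs is also unchanged.
\end{itemize}
It therefore suffices to union-bound over orbits of this $S_\ell$-action. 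The action is free because $I$ has distinct entries; equivalently, you may assume $I$ is increasing.

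This leaves at most $\binom{n}{\ell}\cdot\binom{n}{\ell}\ell!\le 2^{2n}\bigl((2R+\epsilon)n\bigr)^{(2R+\epsilon)n}$ pairs, with exactly one factor of $\ell!$. That factor cancels exactly against your third-step estimate, giving $2^{2n}\cdot 2^{-3n}=2^{-n}$. This is how the paper proves the proposition.
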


\begin{proof}
    Let $I=(i_1,\dots,i_{\ell})$ and $J=(j_1,\dots,j_\ell)$, and let $\pi$ be a permutation of $[\ell]$. Define
    $\pi(I)=(i_{\pi(1)},\dots,i_{\pi(\ell)})$
    and
    $\pi(J)=(j_{\pi(1)},\dots,j_{\pi(\ell)})$.
    The matrices $\mathbf{V}_{k,\ell,I,J}(\mathbf{X})$ and $\mathbf{V}_{k,\ell,\pi(I),\pi(J)}(\mathbf{X})$ differ only by a permutation of rows. Hence
    $$
    \operatorname{rank}\left(\mathbf{V}_{k,\ell,I,J}(\mathbf{X})|_{\mathbf{X}=\boldsymbol{\alpha}}\right)
    =
    \operatorname{rank}\left(\mathbf{V}_{k,\ell,\pi(I),\pi(J)}(\mathbf{X})|_{\mathbf{X}=\boldsymbol{\alpha}}\right).
    $$

    We can identify two pairs $(I,J)$ and $(I',J')$ if there exists a permutation $\pi$ of $[\ell]$ such that $I'=\pi(I)$ and $J'=\pi(J)$. The number of pairs of distinct-element sequences of length $\ell$ is at most $\binom{n}{\ell}^2(\ell!)^2$. After quotienting by the above equivalence relation, the number of equivalence classes is at most $\binom{n}{\ell}^2\ell!$.

    For each fixed equivalence class, Corollary~\ref{cor:probability V arbitrary faulty}, with $r=(\epsilon-\rho)n/2$, gives
    $$
    \textup{Pr}\left(
    \mathbf{V}_{k,\ell,I,J}(\mathbf{X})|_{\mathbf{X}=\boldsymbol{\alpha}}
    \textup{ does not have full column rank}
    \right)
    \le
    \left(\frac{2n(k-1)}{q-n+1}\right)^{\frac{(\epsilon-\rho)n}{2}}.
    $$
    Thus, by the union bound, the probability of failure over all relevant pairs $(I,J)$ is at most
    $$
    \binom{n}{\ell}^2\ell!
    \left(\frac{2n(k-1)}{q-n+1}\right)^{\frac{(\epsilon-\rho)n}{2}}
    \le
    2^{2n}((2R+\epsilon)n)^{(2R+\epsilon)n}
    \left(\frac{2n(k-1)}{q-n+1}\right)^{\frac{(\epsilon-\rho)n}{2}}.
    $$
    Here, the inequality follows from $\binom{n}{\ell}\le 2^n$, $\ell!\le \ell^\ell$, and $\ell\le (2R+\epsilon)n$. By the assumed lower bound on $q$, the last quantity is at most $2^{-n}$. This completes the proof.
\end{proof}

\begin{rmk}
    The factorial term $\ell!$ in the union bound reflects a basic difficulty of the permutation-insdel setting: unlike ordinary insdel errors, the selected index sequences are not required to be increasing. Thus, within the present faulty index and union bound framework, this term appears to be the main obstruction to obtaining a linear-size alphabet as in~\cite{Con2024random}. Removing this factorial loss would likely require a different idea.
\end{rmk}

\subsection{$\mathbf{A}$-matrix}\label{section:A matrix}

In this section, we analyze the $\mathbf{A}$-matrix regime. Compared with the $\mathbf{V}$-matrix regime, the main advantage is that $\mathbf{A}$-matrices satisfy a robust full-rank property, which allows us to select square submatrices in a direct deterministic way.

By Lemma~\ref{lemma:Amatrix full rank}, when applying Algorithm~\ref{alg:faulty index output} to $\mathbf{A}_{k,I^1,\dots,I^t}(\mathbf{X})$, we may select the square submatrix in Line 6 in a deterministic way. Specifically, whenever $\mathbf{A}_{k,I^1,\dots,I^t}^\mathcal{R}(\mathbf{X})$ has at least $k$ rows, let $\mathbf{A}_P^\mathcal{R}(\mathbf{X})$ denote the $k\times k$ submatrix consisting of the first $k$ rows of $\mathbf{A}_{k,I^1,\dots,I^t}^\mathcal{R}(\mathbf{X})$, with respect to the fixed order $I^1,\dots,I^t$. We now combine this deterministic selector with Algorithm~\ref{alg:faulty index output}.

\begin{lemma}\label{lemma:behavior of alg A}
    Let $0<\rho<\epsilon<1$, $\ell=2k-1+\epsilon n$, and $r'=\rho n$. Let $I^1,\dots,I^t\subseteq[n]$ be pairwise disjoint nonempty sets with $t\ge k+\rho n$. Suppose that $|I^1|+\dots +|I^t|\le \ell$ and $\operatorname{char}(\mathbb{F}_q)>\ell$. Then, for any pairwise distinct $\alpha_1,\dots,\alpha_n\in\mathbb{F}_q$, running Algorithm~\ref{alg:faulty index output} on the matrix $\mathbf{A}_{k,I^1,\dots,I^t}(\mathbf{X})$, while always selecting the $k\times k$ submatrix $\mathbf{A}_P^\mathcal{R}(\mathbf{X})$ consisting of the first $k$ surviving rows in Line 6, yields one of the following two outcomes:
    \begin{itemize}
        \item[1.] Algorithm~\ref{alg:faulty index output} outputs ``SUCCESS''. In this case, $\mathbf{A}_{k,I^1,\dots,I^t}(\mathbf{X})|_{\mathbf{X}=\boldsymbol{\alpha}}$ has the full column rank.
        \item[2.] Algorithm~\ref{alg:faulty index output} outputs a faulty index sequence $(u_1,\dots,u_{r'})\in[n]^{r'}$. The indices $u_1,\dots,u_{r'}$ are distinct. Moreover, for each $j\in[r']$, $u_j$ is the faulty index of $\mathbf{A}^{\mathcal{R}_j}_{P}(\mathbf{X})$, where $\mathcal{R}_j:=\{u_1,\dots,u_{j-1}\}$ and $\mathbf{A}^{\mathcal{R}_j}_{P}(\mathbf{X})$ is the first $k\times k$ submatrix of $\mathbf{A}^{\mathcal{R}_j}_{k,I^1,\dots,I^t}(\mathbf{X})$.
    \end{itemize}
\end{lemma}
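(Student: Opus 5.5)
The plan is to follow the proof of Lemma~\ref{lemma:behavior of alg V}, with Corollary~\ref{cor:V^R matrix has full rank} replaced by a row-counting argument for $\mathbf{A}$-matrices. There are three things to check: the ``FAIL'' branch of Algorithm~\ref{alg:faulty index output} is never triggered; ``SUCCESS'' certifies full column rank after evaluation; and any output sequence consists of distinct faulty indices of the prescribed submatrices.

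\textbf{Ruling out ``FAIL''.} At iteration $j$ the removed set is $\mathcal{R}_j=\{u_1,\dots,u_{j-1}\}$, so $|\mathcal{R}_j|\le r'-1<\rho n$. A row of $\mathbf{A}_{k,I^1,\dots,I^t}(\mathbf{X})$ indexed by $I^s$ involves exactly the variables $X_i$ with $i\in I^s$. Since the sets $I^1,\dots,I^t$ are pairwise disjoint, each index $i\in\mathcal{R}_j$ lies in at most one $I^s$. Hence each such $i$ deletes at most one row.

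It follows that $\mathbf{A}^{\mathcal{R}_j}_{k,I^1,\dots,I^t}(\mathbf{X})$ has at least $t-|\mathcal{R}_j|\ge k+\rho n-(r'-1)>k$ rows. These rows form the $\mathbf{A}$-matrix of a subcollection of at least $k$ of the sets, still pairwise disjoint, nonempty, and of total size at most $\ell$. By Lemma~\ref{lemma:Amatrix full rank}, every $k\times k$ submatrix of it, in particular $\mathbf{A}^{\mathcal{R}_j}_P(\mathbf{X})$ formed from the first $k$ surviving rows, has nonzero determinant. So the rank test on Line~3 always passes, and the selection on Line~6 is well defined and deterministic.

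\textbf{``SUCCESS'' certifies full rank.} If the algorithm outputs ``SUCCESS'' at some iteration $j$, then $\mathbf{A}^{\mathcal{R}_j}_P(\mathbf{X})$ has no faulty index. Its determinant is nonzero before any assignment, so by Definition~\ref{def:faulty index} and induction over $i=1,\dots,n$ it stays nonzero after the full assignment $\mathbf{X}=\boldsymbol{\alpha}$. Thus $\mathbf{A}^{\mathcal{R}_j}_P(\mathbf{X})|_{\mathbf{X}=\boldsymbol{\alpha}}$ is an invertible $k\times k$ submatrix of $\mathbf{A}_{k,I^1,\dots,I^t}(\mathbf{X})|_{\mathbf{X}=\boldsymbol{\alpha}}$, and the latter has full column rank.

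\textbf{Output sequences.} Otherwise the algorithm outputs $(u_1,\dots,u_{r'})$, and by construction each $u_j$ is the faulty index of $\mathbf{A}^{\mathcal{R}_j}_P(\mathbf{X})$. For distinctness, note that for $i\in\mathcal{R}_j$ all rows involving $X_i$ were deleted, so $\det\mathbf{A}^{\mathcal{R}_j}_P(\mathbf{X})$ does not depend on $X_i$. Assigning $X_i$ therefore cannot turn a nonzero determinant into zero, so $i$ is not a faulty index. Hence $u_j\notin\mathcal{R}_j$ for every $j$, and the $u_j$ are distinct.

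The only step that needs care is the row-count bookkeeping. One must check that the count of surviving rows stays at least $k$ for all $r'=\rho n$ iterations, including the last; the calculation above handles this with room to spare. One must also confirm that Lemma~\ref{lemma:Amatrix full rank} applies to the restricted subcollection, which holds because its hypotheses (disjointness, nonemptiness, total size at most $\ell$, at least $k$ sets, and $\operatorname{char}(\mathbb{F}_q)>\ell$) are inherited by any subcollection.
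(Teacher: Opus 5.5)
Your proof is correct and follows the paper's own argument. The paper also rules out ``FAIL'' by noting that, since $I^1,\dots,I^t$ are disjoint, each removed index deletes at most one row, so at least $k$ rows survive and Lemma~\ref{lemma:Amatrix full rank} applies; it then defers the ``SUCCESS'' and distinctness conclusions to the proof of Lemma~\ref{lemma:behavior of alg V}, which you have written out explicitly (with a slightly sharper count, $|\mathcal{R}_j|\le r'-1$).
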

\begin{proof}
    The proof of Lemma~\ref{lemma:behavior of alg A} is analogous to that of Lemma~\ref{lemma:behavior of alg V}. Indeed, since $t\ge k+\rho n$, $r'=\rho n$ and the disjoint-ness of $I^1,\dots,I^t$, after deleting rows involving at most $r'$ faulty indices, the remaining $\mathbf{A}$-matrix still has at least $k$ rows. Therefore, by Lemma~\ref{lemma:Amatrix full rank}, the ``FAIL'' condition in Algorithm~\ref{alg:faulty index output} is never triggered. The remaining two conclusions follow exactly as in the proof of Lemma~\ref{lemma:behavior of alg V}, and we omit the details.
\end{proof}

The next lemma is the $\mathbf{A}$-matrix analog of Lemma~\ref{lemma:probability V calculation fixed faulty}. It bounds the probability that Algorithm~\ref{alg:faulty index output} outputs a prescribed faulty index sequence over random $\boldsymbol{\alpha}$.

\begin{lemma}\label{lemma:probability A calculation fixed faulty}
    Under the notation and conditions in Lemma~\ref{lemma:behavior of alg A}, suppose that $\boldsymbol{\alpha}=(\alpha_1,\dots,\alpha_n)$ is chosen uniformly at random from the set of all $n$-tuples of distinct elements in $\mathbb{F}_q$. Then, for any fixed faulty index sequence $(u_1,\dots,u_{r'})\in[n]^{r'}$, the probability that Algorithm~\ref{alg:faulty index output} outputs $(u_1,\dots,u_{r'})$ on input $\mathbf{A}_{k,I^1,\dots,I^t}(\mathbf{X})$, $\boldsymbol{\alpha}$, and $r'$ is at most $\left(\frac{k-1}{q-n+1}\right)^{r'}$.
\end{lemma}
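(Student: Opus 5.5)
The argument follows the proof of Lemma~\ref{lemma:probability V calculation fixed faulty} almost verbatim. The only change is the per-variable degree bound, which improves from $2(k-1)$ to $k-1$.

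Fix $(u_1,\dots,u_{r'})$ and set $\mathcal{R}_j=\{u_1,\dots,u_{j-1}\}$. Let $\mathbf{M}_j:=\mathbf{A}^{\mathcal{R}_j}_P(\mathbf{X})$ be the first $k$ surviving rows of $\mathbf{A}^{\mathcal{R}_j}_{k,I^1,\dots,I^t}(\mathbf{X})$. Since the selection rule is deterministic, each $\mathbf{M}_j$ is determined by the fixed sequence. By Lemma~\ref{lemma:behavior of alg A}, each $\mathbf{M}_j$ is well defined (at least $k$ rows survive), and by Lemma~\ref{lemma:Amatrix full rank} it has nonzero determinant. Let $E_j$ be the event that $u_j$ is the faulty index of $\mathbf{M}_j$. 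The output $(u_1,\dots,u_{r'})$ forces $E_1\wedge\cdots\wedge E_{r'}$, so it suffices to bound the probability of this conjunction by $\left(\frac{k-1}{q-n+1}\right)^{r'}$.

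As before, the $u_j$ need not be increasing, so we reorder them. Choose $\sigma$ with $u_{\sigma(1)}<\cdots<u_{\sigma(r')}$ and set $F_h=E_{\sigma(1)}\wedge\cdots\wedge E_{\sigma(h)}$. We may assume $\Pr(F_{r'})>0$; then we telescope $\Pr(F_{r'})=\prod_h \Pr(F_h)/\Pr(F_{h-1})$. For $j=\sigma(h)$, event $F_{h-1}$ depends only on $\boldsymbol{\alpha}_{\le u_{\sigma(h-1)}}\subseteq \boldsymbol{\alpha}_{<u_j}$. Conditioning on $\mathbf{X}_{<u_j}=\boldsymbol{\beta}$ therefore bounds the ratio by $\max_{\boldsymbol{\beta}}\Pr(E_j\mid \mathbf{X}_{<u_j}=\boldsymbol{\beta})$.

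Fix $\boldsymbol{\beta}$ and let $Q=\det(\mathbf{M}_j|_{\mathbf{X}_{<u_j}=\boldsymbol{\beta}})$. If $Q=0$, event $E_j$ is impossible. Otherwise, view $Q$ as a polynomial in $X_{u_j+1},\dots,X_n$ with coefficients in $\mathbb{F}_q[X_{u_j}]$, and pick a nonzero coefficient $Q_0$. The degree bound in Lemma~\ref{lemma:Amatrix full rank} gives $\deg_{X_{u_j}}\det(\mathbf{M}_j)\le k-1$. Here we use that the sets $I^s$ are disjoint, so $X_{u_j}$ appears in at most one row. Hence $\deg Q_0\le k-1$.

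Conditioned on $\boldsymbol{\beta}$, the value $\alpha_{u_j}$ is uniform over at least $q-n+1$ remaining elements. Event $E_j$ requires $Q_0(\alpha_{u_j})=0$, so Lemma~\ref{lemma:Schwarz-Zippel variation} (with one variable) gives probability at most $(k-1)/(q-n+1)$. Multiplying over $h\in[r']$ completes the proof.

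The only point needing care is the conditioning step, namely that $F_{h-1}$ is determined by coordinates before $u_j$. This is exactly why we sort by $u$-values.
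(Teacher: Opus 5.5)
Your proposal is correct and is essentially the paper's proof. The paper simply says the argument of Lemma~\ref{lemma:probability V calculation fixed faulty} goes through verbatim, with the per-variable degree bound $k-1$ from Lemma~\ref{lemma:Amatrix full rank} replacing $2(k-1)$, and you carry out exactly that (your explicit note that $F_{h-1}$ is determined by $\boldsymbol{\alpha}_{<u_j}$ because of the sorting justifies the conditioning step, which the paper leaves implicit).
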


\begin{proof}
    The proof is analogous to that of Lemma~\ref{lemma:probability V calculation fixed faulty}. The only difference is that, by Lemma~\ref{lemma:Amatrix full rank}, the degree of each variable in the relevant determinant is at most $k-1$. Hence the same faulty index argument gives a factor of $(k-1)/(q-n+1)$ at each step, and the desired bound follows.
\end{proof}

The next lemma is specific to the $\mathbf{A}$-matrix regime. It improves the crude bound $n^{r'}$ on the number of possible faulty index sequences to $\binom{n}{r'}$ by exploiting the pairwise disjoint supports of the rows of an $\mathbf{A}$-matrix.

\begin{lemma}\label{lemma:number of possible index A}
    Under the notation and conditions in Lemma~\ref{lemma:behavior of alg A}, let $\mathcal{T}$ be the set of all possible faulty index sequences $(u_1,\dots,u_{r'})$ output by Algorithm~\ref{alg:faulty index output}. Then $|\mathcal{T}|\le \binom{n}{r'}$.
\end{lemma}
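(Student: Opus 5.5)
The plan is to show that the output sequence $(u_1,\dots,u_{r'})$ is determined by its underlying set $\{u_1,\dots,u_{r'}\}$. Since Lemma~\ref{lemma:behavior of alg A} gives distinct entries, this immediately yields $|\mathcal{T}|\le\binom{n}{r'}$. Concretely, I will show that the faulty indices, read in the order the algorithm produces them, must be increasing: $u_1<u_2<\dots<u_{r'}$. Every sequence in $\mathcal{T}$ is then the increasing enumeration of an $r'$-subset of $[n]$.

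The key step is to compare consecutive stages. Fix $j$, let $\mathcal{R}_j=\{u_1,\dots,u_{j-1}\}$, and let $\mathbf{M}_j=\mathbf{A}^{\mathcal{R}_j}_P(\mathbf{X})$ be the first $k$ surviving rows. Because the sets $I^1,\dots,I^t$ are pairwise disjoint, the index $u_j$ lies in at most one $I^s$. Hence passing from $\mathcal{R}_j$ to $\mathcal{R}_{j+1}=\mathcal{R}_j\cup\{u_j\}$ deletes exactly one row $I^s$; it must be one of the selected rows, since otherwise $X_{u_j}$ would not appear in $\mathbf{M}_j$ and could not be faulty. The new matrix $\mathbf{M}_{j+1}$ keeps the other $k-1$ rows of $\mathbf{M}_j$ and adds the next surviving row $I^{s'}$, which has $s'$ larger than all indices used so far.

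Now I argue by contradiction that $u_{j+1}>u_j$. First, $u_{j+1}\neq u_j$, since $u_j$ no longer appears. Suppose $u_{j+1}<u_j$. By definition, $\det(\mathbf{M}_{j+1}|_{\mathbf{X}_{\le u_{j+1}}=\boldsymbol\alpha_{\le u_{j+1}}})=0$. The plan is to derive from this a contradiction with the fact that $\det(\mathbf{M}_j|_{\mathbf{X}_{<u_j}=\boldsymbol\alpha_{<u_j}})\neq 0$.

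The device is the specialization from Lemma~\ref{lemma:Amatrix full rank}. In $\mathbf{M}_{j+1}$, set all still-unassigned variables within each row block equal to a common fresh variable. Since rows have disjoint supports, after partial assignment row $I^a$ becomes $(|I^a|+\text{const}, \dots)$, that is, a constant vector $c_a$ plus $m_a(1,Y_a,\dots,Y_a^{k-1})$, where $m_a$ counts the unassigned indices in $I^a$. The determinant is multilinear in rows.

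This multilinear structure is where the main obstacle lies. Vanishing of $\det\mathbf{M}_{j+1}$ after assigning up to $u_{j+1}$ means the span of the $k-1$ shared rows, partially assigned, fails together with the new row. The shared rows, however, are exactly the rows of $\mathbf{M}_j$ other than $I^s$, and $\det\mathbf{M}_j\neq0$ at the later time $u_j-1\ge u_{j+1}$ forces those $k-1$ shared rows to be independent at time $u_{j+1}$. If the row-independence argument alone is insufficient, I would fall back on a weaker statement: the pair (set, order) is recoverable because at each step $u_j$ is determined as the \emph{unique} element of the remaining set lying in a currently selected row with the smallest value that is faulty. I would try to show that, given the set $U$, one can simulate the algorithm, since the selected rows depend only on $\mathcal{R}_j$, and at each stage pick the element of $U\setminus\mathcal{R}_j$ appearing in $\mathbf{M}_j$.

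The delicate point is uniqueness. If two elements of $U\setminus\mathcal{R}_j$ appear in $\mathbf{M}_j$, the faulty one is the smaller, because faultiness is the first time the determinant vanishes, and a later-removed index appearing in $\mathbf{M}_j$ was not yet faulty. Making that last claim rigorous is exactly the obstacle, and it is what I would work out first.
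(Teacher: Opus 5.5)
Your overall plan (show that every output sequence is increasing, hence the increasing enumeration of an $r'$-subset) is the paper's plan. However, for the sets in their given labeling the key claim is false, and so is your fallback rule that ``the faulty one is the smaller.''

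Take $k=2$, $I^1=\{1,5\}$, $I^2=\{2,6\}$, $I^3=\{3,4\}$ (padded with further sets to meet the size hypotheses), and distinct $\alpha_i$ with $\alpha_1+\alpha_5=\alpha_2+\alpha_6=\alpha_3+\alpha_4$.
\begin{itemize}
\item $\det\mathbf{M}_1=2(X_2+X_6-X_1-X_5)$ stays nonzero until $X_6$ is assigned and then vanishes, so $u_1=6$.
\item Row $I^2$ is deleted, $\mathbf{M}_2$ consists of rows $I^1,I^3$, and $\det\mathbf{M}_2=2(X_3+X_4-X_1-X_5)$ gives $u_2=5$.
\end{itemize}
The output $(6,5)$ is decreasing. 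At stage $1$ both elements of $U=\{5,6\}$ lie in $\mathbf{M}_1$, yet the faulty one is the larger. This shows where your contradiction argument breaks. Independence of the $k-1$ shared rows at time $u_{j+1}$ gives nothing once the new row $I^{s'}$ has all its variables already assigned: it is then a constant vector, and it may lie in the span of the shared rows.

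The missing idea is the paper's preliminary reduction: relabel the variables so that every element of $I^a$ is smaller than every element of $I^b$ whenever $a<b$. This changes the order in which variables are assigned, so it is a genuine without-loss-of-generality step. It is legitimate because $\boldsymbol{\alpha}$ is exchangeable, so the failure probability, which is all the subsequent union bound uses, is unchanged. After the relabeling, the added row $S'=I^{s'}$ comes after the removed row $S\ni u_j$ in the row order. Hence all its indices exceed $u_j$, and it is completely unassigned after $\mathbf{X}_{\le u_j}=\boldsymbol{\alpha}_{\le u_j}$. Expanding along it,
\[
\det\Big(\mathbf{M}_{j+1}\big|_{\mathbf{X}_{\le u_j}=\boldsymbol{\alpha}_{\le u_j}}\Big)=|S'|\,C_0+\sum_{d=1}^{k-1}\Big(\sum_{a\in S'}X_a^d\Big)C_d,
\]
where $C_0,\dots,C_{k-1}$ are the signed cofactors formed from the $k-1$ shared rows.
\begin{itemize}
\item These cofactors involve neither $X_{u_j}$ nor any variable of $S'$.
\item Up to sign, they are the cofactors in the expansion of $\det\mathbf{M}_j$ along $S$. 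So they are not all zero, because $\det\mathbf{M}_j\neq0$ after $\mathbf{X}_{<u_j}=\boldsymbol{\alpha}_{<u_j}$. This is your row-independence observation, used in the right place.
\item Let $d_0$ be the largest index with $C_{d_0}\neq0$. If $d_0\ge1$, then for any $a\in S'$ the coefficient of $X_a^{d_0}$ is $C_{d_0}$. If $d_0=0$, the determinant equals $|S'|C_0\neq0$ because $\operatorname{char}(\mathbb{F}_q)>\ell$.
\end{itemize}
Either way the next determinant survives the assignment of $\mathbf{X}_{\le u_j}$, which forces $u_{j+1}>u_j$.
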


\begin{proof}
    Since the sets $I^1,\dots,I^t$ are pairwise disjoint, a relabeling of the variables induces a bijection on the set of possible faulty index sequences. Thus we may assume that every element of $I^a$ is smaller than every element of $I^b$ whenever $a<b$.

    We show that, after this relabeling, every possible faulty index sequence is strictly increasing. Fix one step of Algorithm~\ref{alg:faulty index output}, and let $u$ be the faulty index found at this step. Let $\mathbf{M}$ be the selected $k\times k$ submatrix before $u$ is recorded. If $u$ does not belong to one of the selected rows, then $\det(\mathbf{M})$ is independent of $X_u$, and $u$ cannot be faulty. Hence $u$ belongs to some selected row, say the row corresponding to a set $S\in \{I^1,\dots,I^t\}$.

    By Definition~\ref{def:faulty index}, $\det\left(\mathbf{M}\big|_{\mathbf{X}_{<u}=\boldsymbol{\alpha}_{<u}}\right)\neq 0$. Expanding this determinant along the row corresponding to $S$, at least one signed cofactor is a nonzero polynomial. These cofactors are formed from the other $k-1$ selected rows. Since sets $I^1,\dots,I^t$ are pairwise disjoint, they do not involve any variable from $S$, and in particular they are unchanged by assignment $X_u=\alpha_u$.

    After $u$ is recorded, the row corresponding to $S$ is deleted. If the algorithm continues, the first-$k$-surviving-row rule adds a later row, say the row corresponding to a set $S'$. By our relabeling convention, all variables in $S'$ have indices larger than $u$. Write $S'=\{v_1,\dots,v_m\}$ and set $Y_a=X_{v_a}$. The new row is
    $$
    \left(m,\sum_{a=1}^{m}Y_a,\sum_{a=1}^{m}Y_a^2,\dots,\sum_{a=1}^{m}Y_a^{k-1}\right).
    $$

    Now consider the determinant of the next selected submatrix after the faulty index $u$ is recorded. Expanding it along the new row gives
    $$
    mC_0+\sum_{d=1}^{k-1}\left(\sum_{a=1}^{m}Y_a^d\right)C_d,
    $$
    where the $C_d$'s are signed cofactors formed from the previous $k-1$ selected rows. These cofactors agree, up to signs, with the cofactors obtained in the previous expansion along the row $S$. Hence not all $C_d$ are zero. Let $d_0$ be the largest index such that $C_{d_0}\neq 0$. If $d_0\ge 1$, then the coefficient of $Y_1^{d_0}$ is $C_{d_0}\neq 0$. If $d_0=0$, then the determinant is $mC_0\neq 0$, since $1\le m\le \ell$ and $\operatorname{char}(\mathbb{F}_q)>\ell$.

    Thus the next selected determinant is not the zero polynomial after the partial assignment $\mathbf{X}_{\le u}=\boldsymbol{\alpha}_{\le u}$. Therefore the next faulty index, if it exists, must be larger than $u$. Hence every possible faulty index sequence is strictly increasing. Consequently, $|\mathcal{T}|\le \binom{n}{r'}$.
\end{proof}

Combining Lemma~\ref{lemma:behavior of alg A}, Lemma~\ref{lemma:probability A calculation fixed faulty}, and Lemma~\ref{lemma:number of possible index A}, we obtain the following corollary.

\begin{cor}\label{cor:probability A arbitrary faulty}
    Under the notation and conditions in Lemma~\ref{lemma:number of possible index A}, suppose that $q\ge n$ and that $\boldsymbol{\alpha}$ is chosen uniformly at random from the set of all $n$-tuples of distinct elements in $\mathbb{F}_q$. Let $r'=\rho n$. Then
    $$
        \textup{Pr}\left(\mathbf{A}_{k,I^1,\dots,I^t}(\mathbf{X})|_{\mathbf{X}=\boldsymbol{\alpha}}\textup{ does not have full column rank}\right)
        \le
        2^n\left(\frac{k-1}{q-n+1}\right)^{r'}.
    $$
\end{cor}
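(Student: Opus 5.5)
The bound follows by a union bound over the possible faulty index sequences, combining Lemma~\ref{lemma:behavior of alg A}, Lemma~\ref{lemma:probability A calculation fixed faulty}, and Lemma~\ref{lemma:number of possible index A}, exactly as Corollary~\ref{cor:probability V arbitrary faulty} was obtained in the $\mathbf{V}$-matrix regime.

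First, we use Lemma~\ref{lemma:behavior of alg A}. We run Algorithm~\ref{alg:faulty index output} on $\mathbf{A}_{k,I^1,\dots,I^t}(\mathbf{X})$ with $r'=\rho n$, selecting the first $k$ surviving rows at Line 6. The outcome ``FAIL'' never occurs. If the output is ``SUCCESS'', then $\mathbf{A}_{k,I^1,\dots,I^t}(\mathbf{X})|_{\mathbf{X}=\boldsymbol{\alpha}}$ has full column rank. Hence the event that the evaluated $\mathbf{A}$-matrix does not have full column rank is contained in the event that the algorithm outputs some faulty index sequence $(u_1,\dots,u_{r'})$. By Lemma~\ref{lemma:number of possible index A}, any such output lies in the set $\mathcal{T}$, which has size at most $\binom{n}{r'}$.

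Second, we fix $(u_1,\dots,u_{r'})\in\mathcal{T}$. Lemma~\ref{lemma:probability A calculation fixed faulty} bounds the probability that the algorithm outputs this particular sequence by $\left(\frac{k-1}{q-n+1}\right)^{r'}$. The hypothesis $q\ge n$ ensures $q-n+1\ge 1$, so this bound and Lemma~\ref{lemma:Schwarz-Zippel variation} behind it are meaningful.

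Finally, we sum over $\mathcal{T}$ and use $\binom{n}{r'}\le 2^n$:
$$
\textup{Pr}(\text{failure})\le \binom{n}{r'}\left(\frac{k-1}{q-n+1}\right)^{r'}\le 2^n\left(\frac{k-1}{q-n+1}\right)^{r'}.
$$

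No step here is a real obstacle, since all the work is contained in the cited lemmas. The one point to check is that the union bound is taken only over sequences the algorithm can actually output, i.e.\ over $\mathcal{T}$ rather than over all of $[n]^{r'}$. Lemma~\ref{lemma:number of possible index A} justifies this, and it is what allows the factor $n^{r'}$ to be replaced by $2^n$.
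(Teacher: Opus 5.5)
Your proposal is correct and is the same argument as the paper's proof. Lemma~\ref{lemma:behavior of alg A} reduces rank failure to the algorithm outputting a faulty index sequence, Lemma~\ref{lemma:probability A calculation fixed faulty} bounds each fixed sequence by $\left(\frac{k-1}{q-n+1}\right)^{r'}$, and a union bound over the at most $\binom{n}{r'}\le 2^n$ possible sequences from Lemma~\ref{lemma:number of possible index A} finishes the proof.
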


\begin{proof}
    By Lemma~\ref{lemma:behavior of alg A}, if $\mathbf{A}_{k,I^1,\dots,I^t}(\mathbf{X})|_{\mathbf{X}=\boldsymbol{\alpha}}$ does not have full column rank, then Algorithm~\ref{alg:faulty index output} must output a faulty index sequence. For each fixed faulty index sequence, Lemma~\ref{lemma:probability A calculation fixed faulty} gives an upper bound $\left(\frac{k-1}{q-n+1}\right)^{r'}$. By Lemma~\ref{lemma:number of possible index A}, there are at most $\binom{n}{r'}$ possible faulty index sequences. The desired bound follows from the union bound and $\binom{n}{r'}\le 2^n$.
\end{proof}

It follows from Corollary~\ref{cor:probability A arbitrary faulty} that the $\mathbf{A}$-matrix rank condition holds simultaneously for all relevant collections $I^1,\dots,I^t$ with high probability.

\begin{prop}\label{prop:A matrix probability for all I,J}
    Let $\rho,\epsilon\in(0,1)$ with $0<\rho<\epsilon$, and $k\le n$ be positive integers with $k=Rn$. Let $\ell=2k-1+\epsilon n$, and suppose that $2R+\epsilon<1$. Let $q$ be a prime power such that
    $$
    q\ge n+2^{\frac{3}{\rho}}(k+\rho n)^{\frac{2R+\epsilon}{\rho}}k,
    $$
    and $\operatorname{char}(\mathbb F_q)>\ell$. Suppose that $\boldsymbol{\alpha}=(\alpha_1,\dots,\alpha_n)$ is chosen uniformly at random from the set of all $n$-tuples of distinct elements in $\mathbb{F}_q$. Then, with probability at least $1-2^{-n}$, for $t=k+\rho n$ and every collection of pairwise disjoint nonempty sets $I^1,\dots,I^t\subseteq[n]$ satisfying $|I^1|+\cdots+|I^t|\le \ell$, the $t\times k$ matrix
    $\mathbf{A}_{k,I^1,\dots,I^t}(\mathbf{X})|_{\mathbf{X}=\boldsymbol{\alpha}}$
    has full column rank.
\end{prop}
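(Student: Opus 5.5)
The plan mirrors the proof of Proposition~\ref{prop:V matrix probability for all I,J}: apply Corollary~\ref{cor:probability A arbitrary faulty} to each admissible collection and take a union bound.

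First, fix $t=k+\rho n$ and any collection of pairwise disjoint nonempty sets $I^1,\dots,I^t\subseteq[n]$ with $\sum_s|I^s|\le\ell$. The hypotheses of Lemma~\ref{lemma:behavior of alg A} then hold: $t\ge k+\rho n$, the total size is at most $\ell$, and $\operatorname{char}(\mathbb F_q)>\ell$. Since $q\ge n$, Corollary~\ref{cor:probability A arbitrary faulty} with $r'=\rho n$ bounds the failure probability for this collection by $2^n\left(\frac{k-1}{q-n+1}\right)^{\rho n}$.

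Second, we count the collections, and the count can be made coarse. Permuting the order of $I^1,\dots,I^t$ only permutes rows of the $\mathbf{A}$-matrix, so it does not affect the rank. Hence it suffices to count unordered collections, but even the ordered count is acceptable. Each collection is determined by a function assigning to each element of $[n]$ either a label in $[t]$ or ``unused''. This gives at most $(t+1)^n$ collections. Alternatively, one chooses the union $U$ with $|U|\le\ell$ (at most $2^n$ ways) and then a labeling $U\to[t]$ (at most $t^{\ell}\le t^{(2R+\epsilon)n}$ ways). The second count matches the stated bound, so the number of collections is at most $2^n (k+\rho n)^{(2R+\epsilon)n}$.

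Third, the union bound gives a total failure probability of at most
$$2^{2n}(k+\rho n)^{(2R+\epsilon)n}\left(\frac{k-1}{q-n+1}\right)^{\rho n}.$$
We need this to be at most $2^{-n}$, that is,
$$\left(\frac{q-n+1}{k-1}\right)^{\rho}\ge 2^{3}(k+\rho n)^{2R+\epsilon}.$$
This holds once
$$q-n+1\ge 2^{3/\rho}(k+\rho n)^{(2R+\epsilon)/\rho}(k-1),$$
which follows from the assumed lower bound on $q$.

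The only real subtlety is the counting step. The exponent of $(k+\rho n)$ must come out as $\ell\le(2R+\epsilon)n$, so the labels should be charged only to the at most $\ell$ used elements and not to all $n$ coordinates. The $2^n$ factor for choosing the support, together with the $2^n$ factor from Corollary~\ref{cor:probability A arbitrary faulty}, accounts for the $2^{3/\rho}$ after the extra $2^{-n}$ target is included. The rest is routine algebra.
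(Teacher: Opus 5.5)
Your proposal is correct and matches the paper's proof: the paper also applies Corollary~\ref{cor:probability A arbitrary faulty} to each collection and counts collections by choosing the union and then the labels, via $\sum_{m}\binom{n}{m}\binom{m}{s_1,\dots,s_t}\le\sum_m\binom{n}{m}t^m$, obtaining $2^n t^{\ell+1}$ (your count $2^n t^{\ell}$ is marginally tighter). The union bound then yields exactly the condition $q-n+1\ge 2^{3/\rho}(k+\rho n)^{(2R+\epsilon)/\rho}(k-1)$, which the hypothesis on $q$ guarantees.
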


\begin{proof}
    We first count the number of possible ordered collections $I^1,\dots,I^t$. Let
    $m=|I^1|+\cdots+|I^t|$ and write $s_j=|I^j|$. Then $t\le m\le \ell$. For fixed $m$ and fixed positive integers $s_1,\dots,s_t$ with $s_1+\cdots+s_t=m$, the number of collections with $|I^j|=s_j$ for every $j\in[t]$ is at most
    $$
    \binom{n}{m}\binom{m}{s_1,\dots,s_t}.
    $$
    Therefore, the total number of relevant collections is at most
    \begin{equation*}
        \sum_{m=t}^{\ell}\binom{n}{m}
        \sum_{\substack{s_1+\cdots+s_t=m\\ s_j\ge 1,\ \forall j\in[t]}}
        \binom{m}{s_1,\dots,s_t} \le
        \sum_{m=t}^{\ell}\binom{n}{m}t^m  \le
        2^n t^{\ell+1},
    \end{equation*}
    where the first inequality follows from the multinomial theorem.

    For each fixed collection, Corollary~\ref{cor:probability A arbitrary faulty}, with $r'=\rho n$, gives
    $$
    \textup{Pr}\left(
    \mathbf{A}_{k,I^1,\dots,I^t}(\mathbf{X})|_{\mathbf{X}=\boldsymbol{\alpha}}
    \textup{ does not have full column rank}
    \right)
    \le
    2^n\left(\frac{k-1}{q-n+1}\right)^{\rho n}.
    $$
    Therefore, by the union bound, the failure probability over all relevant collections is at most $2^n t^{\ell+1}\cdot 2^n\left(\frac{k-1}{q-n+1}\right)^{\rho n}$. Since $t=k+\rho n$ and $\ell+1=(2R+\epsilon)n$, the assumed lower bound on $q$ implies that this quantity is at most $2^{-n}$. This proves the proposition.
\end{proof}

\begin{rmk}
    In the $\mathbf{A}$-matrix regime, the dominant term in the union bound is $t^{\ell+1}$, which comes from counting the possible collections of disjoint sets $I^1,\dots,I^t$. This counting loss is the main obstruction in the present approach; improving the alphabet size would require a different method.
\end{rmk}

\subsection{Proof of Theorem~\ref{thm:main informal}}\label{sec:put it together}

We now combine the estimates from the $\mathbf{V}$-matrix and $\mathbf{A}$-matrix regimes to prove the main theorem.

\begin{thm}\label{thm:main formal}
    Let $k$ and $n$ be positive integers with $k=Rn$, where $R,\epsilon\in(0,1)$ are constants satisfying $2R+\epsilon<1$. Set
    $$
    \rho=3R+2\epsilon-\sqrt{9R^2+10R\epsilon+3\epsilon^2}.
    $$
    Then $0<\rho<\epsilon$. Let $q$ be a prime power such that $\operatorname{char}(\mathbb F_q)>2k-1+\epsilon n$ and $q\ge n+C_1 n^{C_2}$, where $C_1$ and $C_2$ are positive constants depending only on $R$ and $\epsilon$. More explicitly,
    $$
    C_1=
    \max\left\{
    2R\cdot 2^{\frac{6}{\epsilon-\rho}}(2R+\epsilon)^{\frac{4R+2\epsilon}{\epsilon-\rho}},
    R\cdot 2^{\frac{3}{\rho}}(R+\rho)^{\frac{2R+\epsilon}{\rho}}
    \right\},
    $$
    and
    $$
    C_2=
    \frac{3R+3\epsilon+\sqrt{9R^2+10R\epsilon+3\epsilon^2}}{\epsilon}.
    $$
    Then a random $\textup{RS}_{n,k}(\boldsymbol{\alpha})$ code over $\mathbb{F}_q$ is, with probability at least $1-2^{-n+1}$, robust against the $((1-\epsilon)n-2k+1)$-permutation-insdel adversary.
\end{thm}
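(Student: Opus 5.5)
Set $\ell=2k-1+\epsilon n$ and $t=k+\rho n$. By Lemma~\ref{lemma:V A two conditions}, it suffices to show that, with probability at least $1-2^{-n+1}$, Conditions~\ref{item:1} and~\ref{item:2} hold simultaneously. Note that $n-\ell=(1-\epsilon)n-2k+1$, which is exactly the target radius. Condition~\ref{item:1} fails with probability at most $2^{-n}$ by Proposition~\ref{prop:V matrix probability for all I,J}, and Condition~\ref{item:2} fails with probability at most $2^{-n}$ by Proposition~\ref{prop:A matrix probability for all I,J}. The characteristic assumption $\operatorname{char}(\mathbb F_q)>\ell$ is exactly the one these results need. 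A union bound then gives success probability at least $1-2^{-n+1}$. The real work is to check three things: that $0<\rho<\epsilon$, and that the single hypothesis $q\ge n+C_1n^{C_2}$ implies both field-size requirements.

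First, the range of $\rho$. Write $\rho=3R+2\epsilon-\sqrt{D}$ with $D=9R^2+10R\epsilon+3\epsilon^2$.
\begin{itemize}
\item $\rho>0$ is equivalent to $(3R+2\epsilon)^2>D$, that is, $9R^2+12R\epsilon+4\epsilon^2>9R^2+10R\epsilon+3\epsilon^2$, which holds since $2R\epsilon+\epsilon^2>0$.
\item $\rho<\epsilon$ is equivalent to $(3R+\epsilon)^2<D$, that is, $9R^2+6R\epsilon+\epsilon^2<9R^2+10R\epsilon+3\epsilon^2$, which holds as well.
\end{itemize}

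Second, the field-size requirements. Substitute $k=Rn$. The $\mathbf{V}$-requirement reads $q\ge n+2\cdot2^{6/(\epsilon-\rho)}(2R+\epsilon)^{(4R+2\epsilon)/(\epsilon-\rho)}R\,n^{2+(4R+2\epsilon)/(\epsilon-\rho)}$. The $\mathbf{A}$-requirement reads $q\ge n+2^{3/\rho}(R+\rho)^{(2R+\epsilon)/\rho}R\,n^{1+(2R+\epsilon)/\rho}$, using $k+\rho n=(R+\rho)n$. The constants in front are exactly the two entries of $C_1$. It remains to show that both exponents equal $C_2$; this is the reason for the specific choice of $\rho$, which balances the two regimes.

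Third, the balancing computation, which I expect to be the main obstacle since it is only algebra that must come out exactly. Set $E_V=2+(4R+2\epsilon)/(\epsilon-\rho)$ and $E_A=1+(2R+\epsilon)/\rho$. The condition $E_V=E_A$ clears to a quadratic in $\rho$:
\[
\rho(\epsilon-\rho)+2\rho(2R+\epsilon)=(2R+\epsilon)(\epsilon-\rho),
\]
which is equivalent to
\[
\rho^2-(6R+4\epsilon)\rho+(2R\epsilon+\epsilon^2)=0.
\]
Its smaller root is $\rho=3R+2\epsilon-\sqrt{(3R+2\epsilon)^2-2R\epsilon-\epsilon^2}$, and the quantity under the root is $D$, so this root is exactly the stated $\rho$.

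Then $E_A=(\rho+2R+\epsilon)/\rho$. Rationalize using $1/\rho=(3R+2\epsilon+\sqrt D)/(2R\epsilon+\epsilon^2)$, which follows from the product of the two roots being $\epsilon(2R+\epsilon)$. This gives
\[
E_A=1+\frac{3R+2\epsilon+\sqrt{D}}{\epsilon}=C_2.
\]

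Finally, $q\ge n+C_1n^{C_2}$ dominates both requirements, since $C_1$ is at least each constant and the common exponent is $C_2$. Any rounding of $\epsilon n$, $\rho n$ and $(\epsilon-\rho)n/2$ to integers is absorbed as the paper stipulates.
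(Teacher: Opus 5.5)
Your proposal is correct and follows the paper's proof. You check $0<\rho<\epsilon$ by the same squaring comparison, then invoke Lemma~\ref{lemma:V A two conditions} with Propositions~\ref{prop:V matrix probability for all I,J} and~\ref{prop:A matrix probability for all I,J} and a union bound. Your derivation of the balancing quadratic $\rho^2-(6R+4\epsilon)\rho+\epsilon(2R+\epsilon)=0$ and the product-of-roots rationalization showing $E_V=E_A=C_2$ makes explicit the exponent check that the paper only asserts.
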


\begin{proof}
    We first check that $0<\rho<\epsilon$. Indeed,
    $$
    (3R+\epsilon)^2
    <
    9R^2+10R\epsilon+3\epsilon^2
    <
    (3R+2\epsilon)^2.
    $$
    Hence
    $$
    3R+\epsilon
    <
    \sqrt{9R^2+10R\epsilon+3\epsilon^2}
    <
    3R+2\epsilon,
    $$
    which gives $0<\rho<\epsilon$.

    By Lemma~\ref{lemma:V A two conditions}, it remains to verify the two required rank conditions. The choice of $\rho$ makes the alphabet-size exponents in Propositions~\ref{prop:V matrix probability for all I,J} and~\ref{prop:A matrix probability for all I,J} both equal to $C_2$, while $C_1$ dominates the corresponding leading constants. Hence the assumption $q\ge n+C_1n^{C_2}$ allows us to apply both propositions. A union bound gives simultaneous validity of the two rank conditions with probability at least $1-2^{-n+1}$, and the rank criterion then yields the claimed robustness of $((1-\epsilon)n-2k+1)$-permutation-insdel.

\end{proof}

\subsection{Alphabet lower bound}\label{sec:alphabet lower bound}

In this section, we prove an alphabet-size lower bound for codes robust against permutation-insdel adversaries. The argument is purely combinatorial and does not use any algebraic structure of the code. In particular, it applies first to arbitrary $q$-ary codes, and the Reed--Solomon lower bound follows as a special case.

Let $\Sigma$ be an alphabet of size $q$. We consider a general code $\mathcal{C}\subseteq\Sigma^n$, not necessarily linear. In a permutation channel, the order of the coordinates is irrelevant. Thus it is natural to represent a received deletion output by its symbol multiplicities. This leads to the following histogram formulation.

For a word $\mathbf{c}=(c_1,\dots,c_n)\in\Sigma^n$, define its \emph{histogram} by $\mathrm{Hist}(\mathbf{c})=(m_a(\mathbf{c}))_{a\in\Sigma}\in \mathbb{Z}_{\ge 0}^q$, where $m_a(\mathbf{c})=\left|\{i\in[n]:c_i=a\}\right|$. Thus $\mathrm{Hist}(\mathbf{c})$ lies in the \emph{discrete simplex}
$$
\Delta_{n,q}:=
\left\{\mathbf{m}=(m_a)_{a\in\Sigma}\in\mathbb{Z}_{\ge 0}^{q}:
\sum_{a\in\Sigma}m_a=n\right\}.
$$
More generally, the number of histograms of total weight $m$ over a $q$-symbol alphabet is $|\Delta_{m,q}|=\binom{q+m-1}{m}$.

A deletion decreases one coordinate of the histogram by one. Hence, after exactly $t$ deletions, the received histogram has total weight $n-t$.

For $\mathbf{m}\in\Delta_{n,q}$ and $0\le t\le n$, define the \emph{exact $t$-deletion ball} of $\mathbf{m}$ by
$$
\mathcal{D}_t(\mathbf{m})
:=
\{\mathbf{y}\in\Delta_{n-t,q}:0\le y_a\le m_a
\text{ for every } a\in\Sigma\}.
$$
Equivalently, $\mathbf{y}\in\mathcal{D}_t(\mathbf{m})$ if and only if $\mathbf{y}$ is obtained from $\mathbf{m}$ by decreasing its coordinates by a nonnegative vector of total weight $t$.

\begin{lemma}\label{lemma:alphabet general}
    Let $\mathcal{C}\subseteq\Sigma^n$ be a code that is robust against the $t$-permutation-insdel adversary. Then $|\mathcal{C}|\le \binom{q+n-t-1}{n-t}$.
\end{lemma}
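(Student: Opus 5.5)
The plan is a packing argument in the discrete simplex $\Delta_{n-t,q}$. The idea is to assign to each codeword a distinguished received histogram of total weight $n-t$ and show that this assignment is injective. Since there are only $|\Delta_{n-t,q}|=\binom{q+n-t-1}{n-t}$ possible histograms, this gives the bound.

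\textbf{Step 1: histograms of codewords are distinct.} We first observe that distinct codewords $\mathbf{c}\neq\mathbf{c}'$ in $\mathcal{C}$ must have distinct histograms. Suppose instead that $\mathrm{Hist}(\mathbf{c})=\mathrm{Hist}(\mathbf{c}')$. Then there are permutations $\pi,\pi'$ with $\pi(\mathbf{c})=\pi'(\mathbf{c}')$. The adversary can apply these permutations and then delete the same $t$ positions from both words, producing the same output from two distinct codewords. This contradicts Definition~\ref{def:equivalent perm insdel}.

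\textbf{Step 2: the exact deletion balls are pairwise disjoint.} More strongly, we claim that for distinct codewords the exact deletion balls $\mathcal{D}_t(\mathrm{Hist}(\mathbf{c}))$ and $\mathcal{D}_t(\mathrm{Hist}(\mathbf{c}'))$ are disjoint. Take any $\mathbf{y}\in\mathcal{D}_t(\mathrm{Hist}(\mathbf{c}))$. Since $y_a\le m_a(\mathbf{c})$ for every $a\in\Sigma$, we can choose, for each symbol $a$, exactly $y_a$ positions of $\mathbf{c}$ carrying the symbol $a$. This gives a distinct-element sequence $I\in[n]^{n-t}$. Listing these positions in a canonical order, for instance grouped by symbol according to a fixed order on $\Sigma$, makes $\mathbf{c}_I$ a canonical word $w(\mathbf{y})$ that depends only on $\mathbf{y}$.

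If $\mathbf{y}$ also lies in $\mathcal{D}_t(\mathrm{Hist}(\mathbf{c}'))$, the same construction gives a distinct-element sequence $J$ with $\mathbf{c}'_J=w(\mathbf{y})=\mathbf{c}_I$. This contradicts the equivalent characterization in Definition~\ref{def:equivalent perm insdel}. Hence the balls are disjoint.

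\textbf{Step 3: counting.} Each ball is nonempty, because $\sum_a m_a=n\ge n-t$ and so we can decrease the coordinates by a total of $t$. All balls are contained in $\Delta_{n-t,q}$. Since they are pairwise disjoint, $|\mathcal{C}|\le|\Delta_{n-t,q}|=\binom{q+n-t-1}{n-t}$.

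\textbf{Main obstacle.} The only delicate point is in Step 2: matching a histogram to an actual deletion-plus-permutation output. We must check that the canonical word is realizable from both codewords by a permutation followed by exactly $t$ deletions. It is, because permuting the chosen positions to the front and then deleting the remaining $t$ symbols uses exactly $t$ insdel errors, which is allowed. Note that Step 1 is really the special case $t=0$ of this argument and is not needed separately.
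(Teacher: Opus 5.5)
Your proposal is correct and takes essentially the same route as the paper: the exact $t$-deletion balls $\mathcal{D}_t(\mathrm{Hist}(\mathbf{c}))$ are pairwise disjoint and nonempty inside $\Delta_{n-t,q}$, and counting them gives the bound. Your canonical-word construction in Step 2 spells out the realizability of a common histogram as a common received word, which the paper asserts in one line; as you note yourself, Step 1 is redundant.
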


\begin{proof}
    Set $\ell=n-t$. Since the adversary is allowed to perform $t$ deletions, no two distinct codewords can produce the same length-$\ell$ received histogram after deleting $t$ symbols. Equivalently, for any distinct codewords $\mathbf{c},\mathbf{c}'\in\mathcal{C}$, $$ \mathcal{D}_t(\mathrm{Hist}(\mathbf{c})) \cap \mathcal{D}_t(\mathrm{Hist}(\mathbf{c}')) = \varnothing. $$ Indeed, if a histogram $\mathbf{y}$ belonged to both deletion balls, then after suitable coordinate permutations and $t$ deletions, both codewords could yield the same received word, whose histogram equals $\mathbf{y}$, contradicting robustness.

    All exact $t$-deletion histograms lie in the same discrete simplex $\Delta_{\ell,q}$. Since the deletion balls are pairwise disjoint subsets of $\Delta_{\ell,q}$, we have
    $$
    \sum_{\mathbf{c}\in\mathcal{C}}
    |\mathcal{D}_t(\mathrm{Hist}(\mathbf{c}))|
    \le
    |\Delta_{\ell,q}|.
    $$
    Each exact deletion ball is nonempty, so
    $$
    |\mathcal{C}|
    \le
    \sum_{\mathbf{c}\in\mathcal{C}}
    |\mathcal{D}_t(\mathrm{Hist}(\mathbf{c}))|
    \le
    |\Delta_{\ell,q}|
    =
    \binom{q+\ell-1}{\ell}.
    $$
    Since $\ell=n-t$, the desired bound follows.
\end{proof}

We now prove the alphabet-size lower bound stated in Theorem~\ref{thm:alphabet lower informal}, which we restate for convenience. The condition $R+\delta< 1$ is relative to the Singleton-type bound $k+t\le n$ when $k=Rn$ and $t=\delta n$. Here, for convenience, we assume $Rn$ and $\delta n$ to be integers.

\begin{thm}\label{thm:alphabet lowerbound}
    Let $R,\delta\in(0,1)$ be constants satisfying $R+\delta< 1$. Let $\mathcal{C}\subseteq \Sigma^n$ be a $q$-ary code with $|\mathcal{C}|=q^k$ and $k=Rn$. Suppose that $\mathcal{C}$ is robust against the $t$-permutation-insdel adversary, where $t=\delta n$. Then, for all sufficiently large $n$, the alphabet size satisfies
    $$
    q\ge C_{R,\delta}\, n^{\frac{1-\delta}{1-\delta-R}},
    $$
    where $C_{R,\delta}>0$ depends only on $R$ and $\delta$.
\end{thm}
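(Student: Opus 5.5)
The plan is to apply Lemma~\ref{lemma:alphabet general} with $|\mathcal{C}|=q^k$ and $\ell=n-t=(1-\delta)n$, and then compare the two sides asymptotically. This gives
$$
q^{Rn}\le \binom{q+\ell-1}{\ell}.
$$
We may assume $q\le n$; otherwise the bound is trivial once $C_{R,\delta}$ is small, since the exponent $\frac{1-\delta}{1-\delta-R}$ is a fixed constant. Actually, rather than assume this, we split into two regimes: $q\ge \ell$ and $q<\ell$.

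\emph{Regime $q<\ell$.} Here we use
$$
\binom{q+\ell-1}{\ell}=\binom{q+\ell-1}{q-1}\le \left(\frac{e(q+\ell)}{q-1}\right)^{q-1}\le \left(\frac{2e\ell}{q-1}\right)^{q}.
$$
Taking logarithms, $Rn\log q\le q\log(2e\ell/(q-1))$. If $q$ were at most about $\ell$, the right side would be at most $O(\ell)$ up to a log factor. This forces $q=\Omega(n\log q/\log(n/q))$, which contradicts $q<\ell$ only if the constants align. This regime is therefore not quite clean, and I would treat it by showing it is impossible for large $n$. Take $q=\theta n$ with $\theta\le 1-\delta$. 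The left side is $Rn\log(\theta n)$, which grows like $Rn\log n$. The right side is $\theta n\log(2e(1-\delta)/\theta)=O(n)$. This is a contradiction for large $n$. For smaller $q$ the right side $q\log(2e\ell/q)$ is increasing in $q$ on $q\le \ell$, so it is at most $O(n)$, while $Rn\log q\ge Rn\log 2$. I need care here: for tiny $q$ the left side $Rn\log q$ is also $O(n)$. The tool is the monotonicity of $g(q)=q\log(2e\ell/q)-Rn\log q$. For $q\le \ell$, the key point is that $q\log(2e\ell/q)\ge Rn\log q$ fails once $q\ge c\,n/\log$\ldots{} Making this rigorous is the main obstacle. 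I expect a cleaner route is the following: in this regime a bound of the form $q\ge C n^{\beta}$ with $\beta>1$ would already contradict $q<\ell$, so it suffices to show that $q<\ell$ is impossible for large $n$. I would show this directly by noting that $q^{Rn}\le \binom{q+\ell-1}{\ell}\le 2^{q+\ell}\le 2^{2\ell}$. This gives $q\le 2^{2(1-\delta)/R}$, a constant. Then, with $q$ constant, $\binom{q+\ell-1}{q-1}\le (q+\ell)^{q}=\mathrm{poly}(n)$, while $q^{Rn}\ge 2^{Rn}$. This is a contradiction for large $n$, provided $q\ge 2$. So for large $n$ we must have $q\ge \ell$.

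\emph{Regime $q\ge\ell$.} Here we bound
$$
\binom{q+\ell-1}{\ell}\le \left(\frac{e(q+\ell)}{\ell}\right)^{\ell}\le \left(\frac{2eq}{\ell}\right)^{\ell}.
$$
Hence $q^{Rn}\le (2eq/\ell)^{(1-\delta)n}$. Taking $n$-th roots gives $q^{R}\le (2e)^{1-\delta}q^{1-\delta}\ell^{-(1-\delta)}$, that is,
$$
q^{1-\delta-R}\ge (2e)^{-(1-\delta)}\bigl((1-\delta)n\bigr)^{1-\delta}.
$$
Since $1-\delta-R>0$, raising both sides to the power $1/(1-\delta-R)$ yields $q\ge C_{R,\delta}\,n^{\frac{1-\delta}{1-\delta-R}}$, where
$$
C_{R,\delta}=\left(\frac{1-\delta}{2e}\right)^{\frac{1-\delta}{1-\delta-R}}.
$$
This is the claimed bound. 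The only delicate step is excluding the small-$q$ regime, and the crude estimate $\binom{q+\ell-1}{\ell}\le 2^{q+\ell}$ followed by the polynomial bound handles it.
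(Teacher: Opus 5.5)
Your final argument is correct and follows the same route as the paper. You apply Lemma~\ref{lemma:alphabet general}, use $\binom{q+\ell-1}{\ell}\le 2^{q+\ell}$ with a polynomial-versus-exponential comparison to rule out small $q$, and then use $\binom{N}{\ell}\le (eN/\ell)^{\ell}$. Your case split yields the slightly sharper conclusion $q\ge\ell$ for $n\ge n_0(R,\delta)$, with an explicit constant, where the paper only derives $q=\Omega(n)$ with an unspecified $C_0$.

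When you write it up, drop the retracted opening claim that $q>n$ makes the bound trivial; it is false because the exponent $\frac{1-\delta}{1-\delta-R}$ exceeds $1$. Also drop the exploratory middle of the $q<\ell$ case, and state explicitly that $q\ge 2$ is assumed, as the paper also tacitly does.
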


\begin{proof}
    Set $\ell=n-t=(1-\delta)n$. By Lemma~\ref{lemma:alphabet general},
    $$
    q^k=|\mathcal{C}|\le \binom{q+\ell-1}{\ell}.
    $$

    We first show that $q$ cannot be bounded by a constant. If $q=O(1)$, then $\binom{q+\ell-1}{\ell}$ is polynomial in $n$, whereas $q^k=q^{Rn}$ is exponential in $n$, a contradiction for all sufficiently large $n$.

    We next show that $q=\Omega(n)$. Suppose, to the contrary, that $q=o(n)$. By the previous paragraph, we may assume that $q\to\infty$. Since $\ell=\Theta(n)$, 
    $$
    \binom{q+\ell-1}{\ell}\le 2^{q+\ell}=2^{O(n)}.
    $$
    It follows that $k\log q=O(n)$, so $k=O(n/\log q)=o(n)$, contradicting $k=Rn$.

    Hence $q=\Omega(n)$, and there exists a constant $C_0=C_0(R,\delta)>0$ such that $n\le C_0q$ for all sufficiently large $n$. Consequently,
    $$
    q^k\le \binom{q+\ell-1}{\ell}
    \le
    \left(\frac{e(q+\ell)}{\ell}\right)^\ell
    \le
    \left(\frac{e(q+n)}{\ell}\right)^\ell
    \le
    \left(\frac{e(1+C_0)q}{\ell}\right)^\ell.
    $$
    Rearranging gives
    $$
    q\ge
    \left(\frac{\ell}{e(1+C_0)}\right)^{\frac{\ell}{\ell-k}}.
    $$
    Since $\ell=(1-\delta)n$ and $k=Rn$, for all sufficiently large $n$, we have
    $$
    q\ge C_{R,\delta}\, n^{\frac{1-\delta}{1-\delta-R}},
    $$
    for some constant $C_{R,\delta}>0$ depending only on $R$ and $\delta$.
\end{proof}

\begin{rmk}
    In Lemma~\ref{lemma:alphabet general}, we used only the trivial bound
    $|\mathcal{D}_t(\mathrm{Hist}(\mathbf{c}))|\ge 1$. This already suffices to obtain the exponent in Theorem~\ref{thm:alphabet lowerbound}. Even if one could obtain a uniform exponential lower bound on the sizes of the deletion balls, such an improvement would affect only the constant $C_{R,\delta}$, not the exponent of $n$.
\end{rmk}

As a direct consequence, we obtain the following lower bound for Reed--Solomon codes.

\begin{cor}\label{cor:alphabet lower RS}
    Fix constants $R\in(0,1)$ and $\epsilon\in[0,1)$ such that $2R+\epsilon<1$, and let $k=Rn$. Suppose that the Reed--Solomon code $\mathrm{RS}_{n,k}(\boldsymbol{\alpha})$ over $\mathbb{F}_q$ is robust against the $((1-\epsilon)n-2k+1)$-permutation-insdel adversary. Then, for all sufficiently large $n$,
    $$
    q\ge C_{R,\epsilon}\,n^{\frac{2R+\epsilon}{R+\epsilon}},
    $$
    where $C_{R,\epsilon}>0$ depends only on $R$ and $\epsilon$.
\end{cor}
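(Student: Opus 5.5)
The plan is to apply Theorem~\ref{thm:alphabet lowerbound} directly to $\mathcal{C}=\mathrm{RS}_{n,k}(\boldsymbol{\alpha})$, which is a $q$-ary code with $|\mathcal{C}|=q^k$, $k=Rn$.

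\emph{Step 1 (parameters).} The number of tolerated errors is $t=(1-\epsilon)n-2k+1$. Set $\delta:=t/n=1-\epsilon-2R+1/n$. Since $\delta$ depends slightly on $n$, I would avoid this dependence by monotonicity. Robustness against $t$ errors implies robustness against $t-1$ errors, because an adversary with fewer errors is weaker. So $\mathcal{C}$ is robust against $t_0:=(1-\epsilon-2R)n$ errors. Take $\delta_0:=1-\epsilon-2R$. By hypothesis $2R+\epsilon<1$, so $\delta_0\in(0,1)$. Also $R+\delta_0=1-\epsilon-R<1$ since $R>0$. Hence the hypotheses of Theorem~\ref{thm:alphabet lowerbound} hold with $\delta=\delta_0$. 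The degenerate case $\epsilon=0$ is allowed and needs nothing extra, since $\delta_0>0$ already follows from $2R<1$. As in the paper, I would assume $\delta_0 n$ is an integer; otherwise I would round down, which only shrinks $t_0$ and keeps robustness.

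\emph{Step 2 (exponent).} Theorem~\ref{thm:alphabet lowerbound} gives
$$
q\ge C_{R,\delta_0}\,n^{\frac{1-\delta_0}{1-\delta_0-R}}.
$$
Substituting $1-\delta_0=2R+\epsilon$ and $1-\delta_0-R=R+\epsilon$ yields the exponent $\frac{2R+\epsilon}{R+\epsilon}$. Setting $C_{R,\epsilon}:=C_{R,\delta_0}$, which depends only on $R$ and $\epsilon$, completes the argument.

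\emph{Main subtlety.} The only delicate point is the $+1$ offset and possible non-integrality in $t$. I handle both by the monotone reduction to $t_0$ rather than re-running the counting argument. Alternatively, one could apply Lemma~\ref{lemma:alphabet general} directly with $\ell=n-t=2k-1+\epsilon n$, which changes $\ell$ only by $1$ and so affects only constants.
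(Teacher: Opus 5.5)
Your proposal is correct and follows the paper's route: the corollary is Theorem~\ref{thm:alphabet lowerbound} applied to the Reed--Solomon code with $|\mathcal{C}|=q^k$ and $\delta=1-\epsilon-2R$, which gives $1-\delta=2R+\epsilon$ and $1-\delta-R=R+\epsilon$. Your monotonicity reduction from $t=(1-\epsilon)n-2k+1$ to $t_0=t-1=\delta n$ is a clean way to handle the $+1$ offset that the paper leaves implicit.
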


Thus, unlike the ordinary insdel setting, one cannot hope for a linear-size alphabet in the permutation-insdel setting for constant-rate Reed--Solomon codes.

\section{Optimal Decoding Algorithm}\label{sec:decoding}

In this section, we give an average $O(n)$-time decoder in the permutation-insdel setting for the explicit two-dimensional Reed--Solomon codes constructed in~\cite{Con2024twodimension}. More precisely, for these codes, the decoder recovers the transmitted codeword from any received word obtained by an arbitrary coordinate permutation followed by at most $n-3$ insdel errors.

\subsection{Singhvi's deletion decoder}\label{section:Singhvi decoding}

We first recall the explicit two-dimensional Reed--Solomon codes of~\cite[Proposition 2.3]{Con2024twodimension} and the linear-time deletion decoder of~\cite{Singhvi2026twodimension}. The construction gives $\textup{RS}_{n,2}$ codes over $\mathbb{F}_{q^3}$ with block length up to $q-1$, and hence alphabet size $O(n^3)$; these codes can correct $n-3$ insdel errors. Later in~\cite[Theorem 4]{Con2025anonymous} it was shown that the same codes are robust against the $(n-3)$-permutation-insdel adversary. However, the decoder of~\cite{Singhvi2026twodimension} applies only to the deletion setting: given any three surviving genuine symbols, it recovers the transmitted codeword in $O(n)$ time.

\begin{constr}\cite{Con2024twodimension,Con2025anonymous}\label{constr:2dim} 
    Let $q$ be an odd prime power, and $\Lambda\subseteq \mathbb{F}_q\setminus\{0\}$ be a subset of size $n$. Let $\gamma$ be a root
    of a degree-$3$ irreducible polynomial over $\mathbb{F}_q$. Let
    $\boldsymbol{\alpha}=(\alpha_1,\alpha_2,\dots,\alpha_n)$ be an ordering of
    the $n$ elements $a+a^2\gamma$, where $a\in \Lambda$. Then
    $\textup{RS}_{n,2}(\boldsymbol{\alpha})$ over $\mathbb{F}_{q^3}$ is robust against the $(n-3)$-permutation-insdel adversary. Moreover, the block length $n$ can be as large as $q-1$. 
\end{constr}

\begin{lemma}\cite[Theorem 2]{Singhvi2026twodimension}\label{lemma:2dim alg deletion}
    Let $\mathcal{C}=\textup{RS}_{n,2}(\boldsymbol{\alpha})$ be the code over
    $\mathbb{F}_{q^3}$ specified in Construction~\ref{constr:2dim}. Then there exists a decoder
    that recovers the transmitted codeword from any $n-3$ deletions in $O(n)$ time.
\end{lemma}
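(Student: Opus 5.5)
The plan is to reduce decoding to recovering the three (unknown) evaluation points underlying the surviving symbols, using a constant-time algebraic inversion, and then to re-encode. After $n-3$ deletions we receive $(y_1,y_2,y_3)=(f(\alpha_{i_1}),f(\alpha_{i_2}),f(\alpha_{i_3}))$ with $i_1<i_2<i_3$ unknown and $f(x)=u+vx$.

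\textbf{Constant codeword.} If $y_1=y_2=y_3$, I would output the constant codeword $(y_1,\dots,y_1)$. This is legitimate because $v\neq 0$ forces the three values to be distinct, since the $\alpha_i$ are distinct.

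\textbf{Inverse-ratio step.} Otherwise $v\neq0$, and I would compute the ratio
$$
r=\frac{y_2-y_1}{y_3-y_1}=\frac{\alpha_{i_2}-\alpha_{i_1}}{\alpha_{i_3}-\alpha_{i_1}}\in\mathbb{F}_{q^3},
$$
which no longer depends on $f$. Write $\alpha_{i_j}=a_j+a_j^2\gamma$ with $a_j\in\Lambda$. The basic identity is
$$
\alpha_{i_2}-\alpha_{i_1}=(a_2-a_1)\bigl(1+(a_1+a_2)\gamma\bigr).
$$
Hence $r\bigl(1+s\gamma\bigr)=\lambda+\mu\gamma$ with the unknowns
$$
s=a_1+a_3,\qquad \lambda=\frac{a_2-a_1}{a_3-a_1},\qquad \mu=\lambda(a_1+a_2),
$$
all lying in $\mathbb{F}_q$.

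Write $r=r_0+r_1\gamma+r_2\gamma^2$, and reduce $\gamma^3$ using the cubic minimal polynomial of $\gamma$. Comparing the coefficients of $1,\gamma,\gamma^2$ in the basis $\{1,\gamma,\gamma^2\}$ gives a $3\times3$ linear system over $\mathbb{F}_q$ in $(s,\lambda,\mu)$. The $\gamma^2$-coordinate alone is linear in $s$, so it determines $s$, and then the other two coordinates give $\lambda$ and $\mu$.

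From $a_1+a_3=s$, $a_1+a_2=\mu/\lambda$, and $a_2-a_1=\lambda(a_3-a_1)$ we get another linear system for $(a_1,a_2,a_3)$. Since $q$ is odd, this system is solved in $O(1)$ field operations.

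\textbf{Re-encoding.} Having $a_1,a_2$, I would recover the corresponding evaluation points and set
$$
v=\frac{y_2-y_1}{\alpha_{i_2}-\alpha_{i_1}},\qquad u=y_1-v\alpha_{i_1}.
$$
Then evaluating $f$ on all $n$ points produces the codeword in $O(n)$ time. If the codeword must also be returned with positions labelled, $a\mapsto$ index is a precomputed lookup, for instance a table indexed by $\Lambda$.

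\textbf{Main obstacle: well-posedness.} The step I expect to be hardest is showing the linear systems are nondegenerate, i.e.\ that the coefficient of $s$ in the $\gamma^2$-equation and the determinant of the $(a_1,a_2,a_3)$-system do not vanish. My first attempt is to argue indirectly from the code's correctness. Construction~\ref{constr:2dim} corrects $n-3$ deletions, so distinct triples of positions yield distinct ratios $r$. Consequently the true solution is the unique one in $\Lambda^3$, and any degenerate case would produce two consistent triples, contradicting this.

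Failing that indirect route, I would compute the coefficient explicitly. It should equal $r_1+c\,r_2$-type expressions that vanish only when $r\in\mathbb{F}_q$. That case forces $\alpha_{i_1},\alpha_{i_2},\alpha_{i_3}$ to be $\mathbb{F}_q$-collinear, which is impossible for distinct points on the curve $a\mapsto a+a^2\gamma$.
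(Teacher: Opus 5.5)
Your plan is correct and follows the mechanism the paper recalls from Singhvi's decoder (the paper cites the result rather than proving it): invert the ratio map in $O(1)$ field operations using the special form $\alpha_i=a_i+a_i^2\gamma$, interpolate the affine polynomial, and re-encode in $O(n)$ time. For well-posedness, rely on your explicit route and drop the indirect one, since uniqueness within $\Lambda^3$ does not rule out a degenerate system over $\mathbb{F}_q$ whose other solutions fall outside $\Lambda^3$; explicitly, writing $\gamma^3=c_0+c_1\gamma+c_2\gamma^2$, the $\gamma^2$-equation is $r_2+s(r_1+c_2r_2)=0$, and its coefficient can vanish only if $r\in\mathbb{F}_q$ (the true $s$ then forces $r_2=r_1=0$), which is impossible because no three distinct points of the parabola $a\mapsto a+a^2\gamma$ are $\mathbb{F}_q$-collinear, while the $(a_1,a_2,a_3)$-system has determinant $\pm 2(1-\lambda)\neq 0$ because $q$ is odd and $a_2\neq a_3$.
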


We briefly recall  the deletion decoder that will be used later. Suppose exactly three genuine symbols
$(c_{\kappa_1},c_{\kappa_2},c_{\kappa_3})$ remain, where $\kappa_1<\kappa_2<\kappa_3$. For a nonconstant affine message polynomial $f(x)=m_0+m_1x$ with $m_1\neq 0$, we have
$$
  \beta
  :=
  \frac{c_{\kappa_1}-c_{\kappa_2}}{c_{\kappa_2}-c_{\kappa_3}}
  =
  \frac{\alpha_{\kappa_1}-\alpha_{\kappa_2}}
       {\alpha_{\kappa_2}-\alpha_{\kappa_3}}.
$$
For this construction, the map
$$
  \Gamma(\alpha_i,\alpha_j,\alpha_k)
  :=
  \frac{\alpha_i-\alpha_j}{\alpha_j-\alpha_k}
$$
is injective on triples $i<j<k$. Using the special form $\alpha_i=a_i+a_i^2\gamma$, the decoder can invert this map in $O(1)$ field operations, with a hash table storing $a_i\mapsto i$. Thus the inverse-ratio subroutine recovers $(\kappa_1,\kappa_2,\kappa_3)$ from $\beta$ in $O(1)$ time. Once these three indices are known, interpolating the affine polynomial also takes $O(1)$ time, while reconstructing or outputting the full codeword takes $O(n)$ time.

\subsection{Decoder for permutations and insdel errors}

In this section, we give a decoding algorithm for the two-dimensional Reed--Solomon code in Construction~\ref{constr:2dim} against an arbitrary coordinate permutation followed by at most $n-3$ insdel errors. A straightforward approach would be to apply the deletion decoder to every triple of symbols in the received word. Each triple produces at most one candidate affine polynomial $\hat f(x)$, and the transmitted polynomial $f(x)$ could then be identified by a majority vote among these candidates. This brute-force strategy, however, requires an examination of $O(n^3)$ triples and therefore has cubic time complexity. We show that this exhaustive search can be avoided, leading to an average $O(n)$-time decoder.

Let $\mathbf{c}_f$ be the transmitted codeword, corresponding to the affine function $f$, and let $\mathbf{y}=(y_1,\ldots,y_m)$ be the received word, with $m\in [3,2n-3]$. 

\begin{defn}[Genuine position and genuine triple]
    Let $\mathbf{y}=(y_1,\ldots,y_m)$ be a received word obtained from a transmitted codeword $\mathbf{c}_f$ by permutations and insdel errors. A position of $\mathbf{y}$ is called a \emph{genuine position} if the corresponding symbol is not inserted, i.e., it comes from a non-deleted symbol of the permuted transmitted codeword.

    Given three distinct genuine positions $i,j,k\in[m]$, we call $(y_i,y_j,y_k)$ a \emph{genuine triple} if it appears as a subsequence of $\mathbf{c}_f$. Equivalently, these three positions not only originate from the transmitted codeword, but also preserve their relative order in $\mathbf{c}_f$.
\end{defn}

In the next lemma, we partition $\mathbf{y}$ into $b:=\lceil m/5\rceil$ consecutive blocks of length at most $5$ and consider all ordered triples of distinct positions within each block. For example, if a block is $(y_1,\dots,y_5)$, then we consider all triples $(y_i,y_j,y_k)$ with distinct $i,j,k\in[5]$. Thus, the total number of triples considered is at most $3!\binom{5}{3}b=60b$.  

\begin{lemma}\label{lemma:many genuine triples}
    Let $\mathbf{y}\in\mathbb{F}_{q^3}^m$ be obtained from $\mathbf{c}_f$ by an arbitrary coordinate permutation followed by at most $n-3$ insdel errors. Partition $\mathbf{y}$ into $b=\lceil m/5\rceil$ consecutive blocks of length at most $5$. If all ordered triples of distinct positions inside each block are considered, then at least
    $$
    h:=\left\lceil\frac{m+3}{2}\right\rceil-2\left\lceil\frac{m}{5}\right\rceil
    $$
    of these triples are genuine triples.
\end{lemma}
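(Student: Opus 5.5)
First I will count genuine positions. Let $d$ and $s$ be the numbers of deletions and insertions, with $d+s\le n-3$. Then $m=n-d+s$, and the number of genuine positions is $g=n-d=m-s$. From $d+s\le n-3$ we get $(n-d)+(m-(n-d))\le n-3+(n-2d)$. It is cleaner to write $s=m-g$ and $d=n-g$. Then $n-g+m-g\le n-3$, so $2g\ge m+3$, and hence
$$
g\ge \left\lceil\frac{m+3}{2}\right\rceil.
$$

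Next I distribute the genuine positions among the $b=\lceil m/5\rceil$ blocks. Let block $\beta$ contain $g_\beta$ genuine positions, so that $\sum_\beta g_\beta=g$. Within a block, any three distinct genuine positions $i,j,k$ come from three distinct coordinates of $\mathbf{c}_f$. Those coordinates have some relative order in $\mathbf{c}_f$, so exactly one of the $3!$ orderings of $\{i,j,k\}$ lists them as a subsequence of $\mathbf{c}_f$. Since all ordered triples inside the block are enumerated, each $3$-subset of genuine positions in a block contributes at least one genuine triple. The number of genuine triples found is therefore at least $\sum_\beta \binom{g_\beta}{3}$.

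It remains to show $\sum_\beta\binom{g_\beta}{3}\ge g-2b$. The elementary inequality $\binom{x}{3}\ge x-2$ holds for every integer $0\le x\le 5$:
\begin{itemize}
  \item for $x=0,1,2$ the right side is at most $0$;
  \item for $x=3$ both sides equal $1$;
  \item for $x=4$ we have $4\ge 2$;
  \item for $x=5$ we have $10\ge 3$.
\end{itemize}
Summing over the blocks gives $\sum_\beta\binom{g_\beta}{3}\ge g-2b\ge \lceil (m+3)/2\rceil-2\lceil m/5\rceil=h$.

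The only delicate points are the counting identity $2g\ge m+3$ and the observation that each genuine $3$-subset yields a genuine ordered triple. For the latter, I would state explicitly that genuine positions map injectively to coordinates of $\mathbf{c}_f$, so the three symbols have a well-defined order in $\mathbf{c}_f$. No single step should be a real obstacle.
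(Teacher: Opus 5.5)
Your proposal is correct and follows essentially the same argument as the paper: you use $d=n-g$ and $s=m-g$ to get $2g\ge m+3$, note that every $3$-subset of genuine positions within a block yields at least one genuine ordered triple, and sum $\binom{g_\beta}{3}\ge g_\beta-2$ over the $b$ blocks. One minor point: under the subsequence-based definition with repeated symbols (e.g., a constant codeword), ``exactly one'' ordering should read ``at least one,'' but your argument only uses the lower bound.
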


\begin{proof}
    Let $g$ be the number of genuine surviving symbols in $\mathbf{y}$. Suppose that $\mathbf{y}$ is obtained from $\mathbf{c}_f$ by first permuting the codeword and then performing $d$ deletions and $i$ insertions, where $d+i\le n-3$. Then $g=n-d=m-i$, and hence
    $$
    g=\frac{n+m-(d+i)}{2}
    \ge
    \left\lceil\frac{m+3}{2}\right\rceil.
    $$

    For each $j\in[b]$, let $g_j$ be the number of genuine positions in the $j$-th block. Then $\sum_{j\in[b]}g_j=g$. Since all ordered triples inside each block are considered, every three genuine positions in the same block contribute at least one genuine triple, namely the ordering that agrees with their order in $\mathbf{c}_f$. Therefore, the number of genuine triples considered is at least
    $$
    \sum_{j\in[b]}\binom{g_j}{3}
    \ge
    \sum_{j\in[b]}(g_j-2)
    =
    g-2b
    \ge
    \left\lceil\frac{m+3}{2}\right\rceil
    -
    2\left\lceil\frac{m}{5}\right\rceil.$$ This completes the proof.
\end{proof}

We now introduce the first step of the decoding algorithm. Let $\mathsf{Seed}$ denote the inverse-ratio seed algorithm obtained from the deletion decoder in Lemma~\ref{lemma:2dim alg deletion}. Given an ordered triple of received symbols, $\mathsf{Seed}$ either outputs the corresponding affine polynomial or returns no output. As discussed in Section~\ref{section:Singhvi decoding}, each call to $\mathsf{Seed}$ takes $O(1)$ time.

We apply $\mathsf{Seed}$ to all ordered triples within the blocks described above. The affine polynomials returned by these calls are collected as candidates, and we keep only those that receive at least $h$ votes. We also use the following convention for the seed subroutine: if the input triple is $(a,a,a)$, then $\mathsf{Seed}$ outputs the constant polynomial $f(x)=a$. Otherwise, when the three symbols are not all equal, $\mathsf{Seed}$ applies the inverse-ratio procedure described above.

\begin{algorithm}[h]
\caption{Local candidates majority voting}
\label{alg:Local candidates majority voting}
\KwIn{Integer $m$ and $b=\left\lceil\frac{m}{5}\right\rceil$; received word $\mathbf{y}=(y_1,\dots,y_m)$; majority threshold $h$; inverse-ratio seed algorithm $\mathsf{Seed}$.}
\KwOut{A candidate list $\mathcal{F}$ of affine polynomials with at least $h$ votes.}

$\mathcal{F}\leftarrow \varnothing$. \\
Initialize an empty hash table $\mathsf{Vote}$. \\
Partition $\mathbf{y}$ into $b$ consecutive blocks of length at most $5$. \\
\For{each block $W$}{
    \For{each $3$-subset of distinct positions $\lambda_1,\lambda_2,\lambda_3$ in $W$}{
        \For{each permutation $\pi$ of $[3]$}{
            \If{$\mathsf{Seed}(y_{\lambda_{\pi(1)}},y_{\lambda_{\pi(2)}},y_{\lambda_{\pi(3)}})$ returns an affine polynomial $\hat{f}(x)$}{
                $\mathsf{Vote}[\hat{f}(x)]\gets \mathsf{Vote}[\hat{f}(x)]+1$. \\
            }
        }
    }
}
\For{each affine polynomial $\hat{f}(x)$ stored in $\mathsf{Vote}$}{
    \If{$\mathsf{Vote}[\hat{f}(x)]\ge h$}{
        $\mathcal{F}\leftarrow \mathcal{F}\cup \{\hat{f}(x)\}$. \\
    }
}
Output $\mathcal{F}$.
\end{algorithm}

\begin{lemma}[Behavior of Algorithm~\ref{alg:Local candidates majority voting}]\label{lemma:behavior alg majority voting}
    For any transmitted codeword $\mathbf{c}_f$ corresponding to the polynomial $f(x)$, and for any possible received word $\mathbf{y}$, Algorithm~\ref{alg:Local candidates majority voting} outputs a list $\mathcal{F}$ of size $O(1)$ such that $f(x)\in\mathcal{F}$. Moreover, Algorithm~\ref{alg:Local candidates majority voting} runs in $O(n)$ time.
\end{lemma}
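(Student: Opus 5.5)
We prove the three claims in turn: that $f\in\mathcal{F}$, that $|\mathcal{F}|=O(1)$, and that the running time is $O(n)$.

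\emph{Membership.} By Lemma~\ref{lemma:many genuine triples}, at least $h$ of the triples examined are genuine triples. For each genuine triple we check that $\mathsf{Seed}$ returns exactly $f$. Suppose first that $f$ is nonconstant. Then the three symbols come from three distinct positions of $\mathbf{c}_f$, listed in their original order, so they are pairwise distinct. The ratio $\beta$ computed by $\mathsf{Seed}$ equals $\Gamma(\alpha_{\kappa_1},\alpha_{\kappa_2},\alpha_{\kappa_3})$. Since $\Gamma$ is injective on increasing triples, the inverse-ratio step recovers $(\kappa_1,\kappa_2,\kappa_3)$, and interpolation returns $f$. Suppose instead that $f$ is constant equal to $a$. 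Then every genuine triple is $(a,a,a)$, and by our convention $\mathsf{Seed}$ outputs $a$. In both cases each genuine triple adds one vote to $f$, so $\mathsf{Vote}[f]\ge h$ and $f\in\mathcal{F}$.

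\emph{List size.} Every examined triple casts at most one vote, so the total number of votes is at most $60b=60\lceil m/5\rceil$. Every member of $\mathcal{F}$ has at least $h$ votes, hence $|\mathcal{F}|\le 60\lceil m/5\rceil/h$. Since $h\ge (m+3)/2-2(m/5+1)=m/10-1/2$, the ratio is $O(1)$ whenever $m$ exceeds a suitable absolute constant $m_0$. This is the delicate step: for small $m$ the threshold $h$ may be tiny or even nonpositive. In that range we argue directly. If $m\le m_0$, then at most $60\lceil m_0/5\rceil=O(1)$ triples are examined, so at most $O(1)$ distinct polynomials ever enter $\mathsf{Vote}$, and $|\mathcal{F}|=O(1)$ trivially. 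Membership still holds in this range: because $h\ge 1$ can fail only for very small $m$, we use that a genuine triple exists (Lemma~\ref{lemma:many genuine triples} when $h\ge 1$) or, if $h\le 0$, every polynomial in $\mathsf{Vote}$ passes the threshold anyway. A cleaner alternative is to define the threshold as $\max\{h,1\}$; we would check that Lemma~\ref{lemma:many genuine triples} yields $h\ge 1$ for all $m\ge 3$ for which that lemma is applied, and otherwise handle the few small cases by hand.

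\emph{Running time.} Since $m\le 2n-3$, there are $b=O(n)$ blocks and at most $60b=O(n)$ calls to $\mathsf{Seed}$. Each call costs $O(1)$, counting the hash lookup from $a_i$ to $i$ used in inverse-ratio decoding. Each vote update is an expected $O(1)$ hash-table operation. The final scan touches at most $60b$ stored keys. Altogether the running time is $O(n)$, in expectation over the hashing, which matches the paper's ``average'' time guarantee.
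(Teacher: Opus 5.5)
Your proposal is correct and follows the paper's proof: Lemma~\ref{lemma:many genuine triples} gives at least $h$ genuine triples, each voting for $f$; the list size is at most $60b/h$; and there are $O(n)$ calls to $\mathsf{Seed}$, each of constant cost. Your small-$m$ hedging is unnecessary, since $h\ge \frac{m+3}{2}-\frac{2(m+4)}{5}=\frac{m-1}{10}>0$ forces the integer $h$ to be at least $1$ for every $m\ge 3$, so the $h\le 0$ branch never arises (and its reasoning would not have placed $f$ into $\mathsf{Vote}$ anyway).
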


\begin{proof}
    By Lemma~\ref{lemma:many genuine triples}, at least $h$ ordered triples considered by the algorithm are genuine triples. For each such genuine triple, the inverse-ratio seed algorithm $\mathsf{Seed}$ outputs the transmitted polynomial $f(x)$. Hence $f(x)$ receives at least $h$ votes, and therefore $f(x)\in\mathcal{F}$.

 Next, we bound the size of $\mathcal{F}$. The algorithm considers at most $60b$ ordered triples. If $m=3$, then $|\mathcal{F}|\le 6=O(1)$. Thus assume $m\ge 4$, so that $h>0$. Since only polynomials with at least $h$ votes are included in $\mathcal{F}$, we have
    $$
    |\mathcal{F}|
    \le
    \frac{60b}{h}
    =
    \frac{60\left\lceil m/5\right\rceil}
    {\left\lceil (m+3)/2\right\rceil-2\left\lceil m/5\right\rceil}
    =
    O(1).
    $$
    Finally, since at most $60b=O(n)$ ordered triples are considered and each call to $\mathsf{Seed}$ takes $O(1)$ time, the total running time is $O(n)$.
\end{proof}

Next, we need to identify the transmitted polynomial $f(x)$ from the candidate list $\mathcal{F}$. A direct approach is to test, for each $\hat{f}(x)\in\mathcal{F}$, whether the received word $\mathbf{y}$ can be obtained from a permutation of $\mathbf{c}_{\hat f}$ by at most $n-3$ insdel errors. Computing this by a standard edit-distance algorithm would take $O(n^2)$ time~\cite{Bringmann2024faster}.

When $\hat f(x)$ is non-constant, the entries of the codeword $\mathbf{c}_{\hat f}$ are pairwise distinct. Thus it suffices to test which distinct symbols of $\mathbf{y}$ appear in $\mathbf{c}_{\hat f}$. The following algorithm does this in average $O(n)$ time by building a hash table for the evaluation points and using average $O(1)$-time membership queries.

\begin{algorithm}[h]
\caption{Find the correct codeword in $\mathcal{F}$.}
\label{alg:Find correct in list}
\KwIn{Integer $m$ and $\tau:=\left\lceil\frac{m+3}{2}\right\rceil$; received word $\mathbf{y}=(y_1,\dots,y_m)$; evaluation points $\boldsymbol{\alpha}=(\alpha_1,\dots,\alpha_n)$; candidate list $\mathcal{F}$.}
\KwOut{The transmitted codeword $\mathbf{c}_f$ and the corresponding polynomial $f(x)$.}

Treat $\mathbf{y}$ as a multiset and let $\mathbf{y}'=\{y_1',\dots,y_w'\}$ be the set of distinct symbols appearing in $\mathbf{y}$. \\ 
Construct a hash table storing the evaluation points $\alpha_1,\dots,\alpha_n$. \\
\For{each $\hat{f}(x)=m_0+m_1x\in \mathcal{F}$}{
    \uIf{$\hat{f}(x)$ is constant}{
        \If{$m_0$ appears at least $\tau$ times in $\mathbf{y}$}{
            $f(x)\leftarrow\hat{f}(x)$, $\mathbf{c}_{f}\leftarrow f(\boldsymbol{\alpha})$. \\
            Output $\mathbf{c}_{f}$, $f(x)$ and halt.\\
        }
    }
    \Else{
        $p\leftarrow 0$. Here $p$ indicates the number of common symbols.\\ 
        \For{$j\leftarrow 1$ \KwTo $w$}{
            $\hat{\alpha}\leftarrow (y_j'-m_0)/m_1$. \\
            \If{$\hat{\alpha}$ appears in the hash table}{
                $p\leftarrow p+1$. \\
            }
        }
        \If{$p\ge \tau$}{
            $f(x)\leftarrow\hat{f}(x)$, $\mathbf{c}_{f}\leftarrow f(\boldsymbol{\alpha})$. \\
            Output $\mathbf{c}_{f}$, $f(x)$ and halt.\\
        }
    }  
}
\end{algorithm}

\begin{lemma}[Behavior of Algorithm~\ref{alg:Find correct in list}]\label{lemma:behavior alg find in list}
    For any received word $\mathbf{y}$, with input $\mathcal{F}$ given by Algorithm~\ref{alg:Local candidates majority voting}, Algorithm~\ref{alg:Find correct in list} outputs a unique codeword $\mathbf{c}_f$ and the corresponding polynomial $f(x)$ such that $\mathbf{y}$ can be obtained from a permutation of $\mathbf{c}_f$ by at most $n-3$ insdel errors. Moreover, Algorithm~\ref{alg:Find correct in list} runs in average $O(n)$ time.
\end{lemma}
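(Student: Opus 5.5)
The plan is to prove correctness and running time separately. For correctness, I will show two things: the true polynomial $f$ passes the acceptance test of Algorithm~\ref{alg:Find correct in list}, and any candidate $\hat f\in\mathcal{F}$ that passes the test yields a valid codeword, equal to $\mathbf{c}_f$. By Lemma~\ref{lemma:behavior alg majority voting}, $f\in\mathcal{F}$.

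\emph{The true polynomial passes.} As in the proof of Lemma~\ref{lemma:many genuine triples}, the number $g$ of genuine positions of $\mathbf{y}$ satisfies $g\ge\tau=\lceil (m+3)/2\rceil$.
\begin{itemize}
\item If $f$ is constant $m_0$, every genuine symbol equals $m_0$, so $m_0$ appears at least $\tau$ times.
\item If $f$ is non-constant, the entries of $\mathbf{c}_f$ are pairwise distinct. The genuine symbols therefore come from $g$ distinct codeword positions and are pairwise distinct. Each gives a distinct $y'_j$ with $(y'_j-m_0)/m_1$ in the hash table, so $p\ge g\ge\tau$.
\end{itemize}

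\emph{Any accepted candidate is valid.} Suppose $\hat f$ passes. In the non-constant case, at least $\tau$ distinct symbols of $\mathbf{y}$ occur in $\mathbf{c}_{\hat f}$. Since the entries of $\mathbf{c}_{\hat f}$ are distinct, each such symbol occurs exactly once in $\mathbf{c}_{\hat f}$. Choose one occurrence of each in $\mathbf{y}$; this gives $\tau$ positions. Permute $\mathbf{c}_{\hat f}$ so that these $\tau$ symbols appear in the order they occur in $\mathbf{y}$. Then delete the other $n-\tau$ symbols and insert the other $m-\tau$ symbols of $\mathbf{y}$. This uses
\[
n+m-2\tau\le n+m-(m+3)=n-3
\]
insdel operations. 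The constant case is identical, using $\tau$ copies of $m_0$.

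\emph{Uniqueness.} Both $\mathbf{c}_f$ and $\mathbf{c}_{\hat f}$ can then produce $\mathbf{y}$ via the $(n-3)$-permutation-insdel adversary. By the robustness in Construction~\ref{constr:2dim}, $\mathbf{c}_{\hat f}=\mathbf{c}_f$, so $\hat f=f$ because $n\ge 2$ evaluation points determine an affine polynomial. Hence whichever candidate the algorithm halts on is the correct one, and at least one candidate ($f$) triggers a halt.

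\emph{Running time.}
\begin{itemize}
\item Building $\mathbf{y}'$ and the symbol counts by hashing takes average $O(m)=O(n)$ time, since $m\le 2n-3$.
\item Building the hash table of $\alpha_1,\dots,\alpha_n$ takes average $O(n)$ time.
\item Each candidate costs $O(w)=O(m)$ average time, via $O(1)$ field operations and $O(1)$-average lookups per symbol; there are $|\mathcal{F}|=O(1)$ candidates by Lemma~\ref{lemma:behavior alg majority voting}.
\item Outputting $\mathbf{c}_f=f(\boldsymbol{\alpha})$ takes $O(n)$.
\end{itemize}
The total is average $O(n)$.

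The main obstacle is the validity step for an accepted candidate: we must make sure the membership count legitimately converts into an edit path within budget. This relies on counting only distinct symbols of $\mathbf{y}$ and on the distinctness of codeword entries, so that matches cannot be double-counted against a single codeword position.
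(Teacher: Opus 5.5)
Your proposal is correct and follows essentially the same route as the paper. The acceptance test is exactly the condition that the multiset overlap of $\mathbf{y}$ and $\mathbf{c}_{\hat f}$ is at least $\tau$, the true $f$ passes it, robustness from Construction~\ref{constr:2dim} forces any accepted candidate to equal $f$, and the time bound comes from $|\mathcal{F}|=O(1)$ with $O(1)$-average hash lookups per distinct received symbol. The only difference is cosmetic: the paper cites $\mathrm{ED}(\mathbf{c},\mathbf{c}')=|\mathbf{c}|+|\mathbf{c}'|-2\mathrm{LCS}(\mathbf{c},\mathbf{c}')$ to assert the ``if and only if,'' whereas you check the two directions by hand, counting genuine positions for $f$ and building an explicit edit path of length $n+m-2\tau\le n-3$ for an accepted $\hat f$ (in the constant case this needs $\tau\le n$, which follows from $m\le 2n-3$).
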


\begin{proof}
    For a candidate codeword $\mathbf{c}_{\hat f}$, allowing an arbitrary coordinate permutation means that we only need to consider the multiset overlap between $\mathbf{c}_{\hat f}$ and $\mathbf{y}$. By the identity $\mathrm{ED}(s,s')=|s|+|s'|-2\mathrm{LCS}(s,s')$, the word $\mathbf{y}$ can be obtained from some permutation of $\mathbf{c}_{\hat f}$ using at most $n-3$ insdel errors if and only if the two multisets share at least $\tau=\lceil(m+3)/2\rceil$ symbols.

    Algorithm~\ref{alg:Find correct in list} computes exactly this multiset overlap. In the constant case, $\mathbf{c}_{\hat f}$ consists only of the symbol $m_0$, so the algorithm checks whether $m_0$ appears at least $\tau$ times in $\mathbf{y}$. In the nonconstant case, the entries of $\mathbf{c}_{\hat f}$ are pairwise distinct. Hence it suffices to count how many distinct symbols of $\mathbf{y}$ appear in $\mathbf{c}_{\hat f}$. For a symbol $y_j'$, we have $y_j'\in\mathbf{c}_{\hat f}$ if and only if $(y_j'-m_0)/m_1$ is one of the evaluation points $\alpha_1,\dots,\alpha_n$, which is exactly what the hash-table test checks.

    Therefore, the algorithm outputs precisely those candidate codewords whose multiset overlap with $\mathbf{y}$ is at least $\tau$. Since Algorithm~\ref{alg:Local candidates majority voting} ensures that the transmitted polynomial $f(x)$ belongs to $\mathcal{F}$, at least one candidate passes this test. Uniqueness follows from Construction~\ref{constr:2dim}, because the code is robust against the $(n-3)$-permutation-insdel adversary.

    Finally, we analyze the running time. Since $|\mathcal{F}|=O(1)$, it suffices to show that each candidate is tested in average $O(n)$ time. The constant case is immediate. In the non-constant case, the algorithm scans the set $\mathbf{y}'$ of distinct received symbols, whose size is at most $m=O(n)$, and each hash-table membership query takes average $O(1)$ time. Thus each candidate is tested in average $O(n)$ time, which completes the proof.
\end{proof}

By Lemma~\ref{lemma:behavior alg majority voting} and Lemma~\ref{lemma:behavior alg find in list}, Algorithms~\ref{alg:Local candidates majority voting} and~\ref{alg:Find correct in list} together complete the proof of Theorem~\ref{thm:2dim decode informal}, which we restate for convenience.

\begin{thm}\label{thm:2dim decode perm insdel}
    Let $\mathcal{C}=\textup{RS}_{n,2}(\boldsymbol{\alpha})$ be the code over
    $\mathbb{F}_{q^3}$ specified in Construction~\ref{constr:2dim}. Then there exists a decoder that recovers the transmitted codeword from any received word obtained by applying an arbitrary coordinate permutation followed by at most $n-3$ insdel errors in average $O(n)$ time.
\end{thm}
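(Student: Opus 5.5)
The decoder runs Algorithm~\ref{alg:Local candidates majority voting} followed by Algorithm~\ref{alg:Find correct in list}, and the proof assembles Lemmas~\ref{lemma:behavior alg majority voting} and~\ref{lemma:behavior alg find in list}.

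\textbf{Preprocessing.} Let $\mathbf{y}$ be the received word, of length $m$. If $m\ge 2n-2$ or $m\le 2$, I will argue that such a word cannot arise from at most $n-3$ insdel errors, since the number of genuine symbols $g=(n+m-(d+i))/2$ must satisfy $3\le g\le \min\{n,m\}$. So I may assume $m\in[3,2n-3]$ and $m=O(n)$. The hash table mapping $a_i\mapsto i$ needed by $\mathsf{Seed}$, and the hash table of the evaluation points $\alpha_i$, are built once in average $O(n)$ time.

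\textbf{Step 1: candidate list.} I run Algorithm~\ref{alg:Local candidates majority voting}. By Lemma~\ref{lemma:behavior alg majority voting}, which rests on the counting of Lemma~\ref{lemma:many genuine triples}, this takes $O(n)$ time and returns a list $\mathcal{F}$ of size $O(1)$ with $f\in\mathcal{F}$.

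Two points need checking here.
\begin{itemize}
\item \emph{Genuine triples vote for $f$.} A genuine triple always produces $f$. If $f$ is nonconstant, $\mathsf{Seed}$ inverts the ratio $\Gamma$ and recovers the correct ordered indices $\kappa_1<\kappa_2<\kappa_3$, because the triple preserves the order in $\mathbf{c}_f$. If $f$ is constant, the triple is $(a,a,a)$, and the stated convention returns $f$.
\item \emph{The list has constant size.} The ratio $60\lceil m/5\rceil/h$ is bounded uniformly. For $m\ge 4$, the denominator $h$ is at least roughly $m/2-2m/5-O(1)=m/10-O(1)$. For small $m$, $h\ge 1$ and the numerator is itself $O(1)$.
\end{itemize}
This step is where I expect the most care to be needed.

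\textbf{Step 2: selecting the codeword.} I run Algorithm~\ref{alg:Find correct in list} on $\mathcal{F}$. By Lemma~\ref{lemma:behavior alg find in list}, a candidate $\hat f$ passes exactly when $\mathbf{y}$ is within $n-3$ insdel errors of some permutation of $\mathbf{c}_{\hat f}$. This follows from the identity $\mathrm{ED}=n+m-2\mathrm{LCS}$, together with the fact that after permutation the LCS equals the multiset overlap.

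Since $f\in\mathcal{F}$, some candidate passes. By the permutation-insdel robustness of Construction~\ref{constr:2dim}, established in~\cite[Theorem 4]{Con2025anonymous}, no other codeword can pass, so the output is exactly $\mathbf{c}_f$.

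\textbf{Running time.} Each of the $O(1)$ candidates is tested in average $O(n)$ time. Forming the set of distinct symbols $\mathbf{y}'$ via hashing also takes average $O(n)$ time. Outputting $f(\boldsymbol{\alpha})$ costs $O(n)$. The total is therefore average $O(n)$.
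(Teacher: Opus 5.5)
Your proposal is correct and takes the same route as the paper: its proof of this theorem simply runs Algorithm~\ref{alg:Local candidates majority voting} followed by Algorithm~\ref{alg:Find correct in list} and invokes Lemmas~\ref{lemma:behavior alg majority voting} and~\ref{lemma:behavior alg find in list}. Your additional checks (the admissible range $m\in[3,2n-3]$, the constant-$f$ case of $\mathsf{Seed}$, and the uniform bound $h\ge (m-1)/10>0$, which gives $|\mathcal{F}|=O(1)$ for every $m\ge 3$) only make explicit details that the paper leaves implicit.
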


\begin{rmk}
    The argument can be viewed as using a deletion decoder as a seed subroutine in the high-deletion regime. Its efficiency relies on the fact that the number of deletions is close to the block length $n$, so a few genuine symbols are enough to identify a candidate codeword. More generally, the running time depends on the cost of the underlying deletion-decoding subroutine.
\end{rmk}

\section{Concluding Remarks}

By allowing an additive $\epsilon n$ gap from the half-Singleton bound, we reduce the alphabet size required for Reed--Solomon codes against permutation-insdel errors from exponential~\cite{Con2025anonymous} to polynomial in $n$. Our construction is randomized and relies on union bounds over the $\mathbf{V}$- and $\mathbf{A}$-matrix regimes. We also prove a general histogram-packing alphabet-size lower bound for $q$-ary codes in the permutation-insdel setting, which in particular rules out linear-size alphabets for constant-rate Reed--Solomon codes. Finally, for the previously known alphabet-optimal two-dimensional Reed--Solomon construction, we give an average linear-time decoder in the permutation-insdel model. We conclude with several natural directions for future work.

\begin{itemize}
    \item[1.] \emph{Improving the alphabet size.}
    In the ordinary insdel setting, random Reed--Solomon codes can approach the half-Singleton bound over linear-size alphabets~\cite{Con2024random}. In the permutation-insdel setting, our current approach incurs substantial combinatorial losses: the $\mathbf{V}$-matrix regime contains a factorial term $\ell!$, while the $\mathbf{A}$-matrix regime contains a term of order $t^{\ell+1}$. These terms arise from counting possible index orders and set patterns, and appear to be inherent to the present $\mathbf{V}$-matrix/$\mathbf{A}$-matrix framework. A substantial further reduction of the alphabet size would likely require a different framework, rather than only sharper estimates within the same counting argument.

    \item[2.] \emph{Alphabet-size lower bounds.}
    Although we establish an alphabet-size lower bound, we believe that it is far from tight, since the proof uses only a histogram packing argument and does not exploit the algebraic structure of Reed--Solomon codes. It remains interesting to derive sharper lower bounds by using some algebraic properties.

    \item[3.] \emph{Explicit constructions.}
    The main existence result in this paper is probabilistic. A natural next step is to construct explicit evaluation sets over alphabets of polynomial size that achieve the same near-half-Singleton guarantee.

    \item[4.] \emph{Efficient decoding algorithms.}
    Efficient decoding for general Reed--Solomon codes under insdel errors remains challenging, especially for parameters near the half-Singleton bound. Developing efficient decoders for the permutation-insdel model is an interesting direction.
\end{itemize}

\bibliographystyle{IEEEtran}
\bibliography{RSperm}

\begin{thebibliography}{10}
\providecommand{\url}[1]{#1}
\csname url@samestyle\endcsname
\providecommand{\newblock}{\relax}
\providecommand{\bibinfo}[2]{#2}
\providecommand{\BIBentrySTDinterwordspacing}{\spaceskip=0pt\relax}
\providecommand{\BIBentryALTinterwordstretchfactor}{4}
\providecommand{\BIBentryALTinterwordspacing}{\spaceskip=\fontdimen2\font plus
\BIBentryALTinterwordstretchfactor\fontdimen3\font minus \fontdimen4\font\relax}
\providecommand{\BIBforeignlanguage}[2]{{%
\expandafter\ifx\csname l@#1\endcsname\relax
\typeout{** WARNING: IEEEtran.bst: No hyphenation pattern has been}%
\typeout{** loaded for the language `#1'. Using the pattern for}%
\typeout{** the default language instead.}%
\else
\language=\csname l@#1\endcsname
\fi
#2}}
\providecommand{\BIBdecl}{\relax}
\BIBdecl

\bibitem{Reed1960polynomial}
I.~S. Reed and G.~Solomon, ``Polynomial codes over certain finite fields,'' \emph{J. Soc. Indust. Appl. Math.}, vol.~8, no.~2, pp. 300--304, 1960.

\bibitem{xia2015tale}
M.~Xia, M.~Saxena, M.~Blaum, and D.~A. Pease, ``A tale of two erasure codes in {HDFS},'' in \emph{Proc. 13th USENIX Conf. File Storage Technol. (FAST)}, Santa Clara, CA, USA, 2015, pp. 213--226.

\bibitem{Rashmi2013solution}
K.~V. Rashmi, N.~B. Shah, D.~Gu, H.~Kuang, D.~Borthakur, and K.~Ramchandran, ``A solution to the network challenges of data recovery in erasure-coded distributed storage systems: A study on the facebook warehouse cluster,'' in \emph{Proc. 5th USENIX Workshop Hot Topics Storage File Syst. (HotStorage)}, San Jose, CA, USA, 2013, p.~8.

\bibitem{Levenshtein1965binary}
V.~I. Levenshtein, ``Binary codes capable of correcting deletions, insertions, and reversals,'' \emph{Soviet Physics Dokl.}, vol.~10, no.~8, pp. 707--710, 1966.

\bibitem{Guruswami2021twodeletion}
V.~Guruswami and J.~H{\aa}stad, ``Explicit two-deletion codes with redundancy matching the existential bound,'' \emph{IEEE Trans. Inform. Theory}, vol.~67, no.~10, pp. 6384--6394, 2021.

\bibitem{Sun2024twoedit}
Y.~Sun and G.~Ge, ``Binary codes for correcting two edits,'' \emph{IEEE Trans. Inform. Theory}, vol.~70, no.~10, pp. 6877--6898, 2024.

\bibitem{Brakensiek2018multipledeletion}
J.~Brakensiek, V.~Guruswami, and S.~Zbarsky, ``Efficient low-redundancy codes for correcting multiple deletions,'' \emph{IEEE Trans. Inform. Theory}, vol.~64, no.~5, pp. 3403--3410, 2018.

\bibitem{Haeupler2021synchronization}
B.~Haeupler and A.~Shahrasbi, ``Synchronization strings: codes for insertions and deletions approaching the singleton bound,'' \emph{J. ACM}, vol.~68, no.~5, pp. Art. 36, 39, 2021.

\bibitem{Cheng2022deterministic}
K.~Cheng, Z.~Jin, X.~Li, and K.~Wu, ``Deterministic document exchange protocols and almost optimal binary codes for edit errors,'' \emph{J. ACM}, vol.~69, no.~6, pp. Art. 44, 39, 2022.

\bibitem{Abdel2010correcting}
K.~A.~S. Abdel-Ghaffar, H.~C. Ferreira, and L.~Cheng, ``Correcting deletions using linear and cyclic codes,'' \emph{IEEE Trans. Inform. Theory}, vol.~56, no.~10, pp. 5223--5234, 2010.

\bibitem{Cheng2023efficient}
K.~Cheng, V.~Guruswami, B.~Haeupler, and X.~Li, ``Efficient linear and affine codes for correcting insertions/deletions,'' \emph{SIAM J. Discrete Math.}, vol.~37, no.~2, pp. 748--778, 2023.

\bibitem{con2022explicit}
R.~Con, A.~Shpilka, and I.~Tamo, ``Explicit and efficient constructions of linear codes against adversarial insertions and deletions,'' \emph{IEEE Trans. Inform. Theory}, vol.~68, no.~10, pp. 6516--6526, 2022.

\bibitem{xie2024new}
C.~Xie, H.~Chen, L.~Qu, and L.~Liu, ``New dimension-independent upper bounds on linear insdel codes,'' \emph{Adv. Math. Commun.}, vol.~18, no.~6, pp. 1575--1589, 2024.

\bibitem{Con2023reed}
R.~Con, A.~Shpilka, and I.~Tamo, ``Reed {S}olomon codes against adversarial insertions and deletions,'' \emph{IEEE Trans. Inform. Theory}, vol.~69, no.~5, pp. 2991--3000, 2023.

\bibitem{Con2024random}
R.~Con, Z.~Guo, R.~Li, and Z.~Zhang, ``Random {Reed--Solomon} codes achieve the {Half-Singleton} bound for insertions and deletions over linear-sized alphabets,'' in \emph{Proc. Int. Colloq. Automata Lang. Program. (ICALP)}, Aarhus, Denmark, 2025, pp. 60:1--60:21.

\bibitem{Tonien2007construction}
D.~Tonien and R.~Safavi-Naini, ``Construction of deletion correcting codes using generalized {R}eed-{S}olomon codes and their subcodes,'' \emph{Des. Codes Cryptogr.}, vol.~42, no.~2, pp. 227--237, 2007.

\bibitem{DoDuc2021explicit}
T.~Do~Duc, S.~Liu, I.~Tjuawinata, and C.~Xing, ``Explicit constructions of two-dimensional {R}eed-{S}olomon codes in high insertion and deletion noise regime,'' \emph{IEEE Trans. Inform. Theory}, vol.~67, no.~5, pp. 2808--2820, 2021.

\bibitem{Liu2021twodimension}
S.~Liu and I.~Tjuawinata, ``On 2-dimensional insertion-deletion {R}eed-{S}olomon codes with optimal asymptotic error-correcting capability,'' \emph{Finite Fields Appl.}, vol.~73, p. Art. No. 101841, 2021.

\bibitem{Liu2024optimal}
J.~Liu, ``Optimal {RS} codes and {GRS} codes against adversarial insertions and deletions and optimal constructions,'' \emph{IEEE Trans. Inform. Theory}, vol.~70, no.~9, pp. 6269--6279, 2024.

\bibitem{Beelen2026reed}
P.~Beelen, R.~Con, A.~Gruica, M.~Montanucci, and E.~Yaakobi, ``Reed-{S}olomon codes against insertions and deletions: full-length and rate-1/2 codes,'' \emph{IEEE Trans. Inform. Theory}, vol.~72, no.~2, pp. 1149--1160, 2026.

\bibitem{Con2024twodimension}
R.~Con, A.~Shpilka, and I.~Tamo, ``Optimal two-dimensional {R}eed-{S}olomon codes correcting insertions and deletions,'' \emph{IEEE Trans. Inform. Theory}, vol.~70, no.~7, pp. 5012--5016, 2024.

\bibitem{Wang2004deletion}
Y.~Wang, L.~McAven, and R.~Safavi-Naini, ``Deletion correcting using generalized {Reed--Solomon} codes,'' in \emph{Coding, Cryptography and Combinatorics}.\hskip 1em plus 0.5em minus 0.4em\relax Cham, Switzerland: Springer, 2004, pp. 345--358.

\bibitem{Con2025anonymous}
R.~Con, ``Anonymous {S}hamir's {S}ecret-{S}haring via {R}eed--{S}olomon {C}odes {A}gainst {P}ermutations, {I}nsertions, and {D}eletions,'' \emph{IEEE Trans. Inform. Theory}, vol.~71, no.~12, pp. 9534--9547, 2025.

\bibitem{Shamir1979how}
A.~Shamir, ``How to share a secret,'' \emph{Comm. ACM}, vol.~22, no.~11, pp. 612--613, 1979.

\bibitem{Harry2024abuse}
H.~Eldridge, G.~Beck, M.~Green, N.~Heninger, and A.~Jain, ``{Abuse-resistant} location tracking: Balancing privacy and safety in the offline finding ecosystem,'' in \emph{Proc. 33rd USENIX Security Symp. (USENIX Security)}, Philadelphia, PA, USA, 2024, pp. 5431--5448.

\bibitem{Bishop2025fully}
A.~Bishop, M.~Green, Y.~Ishai, A.~Jain, and P.~Lou, ``Fully anonymous secret sharing,'' in \emph{Proc. Annu. Int. Cryptol. Conf. (CRYPTO)}, Santa Barbara, CA, USA, 2025, pp. 356--389.

\bibitem{Singhvi2026twodimension}
S.~Singhvi, ``Optimally decoding 2-{D} {R}eed-{S}olomon codes against deletion errors,'' \emph{IEEE Trans. Inform. Theory}, vol.~72, no.~3, pp. 1646--1653, 2026.

\bibitem{Banerjee2025decoding}
A.~Banerjee, R.~Con, A.~Wachter-Zeh, and E.~Yaakobi, ``Decoding insertions/deletions via list recovery,'' in \emph{Proc. IEEE Int. Symp. Inf. Theory (ISIT)}, Ann Arbor, MI, USA, 2025, pp. 1--6.

\bibitem{Demillo1977probabilistic}
R.~A. DeMillo and R.~J. Lipton, ``{A} probabilistic remark on algebraic program testing,'' \emph{Inf. Process. Lett.}, vol.~7, no.~4, pp. 193--195, 1978.

\bibitem{Zippel1979probabilistic}
R.~Zippel, ``Probabilistic algorithms for sparse polynomials,'' in \emph{Proc. Int. Symp. Symbolic Algebraic Comput. (EUROSAM)}, Marseille, France, 1979, pp. 216--226.

\bibitem{Schwartz1980fast}
J.~T. Schwartz, ``Fast probabilistic algorithms for verification of polynomial identities,'' \emph{J. Assoc. Comput. Mach.}, vol.~27, no.~4, pp. 701--717, 1980.

\bibitem{Bringmann2024faster}
K.~Bringmann, A.~Cassis, N.~Fischer, and T.~Kociumaka, ``Faster sublinear-time edit distance,'' in \emph{Proc. Annu. ACM-SIAM Symp. Discrete Algorithms (SODA)}, Alexandria, VA, USA, 2024, pp. 3274--3301.

\end{thebibliography}
\end{document}